\numberwithin{equation}{section}
\definecolor{cyan(process)}{rgb}{0.0, 0.72, 0.92}
\definecolor{columbiablue}{rgb}{0.61, 0.87, 1.0}
\definecolor{sandstone}{HTML}{786D5F}
\definecolor{beaublue}{rgb}{0.74, 0.83, 0.9}
\definecolor{cherryblossompink}{rgb}{1.0, 0.72, 0.77}
\newcommand*\bigcdot{\mathpalette\bigcdot@{.7}}
\newcommand*\bigcdot@[2]{\mathbin{\vcenter{\hbox{\scalebox{#2}{$\m@th#1\bullet$}}}}}
\definecolor{light-gray}{gray}{0.95}
\renewcommand{\epsilon}{\varepsilon}
\newcommand\LL{{\mathbb L}}
\newcommand\EE{{\mathbb E}}
\newcommand\PP{{\mathbb P}}
\newcommand\RR{{\mathbb R}}
\newcommand\TT{{\mathbb T}}
\newcommand\ZZ{{\mathbb Z}}
\newcommand\eps{\epsilon}
\newcommand{\mc}[1]{{\mathcal #1}}
\newcommand{\mf}[1]{{\mathfrak #1}}
\newcommand{\mb}[1]{{\mathbf #1}}
\newcommand{\bb}[1]{{\mathbb #1}}
\newtheorem{proposition}{Proposition}
\newtheorem{theorem}{Theorem}
\newtheorem{lemma}{Lemma}
\newtheorem{remark}{Remark}
\newtheorem{corollary}{Corollary}
\newtheorem{definition}{Definition}
\renewcommand{\leq}{\leqslant}
\renewcommand{\geq}{\geqslant}
\renewcommand{\le}{\leqslant}
\renewcommand{\ge}{\geqslant}
\title[]{A microscopic derivation of coupled SPDE's with a KPZ flavor}
\author{R. Ahmed}
\address{Ragaa Ahmed, Center for Mathematical Analysis,  Geometry and Dynamical Systems,
Instituto Superior T\'ecnico, Universidade de Lisboa,
Av. Rovisco Pais, 1049-001 Lisboa, Portugal}
\email{\tt{ragaaahmedabbas@gmail.com}}
\author{C.Bernardin}
\address{C\'edric Bernardin, Universit\'e C\^ote d'Azur, CNRS, LJAD\\
Parc Valrose\\
06108 NICE Cedex 02, France}
\email{{\tt cbernard@unice.fr}}
\author{P.  Gon\c calves}
\address{Patr\'icia Gon\c calves, Center for Mathematical Analysis,  Geometry and Dynamical Systems,
Instituto Superior T\'ecnico, Universidade de Lisboa,
Av. Rovisco Pais, 1049-001 Lisboa, Portugal and   Institut  Henri
Poincar\'e, UMS 839 (CNRS/UPMC), 11 rue Pierre et Marie Curie, 75231 Paris Cedex 05, France.}
\email{{\tt patricia.goncalves@math.tecnico.ulisboa.pt}}
\author{M. Simon}
\address{Marielle Simon, Inria, Univ. Lille, CNRS, UMR 8524 - Laboratoire Paul Painlev\'e, F-59000 Lille.}
\email{{\tt marielle.simon@inria.fr}}
\date{\today}
\keywords{}
\begin{document}
\maketitle

\begin{abstract}
We consider an interacting particles system composed of a Hamiltonian part and perturbed by a conservative stochastic noise so that the full system conserves two quantities: energy and volume. The Hamiltonian part is regulated by a scaling parameter vanishing in the limit. We study the form of the fluctuations of these quantities at equilibrium and derive coupled stochastic partial differential equations with a KPZ flavor.  

\end{abstract}

\tableofcontents

\section{Introduction}

During the last decade a huge number of research programs have been devoted to the study of the Kardar-Parisi-Zhang (KPZ) equation and its derivation from microscopic models. The KPZ equation has been introduced in \cite{KPZ} as a phenomenological equation which, in dimension one, takes the form 
\begin{equation*}
\partial_t h (t,u) = A\; \partial_{uu}^2 h(t,u) + B \; (\partial_u h(t,u))^2 + \sqrt{C} \; \dot{\mathcal{W}}(t,u), \qquad t >0, u \in \mathbb{R},
\end{equation*}
where $A>0,B \in \mathbb{R}, C>0$ are thermodynamic constants and $ \dot{\mathcal{W}}(t,u)$ is a standard space-time Gaussian white noise. This equation describes the evolution of a randomly growing interface, whose height is  $h(t,u)$, $t>0$ being the time and $u\in \RR$ the spatial coordinate. Almost equivalently, taking the space derivative of the KPZ equation, we get the (conservative) stochastic Burgers equation (SBE) for $\mathcal{Y} =\partial_u h$:
\begin{equation*}
\partial_t \mathcal{Y} (t,u) = A \; \partial_{uu}^2 \mathcal{Y}(t,u) + B \; \partial_u (\mathcal{Y}^2(t,u)) + \sqrt{C} \; \partial_u \dot{\mathcal{W}}(t,u).
\end{equation*}
The SBE equation is expected to be a universal object describing the scaling limit of a large class of weakly asymmetric interacting particle systems with a single conservation law \cite{Spo}. In our opinion, the major recent mathematical contributions in this field have been:
\begin{itemize}
\item the proof of the well posedness of the KPZ (or SBE) equation, via the  theory of regularity structures developed by Hairer \cite{Hai13,Hai14}, or alternatively through the paracontrolled distributions theory \cite{GPI};
\item the obtention of its asymptotic properties via the study of some  ``integrable stochastic systems", in particular the derivation of  scaling limits for one-dimensional exclusion-type processes, starting from the seminal paper \cite{BerG}, and going on with  \cite{Cor,QS}, and many others;
\item the development of a robust method to derive the SBE (or KPZ) equation as a scaling limit for a large class of interacting particle systems, thanks to the new notion of energy solutions investigated in \cite{gj2014,GubPer,GJS,GJSim}.    
\end{itemize}

Since a few years there has been a growing interest for one-dimensional $n$-component coupled SBE, written as:
\begin{equation}
\label{eq:SBE-couple}
\partial_t \vec{\mathcal{Y}}   = \partial_u ( \mathbf{A} \partial_u \vec{\mathcal{Y}}) + \partial_u ( \ll \vec{\mathcal{Y}} , \vec{\mathbf{B}} \vec{\mathcal{Y}}\gg)+ \sqrt{\mathbf{C}}\;\partial_u {\vec  \xi},
\end{equation}
where $\vec{\mathcal{Y}} (t,u) \in {\mathbb R}^n,$ and $\mathbf{A}, \mathbf{C}$  are square matrices of size $n$, $\vec \xi$ is a $n$-component Gaussian white noise, and $\vec{\mathbf{B}}=(\vec{\mathbf{B}}_{i})$ is a tensor\footnote{Therefore, in \eqref{eq:SBE-couple}, the quantity $\ll \vec{\mathcal{Y}} , \vec{\mathbf{B}} \vec{\mathcal{Y}}\gg$  is a vector whose $i$-th component reads
\[ 
 (\ll \vec{\mathcal{Y}} , \vec{\mathbf{B}} \vec{\mathcal{Y}}\gg)_i = \langle  \vec{\mathcal{Y}}, \vec{\mathbf{B}}_{i} \vec{\mathcal{Y}} \rangle,
\] where 
$\langle \cdot \rangle$ denotes the Euclidean inner product in $\RR^n$.} (\textit{i.e.}~$\vec{\mathbf{B}}_{i}$ is $n\times n$--matrix for any $1\leqslant i\leqslant n$).  Such equations appeared in the physics literature very early after the seminal paper \cite{KPZ} of Kardar, Parisi and Zhang and more recently in the context of the nonlinear fluctuating hydrodynamics theory developed by Spohn and coauthors \cite{S1,SS}. The mathematical study of global-in-time existence and invariant measures for the $n$-component coupled SBE has been investigated in \cite{GPnew,FH17,F18}.  It is expected that the coupled SBE equations cover the dynamics of weakly asymmetric interacting particle systems with several conserved quantities in a suitable mesoscopic scale. In the nonlinear fluctuating hydrodynamics theory one \textit{postulates} that these equations describe correctly the macroscopic properties of the underlying microscopic system in ``some mesoscopic scale" and their study permits to obtain some information of the large time behavior of the \textit{strongly} asymmetric system. Let us notice that it is quite unclear (at least for us), even without asking for a proof, what are the exact scaling limits to perform in the microscopic models in order to obtain these equations. Indeed, such equations should be obtained by tuning in a very specific way the intensity of the ``asymmetry" and the time scale with respect to the scaling parameter. Moreover, since the system has more than one conserved quantities, different time scales have to be considered. Let us also remark that the mathematical treatment of these equations and their obtention as scaling limits are challenging problems whose  resolutions are in their very infancy.

The aim of the paper is to provide a model with two conserved quantities for which it can be proved rigorously that in suitable scaling limits, the system is described by a set of degenerate coupled SBE  equations. By degenerate we mean that some of the matrices entries appearing in \eqref{eq:SBE-couple} vanish. When the asymmetry is very weak, in a diffusive time scale, the system is reduced to an uncoupled system consisting of two autonomous Ornstein-Uhlenbeck (OU) equation. If the intensity of the asymmetry is increased, there is some  critical value such that, in a diffusive time scale, the system becomes composed of coupled equations: an autonomous OU equation and a second OU equation with a drift term driven by the first one. Increasing again the intensity of the asymmetry, the system becomes composed of an OU equation (obtained in a diffusive time scaling) and a transport equation whose transport term is driven by the first one (obtained in a sudiffusive time scaling). This pictures remains valid up to a second critical value  of the asymmetry intensity, for which the first OU equation is replaced by a SBE equation while the second is still a transport equation driven by some OU process. The results obtained are in agreement with mode coupling theory \cite{Gunter}.        

This paper is one of the first contributions where coupled equations with a KPZ flavor are derived from a microscopic system. In \cite{BFS}, the authors derive some multicomponent coupled SBE equations as a scaling limit of a multi-species zero-range process. However a big difference of our model with respect to the latter is that in \cite{BFS} the velocities of the normal modes are equal while it is never the case in our model. A second interesting feature of our result is that we are able to emphasis the exact time and asymmetry parameters scaling to consider in order to get the expected equations, and we  extend the class of SPDEs which arise in this context.

\subsection*{Outline of the paper} We start in Section \ref{sec:model} with the definition of the microscopic dynamics under investigation and the introduction of the relevant macroscopic quantities. Section \ref{sec:spdes} is devoted to defining and giving a rigorous meaning to the solution of three stochastic partial differential equations which will emerge at the macroscopic level. In Section \ref{sec:stat_res} we will state our main convergence results. Finally, Sections \ref{sec:sketch}, \ref{sec:limit} and \ref{sec:tight} contain the different steps of the proof: we begin with a sketch given in Section \ref{sec:sketch}, and we are able conclude the proof up to technical results, namely the convergence of martingales associated to the microscopic dynamics (proved in Section \ref{sec:limit}), and the tightness property of the fluctuation fields (proved in Section \ref{sec:tight}).

\subsection*{Notations}

Given two real-valued functions $f$ and $g$ depending on the variable $u \in \bb R^d$ we will write $f(u) \approx g(u)$ if there exists a constant $C>0$ which does not depend on $u$ such that for any $u$, $C^{-1} f(u) \le g(u) \le C f(u)$ and $ f(u) \lesssim g(u)$ if for any $u$, $f(u) \le C g(u)$. We write $f =\mc O (g)$ (resp. $f=o(g)$) in the neighborhood of $u_0$ if $| f| \lesssim | g|$ in the neighborhood of $u_0$ (resp. $\lim_{u \to u_0} f(u)/g(u) =0$). Sometimes it will be convenient to precise the dependence of the constant $C$ on some extra parameters and this will be done by the standard notation $C(\lambda)$ if $\lambda$ is the extra parameter. We often denote the one-dimensional  gradient and Laplacian on $\mathbb R$ by $\nabla=\partial_u$ and $\Delta=\partial_{uu}^2$. The transpose matrix of the matrix $\mathbf{A}$ is denoted by $\mathbf{A}^\dagger$.   The one-dimensional continuous torus is denoted by $\TT=[0,1)$. 
For any integer $d\ge 1$, we denote the space of smooth $\RR^d$-valued functions $\vec{ f}:=(f^1, \ldots, f^d)^\dagger$ on $\TT$ by  $\mc D (\TT, \RR^d)$. 
 In the special case $d=1$, we simplify the notation by omitting the arrow on $f$, and we  also denote $\mc D (\TT, \RR)$ (resp. $\bb L^2(\bb T, \RR)$) by $\mc D (\TT)$ (resp. $\bb L^2(\bb T)$). Then, we identify $\mc D (\bb T, \RR^d)$ with $(\mc D (\bb T))^d$ and $\LL^2(\TT,\RR^d)$ with $(\LL^2(\TT))^d$. Finally, for a function $\vec f \in \LL^2(\bb T, \RR^d) $, we  denote by $\| \vec f\|_0^2$ the usual $\bb L^2(\bb T, \RR^d)$-norm of $\vec f$: \[\|\vec f\|_0^2:=\sum_{i=1}^d \int_\bb T (f^i (u))^2\, du,\] and by $\langle \cdot , \cdot \rangle_0$ its associated inner product: 
 \[ 
 \langle \vec f , \vec g \; \rangle_0 = \sum_{i=1}^d \int_{\TT} f^i(u) g^i(u) du.
 \]

\section{The model}
\label{sec:model}

Let $b>0$ be a fixed parameter and define the one-dimensional exponential potential 
$$V_{b}: u \in \RR \to e^{-bu} -1+bu \in [0, +\infty).$$
Recall that the one-dimensional continuous torus is denoted by $\TT=[0,1)$. For any $n\ge 1$, we define its discrete counterpart $\TT_n =\{0,1, \ldots, n-1\}$ of size $n$ and we denote $\RR^{\TT_n}$ by $\Omega_n$. We consider the Markov process $\eta(t)=\{\eta_x(t):x\in {\mathbb T}_n\}$ with state space $\Omega_n$ defined by its infinitesimal generator $\mc L$. The latter is given by 
$$\mc L=\alpha_n \mc A+\gamma \mc S,$$ 
where $\gamma>0$ and $\alpha_n=\alpha n^{-\kappa}$, with $\alpha\in\mathbb R$, $\kappa>0$.  The actions of $\mc A$ and $\mc S$ on differentiable functions $f:\Omega_n \rightarrow{\mathbb{R}}$ are given by
\begin{equation*}
(\mc Af)(\eta)=\sum_{x \in \TT_n} \big(V_b^{\prime} (\eta_{x+1}) -V_b^{\prime} (\eta_{x-1})  \big) (\partial_{\eta_x} f)(\eta)
\end{equation*}
and
\begin{equation*}
 (\mc Sf)(\eta)=\sum_{x \in \TT_n} \big( f(\eta^{x,x+1}) -f(\eta) \big).
\end{equation*}
Here the configuration $\eta^{x,x+1}$ is the configuration obtained from $\eta$ by exchanging the occupation variables $\eta_x$ and $\eta_{x+1}$, i.e. for any $z \in \TT_n$, $(\eta^{x,x+1})_z =\eta_z$ for $z\ne x, x+1$, $(\eta^{x,x+1})_x = \eta_{x+1}$ and $(\eta^{x,x+1})_{x+1} = \eta_{x}$. We refer the interested reader to \cite{B0,BS,BG,SS} for the motivations behind the study of this system and more information about its construction. The system is thus a Hamiltonian system (with generator $\mc A$) perturbed by a stochastic noise (generated by $\mathcal S$). 

\medskip

Let us comment about the role of the parameters which appear in the definition of the microscopic system. In the following, $n$ will tend to infinity so that $1/n$, which represents the ratio between the macroscopic scale and the microscopic scale, will play the role of a scaling parameter going to $0$. The parameter $\alpha_n$ fixes the intensity of the asymmetry in the system in terms of the scaling parameter: larger $\kappa$ is, smaller the asymmetry is. Strong asymmetric systems would correspond to $\kappa=0$. The reason why $\mathcal A$ represents the asymmetric part of the generator will become clear in the sequel. The parameter $\gamma$ fixes the intensity of the stochastic noise and is always of order $1$. We will consider the Markov process in different time scales, namely by accelerating the microscopic time by a constant $\theta (n)=n^a$, where $a>0$ is a constant. 

\medskip

The system conserves two quantities: the \emph{energy} and the \emph{volume}, given respectively by $$\sum_{x \in \TT_n} V_{b} (\eta_x),\quad \quad \sum_{x \in \TT_n} \eta_x.$$
The previous conservation laws are expressed by the well defined continuity equations 
\begin{equation}
{\mc L} (V_{b} (\eta_x))= {\bar j}^e_{x-1,x}(\eta) - {\bar j}^e_{x,x+1}(\eta), \qquad \mc L (\eta_x)={\bar j}^{v}_{x-1,x}(\eta) - {\bar j}^{v}_{x,x+1}(\eta),
\end{equation}
where the microscopic currents are given by
\begin{align}
{\bar j}^e_{x,x+1}(\eta) & =-\alpha_n b^2 e^{-b(\eta_x + \eta_{x+1})}+\alpha_n b^2(e^{-b \eta_x} +e^{-b \eta_{x+1}})- \gamma \nabla (V_{b} (\eta_{x})) \\
{\bar j}^{v}_{x,x+1}(\eta) & =\alpha_n b (e^{-b \eta_x} + e^{-b\eta_{x+1}})-\gamma \nabla \eta_x .
\end{align}
We define a family of product probability measures $\mu_{\bar \beta, \bar \lambda}$ {{on $\Omega_n$}} by
\begin{equation}\label{eq:inv_measure}
\mu_{{\bar \beta}, {\bar \lambda}} (d\eta) = \prod_{x\in \bb T_n} {\bar Z}^{-1} ({\bar \beta},{\bar \lambda}) \exp \{ -{\bar \beta} e^{-b \eta_x} - {\bar\lambda} \eta_x\} d\eta_x, \quad {\bar \beta}, {\bar\lambda}>0,
\end{equation}
where $\bar Z(\bar \beta,\bar \lambda)$ is the normalization constant. 
It is a simple exercise to show that $\mathcal A$ is skew symmetric and $\mathcal S$ is symmetric in $\mathbb L^2 (\mu_{\bar \beta, \bar \lambda})$ so that $\mu_{{\bar \beta}, {\bar \lambda}}$ is an invariant measure for the dynamics generated by $\mathcal L$. In fact $\mathcal A$ is a Liouville operator corresponding to a Hamiltonian dynamics with Gibbs measure $\mu_{{\bar \beta}, {\bar \lambda}}$ and $\mathcal S$ generates the dynamics of a reversible Markov process with respect to $\mu_{{\bar \beta}, {\bar \lambda}}$. Therefore, $\mathcal A$ represents the asymmetric part of the system.

\medskip

Let $\langle \cdot\rangle$ denote the average with respect to $\mu_{\bar \beta, \bar \lambda}$. Let us introduce the quantity \[\xi_x=e^{-b\eta_x}.\] Note that if $\eta$ is distributed according to \eqref{eq:inv_measure} then $\xi$ is distributed according to the probability measure  $\nu_{\beta,\lambda}$
given by 
\begin{equation}\label{eq:inv_mea_xi}
\nu_{\beta,\lambda}(d\xi)=\prod_{x\in\bb Z}Z^{-1}(\beta,\lambda)\textbf{1}_{\{\xi_x>0\}}e^{-\beta \xi_x+\lambda\log(\xi_x)}d\xi_x
\end{equation}
with  $\beta=\bar{\beta}$ and $\lambda=-1+\bar{\lambda}/b$. Above $Z(\beta,\lambda)$ is a normalizing constant.

\medskip
We are interested in the evolution of this  process in some accelerated time scale $t\theta(n)$, thus we denote by $\{\eta(t\theta(n))\,; \, t\in [0,T]\}$ the Markov process on $\Omega_n$ associated to the accelerated generator $\theta(n) \mathcal{L}$. The path space of c\`adl\`ag trajectories with values in $\Omega_n$ is denoted by $\mathbb{D}([0,T],\Omega_n)$. We denote by $\bb P$ the probability measure on  $\mathbb{D}([0,T],\Omega_n)$ induced by an equilibrium initial condition $\mu_{\bar\beta,\bar\lambda}$ and the Markov process $\{\eta(t\theta(n))\; ;\; t \in [0,T]\}$. The corresponding expectation is denoted by $\bb E$. 

\medskip

We define ${e}:={e} ({\bar \beta}, {\bar \lambda})$ and ${v}:= {v} ({\bar \beta}, {\bar \lambda})$ as the averages of the conserved quantities $V_b (\eta_x)$, $\eta_x$ with respect to $\mu_{\bar \beta, \bar \lambda}$, respectively, namely \[{e}=\langle V_b(\eta_x) \rangle, \qquad {v}=\langle \eta_x \rangle.\] We also define $\rho$ as the average of $\xi_x$ with respect to $\mu_{\bar\beta,\bar\lambda}$, \textit{i.e.}~$\rho=\langle \xi_x\rangle$.  Finally, we denote the variance of $\eta_x$ (resp.~$\xi_x$) with respect to $\mu_{\bar \beta, \bar \lambda}$ by $\sigma^2$ (resp.~$\tau^2$) and the covariance between $\eta_x$ and $\xi_x$ by $\delta$.
 To summarize, we have   
\begin{align}
&\langle \eta_x \rangle =v \; \quad \qquad \qquad \qquad \qquad \qquad \;\textrm{and}\quad \langle ( \eta_x-v)^2\rangle=:\sigma^2\\
&\langle \xi_x\rangle =1+e-b v=\frac{\lambda+1}{\beta}=:\rho\quad\quad \textrm{and}\quad  \langle ( \xi_x-\rho)^2\rangle=\frac{\lambda+1}{\beta^2}=:\tau^2\\
&\langle (\eta_x -v) (\xi_x -\rho) \rangle =: \delta.
\end{align}
A simple computation shows that
 \begin{align}
 \langle {\bar j}_{x,x+1}^e \rangle & =-\alpha_n b^2({e} -b {v} )^2 + \alpha_n b^2 \\ \langle {\bar j}^{v}_{x,x+1}  \rangle & =2\alpha_n b (1+{e} -b {v}).
\end{align}
Hence, in the hyperbolic scaling $\theta(n)=n$, in the strong asymmetry regime, namely $\kappa=0$, the hydrodynamical equations  are given by (see \cite{BS} for a proof):
\begin{equation}
\begin{cases}
\partial_t {\mb e} - \alpha b^2 \, \partial_u ( ({\mb e} - b{\mb v})^2) =0\\
\partial_{t} {\mb v} + 2 \alpha b \, \partial_u ({\mb e -b {\mb v}}) =0.
\end{cases}
\end{equation}

\section{Stochastic partial differential equations}
\label{sec:spdes}

In this section we give the rigorous meaning of the various SPDEs which will appear in the scaling limits of our system. Let us start with a few notations.

\medskip

 The topological dual of a topological space $E$ is denoted by $E^\prime$. Hence, the space of $\RR$-valued distributions on $\TT$ is denoted by $\mc D' (\TT)$. Similarly, the space of $\RR^d$-valued distributions on $\TT$ is denoted by $(\mathcal{D}(\TT,\RR^d))'$.   If $f\in\mc D(\bb T)$ and $\mc Z=(\mc Z^1,\cdots, \mc Z^d)^\dagger \in (\mc D'(\bb T))^d$, then we denote by $\mc Z(f)$ the vector $(\mc Z^1(f),\cdots, \mc Z^d(f))\in \bb R^d$.
 
\begin{definition} For any $\mc Z=(\mathcal{Z}^1,\dots, \mc Z^d)^\dagger \in (\mc D'(\TT))^d$, we define the element ``$\mc Z \bigcdot$" belonging to $(\mathcal{D}(\TT,\RR^d))'$ by 
 \begin{equation}
 \label{eq:Zdot}
  \mc Z \bigcdot\vec  f = \sum_{j=1}^d \mc Z^j (f^j),  \quad \text{for any  } \vec f = (f^1,\dots,f^d)^\dagger \in \mathcal{D}(\TT,\RR^d).
 \end{equation}
 \end{definition}
 
Since we are going to consider time processes, let us now define the space $\mathbb{D}([0,T], (\mathcal{D}'(\bb T))^d)$ (resp.  $\mathcal{C}([0,T], (\mathcal{D}'(\bb T))^d $) as the space of $(\mathcal{D}'(\bb T)^d)$-valued functions with c\`adl\`ag (resp. continuous) trajectories.   We equip these spaces with the uniform weak topology: a sequence $\{\mc Z_\cdot^n\}_{n\ge 1}$ converges to a path $\mc Z_\cdot$ if for all $f \in \mc D (\bb T)$, we have
\begin{equation*}
\lim_{n \to \infty} \; \sup_{0\le t \le T} \; \Big| \mc Z^n_t ( f) - \mc Z_t ( f) \Big|=0,
\end{equation*}
where $|.|$ denotes the usual euclidean norm  in  $\bb R^d$.
We define similarly  the space $\mathbb {D}([0,T],(\mathcal{D}(\bb T, \RR^d))^\prime)$ (resp.~$\mc C ( [0,T], (\mathcal D (\bb T, \RR^d))^\prime)$) as the space of $(\mathcal{D}(\bb T, \RR^d))^\prime$-valued functions with c\`adl\`ag (resp.~continuous) trajectories and we endow them with
 the uniform weak topology.  For any $\mc Z \in \bb D ( [0,T], (\mathcal{D}' (\bb T))^d )$ we define the element ``$\mc Z_\cdot \bigcdot$"  belonging to the space $\mathbb{D}([0,T], (\mathcal{D}(\bb T, \RR^d))^\prime )$ by the same definition as in \eqref{eq:Zdot}. 
 
 Note that  a sequence $\{\mc Z^n\}_n$ taking values in $\bb D ( [0,T], (\mathcal{D} '(\bb T))^d )$ converges to some $\mc Z$ if and only if the sequence $\{\mc Z^n  \bigcdot\, \}_n$ of $\bb D ( [0,T], (\mathcal{D} (\bb T,\bb R^d))' )$ converges to the element $\mc Z \bigcdot $.

{We recall the following standard definition: 
\begin{definition} Let $\{\mc Z_t \in (\mathcal{D}' (\bb T))^d \; ; \; t \in [0,T]\}$ be a process.  We say that $\mc Z$ is a \emph{centered Gaussian process} if any linear combination of the components of  $\{ \mc Z_{t_i} (f_i)\, ;\, i=1, \ldots,k\}$, with $f_i \in \mc D(\bb T)$, is a Gaussian random variable.
\end{definition}}

\begin{definition}
We say that the stochastic process 
$$\{\mc B_t= (\mc B_t^1, \ldots, \mc B_t^d)^\dagger \; ; \; t\in [0,T]\}$$
whose paths are in $\mathcal{C} ( [0,T], (\mathcal{D}' (\bb T))^d )$ is a \emph{standard  $(\mathcal{D}' (\bb T))^d$-valued Brownian motion} if it is a centered Gaussian process such that 
\begin{equation*}
\forall (s,t) \in [0,T]^2,  \quad \mathbb{E}\big[ \mc B_t (f) \mc B_s^\dagger (g)\big] = (s\wedge t)\;  {\rm {I_d}} \; \int_{\mathbb{T}} f(u) g(u) du
\end{equation*}
where ${\rm{I_d}}$ is the identity matrix of size $d$ and any $f,g\in\mc D(\bb T).$
\end{definition}

\subsection{The Ornstein-Uhlenbeck equation}
Let $\mc B$ be a standard $(\mathcal{D}' (\bb T))^d$-valued Brownian motion. The first SPDE which we would like to make sense of is the $d$-dimensional Ornstein-Uhlenbeck equation, formally written as:
\begin{equation}
\label{eq:ou_ini}
d\mc Z_t=\mf A\, \Delta \mc Z_t dt+ \sqrt {2\mf C}\, \nabla d\mc B_t,
\end{equation}
where $\mf A, \mf C$ are symmetric non-negative $d$-squared matrices. 

\begin{definition}
\label{def:uni_OU}
We say that the stochastic process $\{\mc Z_t \; ; \; t \in [0,T]\}$ taking values in the space $\mc C( [0,T], (\mathcal{D} '(\bb T))^d)$ is a \emph{stationary solution} of \eqref{eq:ou_ini} if it satisfies:
\begin{enumerate}[i)]
\item  For every $\{\vec f_t :  \bb T\rightarrow \bb R^d \; ; \; t \in [0,T]\}$ which is $C^1$ in time and smooth in space,  the quantity given by
\begin{equation}\label{lf1}
\mc M_t   \bigcdot \vec f_\cdot= \mc Z_t \bigcdot \vec f_t -\mc Z_0 \bigcdot \vec f_0 - \int_0^t \mc Z_s \bigcdot ({\mf A}^\dagger \Delta \vec f_s) \, ds\, ,
\end{equation}
is a martingale with respect to the natural filtration associated to $\mc Z_\cdot$, namely 
\begin{equation}
\mc F_t:=\sigma \big( \mc Z_s \bigcdot \vec f \; ; \; s \leqslant t, \vec f \in \mc D(\TT, \RR^d)\big), \label{eq:filtration}
\end{equation}
with quadratic variation equal to \[2\int_0^t \,\big\|\sqrt \mf C\,  \nabla \vec f_s \big\|_{0}^2\,ds.\]
\item $\mc Z_0$ is a mean zero Gaussian field such that for any $\vec f, \vec g\in {\mathcal{D}(\mathbb{T}, \RR^2)}$, by 
\begin{equation}
\label{eq:covar1}
\mathbb{E}\big[ (\mc Z_0 \bigcdot \vec f) \;  (\mc Z_0 \bigcdot \vec g) \big] = \big\langle \vec f \, , \, {\mf D} \vec g \big\rangle_0
\end{equation}
where $\mf D$ is the (symmetric matrix) solution of
\begin{equation*}
\mf A \mf D+\mf D\mf A^\dagger=2\mf C.
\end{equation*}
\end{enumerate}
\end{definition}

\begin{remark}
A simple computation shows that, if $\mc Z_\cdot$ satisfies Definition \ref{def:uni_OU}, then  $\mc Z\bigcdot$ is a centered Gaussian process with covariance given by
\begin{equation}
\label{eq:covar2}
\mathbb{E}\big[ (\mc Z_t \bigcdot \vec f) \;  (\mc Z_s \bigcdot \vec g) \big] = \big\langle T_{t-s}\vec f \, , \, {\mf D} \vec g \big\rangle_0
\end{equation}
where $T_t:=\exp(t\; \mf A^\dagger\Delta)$.
\end{remark}

\begin{proposition}
\label{prop:uni_OU}
There exists a unique stationary solution to \eqref{eq:ou_ini} in the sense of Definition \ref{def:uni_OU}. It is called a stationary $d$-dimensional generalized Ornstein-Uhlenbeck process. 
\end{proposition}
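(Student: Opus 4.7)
The plan is to establish existence via an explicit Fourier-mode construction on $\TT$ and to prove uniqueness via L\'evy's characterization of the Gaussian law combined with a duality argument using test functions that evolve under the adjoint semigroup $T_t=\exp(t\mf A^\dagger\Delta)$.

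For existence, I would diagonalize the dynamics in the Fourier basis $\{e_k(u)=e^{2\pi i k u}\}_{k\in\ZZ}$. Writing formally $\mc Z_t=\sum_{k}\hat{\mc Z}_t^{(k)} e_k$ and $\mc B_t=\sum_k \hat{\mc B}_t^{(k)} e_k$ with $\{\hat{\mc B}^{(k)}\}_{k>0}$ independent $\CC^d$-valued Brownian motions (and reality constraint for $k<0$), each mode obeys the finite-dimensional SDE
\begin{equation*}
d\hat{\mc Z}_t^{(k)} = -(2\pi k)^2\mf A\,\hat{\mc Z}_t^{(k)}\,dt + 2\pi i k\sqrt{2\mf C}\,d\hat{\mc B}_t^{(k)}.
\end{equation*}
Taking $\hat{\mc Z}_0^{(k)}$ centered Gaussian with covariance $\mf D$ (independent of $k$ since the mode-wise Lyapunov equation scales trivially with $(2\pi k)^2$) produces a stationary Gaussian mode-by-mode. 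The distribution-valued sum converges in $(\mc D'(\TT))^d$ thanks to the rapid decay of Fourier coefficients of smooth test functions, and I would verify (ii) by direct computation and (i) by applying It\^o on each mode and summing.

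For uniqueness, given any solution $\mc Z_\cdot$ in the sense of Definition \ref{def:uni_OU}, first note that for each fixed time-independent $\vec f$ the process $\mc M_t\bigcdot\vec f$ is a continuous martingale with deterministic quadratic variation $2t\|\sqrt{\mf C}\nabla\vec f\|_0^2$; polarizing yields deterministic cross-brackets, so L\'evy's theorem implies $\{\mc M_t\bigcdot\vec f\}_{t,\vec f}$ is a jointly centered Gaussian family. Together with the Gaussianity of $\mc Z_0$ from (ii), this makes $\mc Z_\cdot$ a centered Gaussian process whose law is entirely determined by its two-point function. To identify it I would apply the martingale relation with the time-dependent test function $\vec f_r=T_{t-r}\vec g$, which satisfies $\partial_r\vec f_r+\mf A^\dagger\Delta\vec f_r=0$, to obtain the representation
\begin{equation*}
\mc Z_t\bigcdot\vec g = \mc Z_0\bigcdot T_t\vec g + N_t,
\end{equation*}
with $N_t$ an $\mc F_t$-martingale, independent of $\mc F_0$, whose bracket is prescribed by the quadratic variation. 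An analogous representation for $\mc Z_s\bigcdot\vec h$ using $\vec f'_r=T_{s-r}\vec h$, combined with It\^o's isometry, the initial covariance \eqref{eq:covar1} and the semigroup property of $T_\cdot$, reproduces \eqref{eq:covar2} and thereby fixes all finite-dimensional laws.

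The main obstacle is the time-dependent martingale identity: as written, \eqref{lf1} omits the usual Duhamel term $\mc Z_s\bigcdot\partial_s\vec f_s$, so the substitution $\vec f_r=T_{t-r}\vec g$ requires a separate justification, either by approximating $T_{t-r}\vec g$ by piecewise-constant-in-time test functions and passing to the limit via the time-independent martingale problem, or by reinterpreting \eqref{lf1} with the missing term included. A secondary technicality is the solvability of $\mf A\mf D+\mf D\mf A^\dagger=2\mf C$ when $\mf A$ is only non-negative: on $\ker\mf A$ one needs the compatibility $\mf C\big|_{\ker\mf A}=0$, after which $\mf D$ is uniquely determined on $(\ker\mf A)^\perp$, a point that must be verified (or built in) for the statement of the proposition to be consistent.
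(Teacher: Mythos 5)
Your proposal is correct, but you should know that the paper in fact supplies no proof of Proposition \ref{prop:uni_OU}: the authors treat it as classical (it is the Holley--Stroock-type characterization of generalized Ornstein--Uhlenbeck processes, cf.\ Chapter 11 of \cite{K.L.}) and only write out the argument for the analogous Proposition \ref{prop:ed}. Your uniqueness half is essentially their method for that proposition: they likewise test against $\vec f^{(t)}_s=P^{(t)}_{t-s}\vec f$ solving the backward equation and encode Gaussianity through the exponential martingale $\exp\{i\mc M_s\bigcdot\vec f_\cdot+\tfrac12\int_0^s\mf q_v(\cdot)\,dv\}$, which is exactly what your appeal to L\'evy's characterization of a continuous martingale with deterministic bracket delivers; both routes then fix the finite-dimensional laws via the tower property. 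Your Fourier-mode existence construction is a genuine addition that the paper nowhere supplies; it is the standard one and works, provided the frozen zero mode is given the law dictated by \eqref{eq:covar1} and you check stationarity of the one-time marginal by differentiating $s\mapsto\langle T_s\vec g,\mf D T_s\vec g\rangle_0+2\int_0^s\|\sqrt{\mf C}\,\nabla T_{s-u}\vec g\|_0^2\,du$ and using the Lyapunov equation. Your two caveats are well taken and worth recording: comparison of \eqref{lf1} with \eqref{eq:mar_ed} confirms that the term $\int_0^t\mc Z_s\bigcdot(\partial_s\vec f_s)\,ds$ was inadvertently dropped from Definition \ref{def:uni_OU} and must be restored for the substitution $\vec f_r=T_{t-r}\vec g$ to be legitimate; and solvability of $\mf A\mf D+\mf D\mf A^\dagger=2\mf C$ for merely non-negative $\mf A$ does require the compatibility condition you state, although in every application in the paper $\mf A$ is $\gamma$ times the identity, so nothing is lost there.
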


\subsection{A two-dimensional drifted  Ornstein-Uhlenbeck equation}

Let $c\in \RR$ be fixed. We define the shift operator $T^\pm_c$ acting on functions $\vec f\in \mc D(\bb T, \RR^d)$ by
\begin{equation*}
\big( \, T_c^\pm \vec f \, \big) (x) =\vec f ( x \pm c ), \quad x \in \TT.
\end{equation*}
Let $\mc B$ be a standard $(\mathcal{D} '(\bb T))^2$-valued  Brownian motion. We consider positive real numbers $\lambda, \mu, \mf a, \mf d>0$ and $\mf b, \theta \in \RR$ such that $\mf a \mf d -\mf b^2 >0$. For each time $t>0$ we define the time dependent operators acting on functions $\vec f : \TT \to \RR^2$ by 
\begin{equation}\label{eq:operator_L_t}
{\mf L}_t : \vec f \mapsto 
\left( \begin{array}{cc}
 \lambda \Delta & 0 \\ \theta \nabla T_{ct}^+ &\mu \Delta \end{array}\right) \; \vec f
\end{equation}
and
\begin{equation}\label{eq:C_t}
{\mf C}_t : \vec f \mapsto 
\left( 
\begin{array}{cc}
\mf a  & \mf b T^-_{ct}\\
\mf b  T^+_{ct} & \mf d 
\end{array}
\right) \; \vec f.
\end{equation}
Observe that $\mf C_t$ is a non-negative symmetric operator: for any $\vec f =(f^1, f^2)^\dagger \in \bb L^2 (\TT, \RR^2)$, we have that
\begin{align}
\mf q_t (\vec f)&:= \big\langle \; \vec f, \; \mf C_t \vec f \; \big\rangle_0 \vphantom{\int}  \label{eq:defq}\\
&= \mf a  \int_\TT \big(f^1 (y) \big)^2 dy + \mf d   \int_\TT \big(f^2 (y) \big)^2\ dy+ 2 \mf b  \int_\TT \big(T_{ct}^+ f^1 (y) \, f^2 (y) \big) dy  \notag \\
& \ge 0 \vphantom{\int} \notag
\end{align}
because $\mf a, \mf d>0$ and $\mf a \mf d - \mf b^2 >0$.
The adjoint operator of $\mf L_t$ in $\bb L^2 (\TT, \RR^2)$ is denoted by $\mf L_t^\dagger$ and is given by 
\begin{equation*}
{\mf L}_t^\dagger : \vec f \mapsto 
\left( \begin{array}{cc}
 \lambda \Delta & -\theta \nabla T_{ct}^- \\ 0 &\mu \Delta \end{array}\right) \; \vec f.
\end{equation*}
In this section we want to make sense of the two-dimensional coupled SPDE system
\begin{equation}
\label{eq:2DOU}
d\mc Z_t= {\mf L}_t \mc Z_t dt + \sqrt {2{\mf C}_t}\, \nabla d\mc B_t.
\end{equation}


\begin{definition}
\label{def:ed}
We say that the stochastic process $\{\mc Z_t \; ; \; t \in [0,T]\}$ taking values in the space $\mc C([0,T],(\mathcal{D}' (\bb T))^2)$ is a solution of \eqref{eq:2DOU} with initial condition ${\mc Z}_0$ if for every function $\vec f:[0,T]\times \bb T\rightarrow \bb R^d$ which is  $C^1$ in time and smooth in space, the quantity given by
\begin{equation}
\label{eq:mar_ed}
\mc M_t \bigcdot \vec f_\cdot= \mc Z_t\bigcdot \vec f_t - \mc Z_0\bigcdot \vec f_0 -\int_0^t  \mc Z_s\bigcdot (\partial_s \vec f_s+\mf L_s^\dagger \vec f_s) \,ds\,,
\end{equation}
is a martingale with respect to the natural filtration associated to $\mc Z_\cdot$ as in \eqref{eq:filtration}, with quadratic variation \[\int_0^t \mf q_s( \nabla \vec f_s)ds,\] where $\mf q$ has been defined in \eqref{eq:defq}.    
\end{definition}

\begin{remark} \label{rem:trivial}
We observe that, when  the equation \eqref{eq:2DOU} has no noise ($\mf a = \mf b=\mf d=0$) and no diffusive part ($\lambda=\mu=0$) that is:
\begin{equation}
\label{eq:trivial_0}
\mc Z^1_t(f) -\mc Z^1_0(f) =0, \qquad
\mc Z^2_t(f) -\mc Z^2_0(f) =\theta  \int_0^t \mc Z^1_0(\nabla T^-_{c s}f)\,ds,
\end{equation}
then Definition \ref{def:ed} remains in force, the only difference being in \eqref{eq:mar_ed}, where the martingale term now is not present. We call this equation the \emph{trivial transport equation with parameter $\theta$}.
\end{remark}

\begin{proposition}
\label{prop:ed}
There exists a unique stochastic process $\mc Z$ solution of \eqref{eq:2DOU} in the sense of Definition \ref{def:ed}. Moreover, if $\mc Z_0$ is a Gaussian field then $\mc Z\bigcdot$ is a Gaussian process.
\end{proposition}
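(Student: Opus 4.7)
The structure to exploit is the triangularity of $\mf L_t$ in the component indices: its $(1,1)$-entry is $\lambda\Delta$, its $(2,2)$-entry is $\mu\Delta$, and the only off-diagonal entry is $\theta \nabla T^+_{ct}$ in position $(2,1)$. Consequently the first coordinate of $\mc Z$ decouples into an autonomous scalar Ornstein-Uhlenbeck equation, while $\mc Z^2$ is driven by $\mc Z^1$ and by a noise correlated with $d\mc B^1$ through the off-diagonal coefficient $\mf b T^\pm_{cs}$ of $\mf C_s$. The plan is to reduce existence to Proposition \ref{prop:uni_OU} in dimension one, and to obtain uniqueness in law and Gaussianity via a time-dependent test-function argument.

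For \emph{existence}, I would invoke Proposition \ref{prop:uni_OU} with $d=1$, $\mf A=(\lambda)$ and $\mf C=(\mf a)$, to obtain a stationary solution $\mc Z^1$ to $d\mc Z^1_t=\lambda\Delta \mc Z^1_t\, dt + \sqrt{2\mf a}\, \nabla d\mc B^1_t$. Realising $\mc Z^1$ on a probability space carrying a two-component noise $\mc B=(\mc B^1,\mc B^2)^\dagger$, I would then define $\mc Z^2$ by the Duhamel formula
\[
\mc Z^2_t(f) = \mc Z^2_0(S^\mu_t f) + \int_0^t \mc Z^1_s\big(\theta \nabla T^-_{cs} S^\mu_{t-s} f\big)\, ds + \mc N_t(f),
\]
where $S^\mu_t = e^{t\mu\Delta}$ is the heat semigroup on $\TT$ and $\mc N_t$ is the stochastic convolution of $S^\mu$ against the noise vector built from $\mf C_t$; the assumption $\mf a\mf d - \mf b^2>0$ guarantees a smooth square root for $\mf C_t$. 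A direct It\^o computation, together with the duality defining $\mf L_t^\dagger$, then shows that the pair $\mc Z=(\mc Z^1,\mc Z^2)^\dagger$ satisfies the martingale identity of Definition \ref{def:ed} with the correct bracket $\int_0^t \mf q_s(\nabla \vec f_s)\, ds$.

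For \emph{uniqueness in law} (and simultaneously the Gaussianity assertion), the key device is the backward system
\[
\partial_s \vec f_s + \mf L_s^\dagger \vec f_s = 0, \qquad \vec f_t = \vec g,
\]
for smooth terminal data $\vec g\in\mc D(\TT,\RR^2)$. Since $\mf L_s^\dagger$ is upper-triangular, this resolves sequentially: $f^2_s=S^\mu_{t-s}g^2$ first, and then $f^1_s$ by Duhamel with source $\theta\nabla T^-_{cs}f^2_s$. Plugging this $\vec f_\cdot$ into \eqref{eq:mar_ed} cancels the drift and leaves
\[
\mc Z_t\bigcdot \vec g \;-\; \mc Z_0\bigcdot \vec f_0 \;=\; \mc M_t\bigcdot \vec f_\cdot,
\]
a continuous $\mc F_t$-martingale with \emph{deterministic} quadratic variation $\int_0^t \mf q_s(\nabla \vec f_s)\, ds$. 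The exponential-martingale identity then gives $\EE\big[\exp(i\xi \mc M_t\bigcdot \vec f_\cdot)\mid \mc F_0\big] = \exp\big(-\tfrac12\xi^2 \int_0^t \mf q_s(\nabla \vec f_s)\, ds\big)$, so $\mc M_t\bigcdot \vec f_\cdot$ is centered Gaussian and independent of $\mc F_0$. Hence the law of $\mc Z_t \bigcdot \vec g$ is entirely determined by the law of $\mc Z_0$, and is Gaussian as soon as $\mc Z_0$ is. Extending the argument to arbitrary linear combinations $\sum_j \alpha_j \mc Z_{t_j}\bigcdot \vec g_j$ via piecewise-in-time backward test functions (smoothed at the junctions) pins down every finite-dimensional marginal, giving uniqueness in law and the Gaussian process property.

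The main obstacle I anticipate is the noise bookkeeping in the Duhamel construction of $\mc Z^2$: the time-dependent cross-correlation $\mf b T^\pm_{cs}$ between the two components' noises must be assembled so that the full process yields exactly the quadratic variation $\int_0^t \mf q_s(\nabla \vec f_s)\, ds$ on every smooth test function. The commutation of the shifts $T^\pm_{cs}$ with $\nabla$ and $\Delta$ on $\TT$ preserves the explicit form of all kernels involved, which should keep this verification tractable.
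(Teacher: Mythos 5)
Your uniqueness and Gaussianity argument is essentially the paper's own: the authors also solve the backward dual equation $\partial_s \vec f^{(t)}_s = -\mf L_s^\dagger \vec f^{(t)}_s$ with terminal datum $\vec f$ (via the semigroup $P^{(t)}_s$), plug it into \eqref{eq:mar_ed} to kill the drift, and use the exponential martingale associated with the deterministic bracket $\int \mf q_v(\nabla P^{(t)}_{t-v}\vec f)\,dv$ to compute conditional characteristic functions, then iterate by the tower property to pin down all finite-dimensional distributions in terms of the law of $\mc Z_0$; your ``piecewise-in-time backward test functions'' step is just this tower-property iteration in different clothing. The one genuine difference is that you additionally sketch an \emph{existence} construction (decoupling $\mc Z^1$ via Proposition \ref{prop:uni_OU} in $d=1$, then building $\mc Z^2$ by Duhamel with a correlated stochastic convolution), whereas the paper's proof only establishes uniqueness and Gaussianity, existence being supplied implicitly by the microscopic scaling limit. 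If you keep the existence part, note that Proposition \ref{prop:uni_OU} only furnishes the \emph{stationary} scalar OU process, while Definition \ref{def:ed} allows an arbitrary initial condition $\mc Z_0$, so your first component must be constructed from general initial data; and the cross-correlation bookkeeping for $\sqrt{2\mf C_s}$ that you flag as the main obstacle is indeed the step that would need to be carried out in full to make that half of the argument complete.
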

\begin{proof}
We fix $t\in[0,T]$  and we define the semigroup $P^{(t)}_s$ by $$\partial_sP^{(t)}_s \vec f=\mf L^\dagger_{t-s}P^{(t)}_s\vec f$$ where $\vec f\in{\mc D(\bb T, \bb R^2)}$.  Now we fix $\vec f\in\mc D(\bb T,\bb R^2)$ and we  apply \eqref{eq:mar_ed} to the time dependent function $\vec f^{(t)}_s = P^{(t)}_{t-s}\vec f$, which satisfies $\partial_s \vec f^{(t)}_s = -\mf L_s^\dagger \vec f^{(t)}_s$ and $ \vec f_0^{(t)}=P_t^{(t)} \vec f$, $\vec f_t^{(t)}=\vec f$. We get that
\begin{equation}
\Big\{ \mc M_s \bigcdot \vec{f}_\cdot\Big\}_{0\le s \le t} = \Big\{ \mc Z_s\bigcdot \vec f^{(t)}_s - \mc Z_0\bigcdot P^{(t)}_t\vec f\Big\}_{0\le s \le t}
\end{equation}
is a martingale with a deterministic quadratic variation given by 
\[\int_0^s \mf q_u \big( \nabla P^{(t)}_{t-u}\, \vec f \big) \, du.\]
From this we conclude that for any $s \le t$, 
\begin{align*}
\bb E\Big[\exp\{i\mc Z_t\bigcdot \vec f_t^{(t)} \}\Big| \mc F_s\Big]&=\exp\{i \mc Z_0\bigcdot P^{(t)}_t \vec f\}\bb E\Big[\exp\{i\mc M_t\bigcdot \vec f_{\bigcdot} \}\Big| \mc F_s\Big]\\&=
\exp\{i \mc Z_0\bigcdot P^{(t)}_t \vec f \}\exp\Big\{-\frac 12 \int_s^t \mf q_v (\nabla P^{(t)}_{t-v}\vec f) dv\Big\}\\
&\quad \times \exp\{i \mc M_s\bigcdot \vec f_\cdot \}\\
&=\exp\{i \mc Z_s \bigcdot \vec f_s^{(t)} \}  \exp\Big\{-\frac 12 \int_s^t \mf q_v (\nabla P^{(t)}_{t-v}\vec f) dv\Big\}.
\end{align*}
In the second equality above we used the fact that 
$$ \left\{ \exp\Big\{i\mc M_s \bigcdot \vec f_\cdot +\frac 12\int_0^s \mf q_v (\nabla P^{(t)}_{t-v}\vec f)\, dv\Big\}\; ; \; s\in [0,t]\right\}$$ 
is a martingale with respect to $(\mathcal F_s)_s$. 
By tower property of the conditional expectation we can then deduce an explicit expression for the finite dimensional distributions of the process $\mc Z \bigcdot$ whose only free parameter is the initial distribution of $\mc Z_0$ and this shows uniqueness. Moreover if $\mc Z_0$ is a Gaussian field, the characteristic function of the finite dimensional distributions of $\mc Z\bigcdot$ takes the form of the characteristic function of a Gaussian random vector. This completes the proof of the proposition.
\end{proof}

%
%
%

%

\subsection{The one-dimensional stochastic Burgers equation}
Now we want  to make sense of the 1-dimensional stochastic Burgers equation (SBE):
\begin{equation}\label{eq:sbe_ini}
d\mc Y_t=A\, \Delta \mc Y_t dt+B\, \nabla (\mc Y_t^2) dt + \sqrt {2C}\, \nabla d\mc B_t
\end{equation}
where $A>0,B \in \RR$ and $C>0$ are constants, and $\mc B$ is a  $\mc D'(\TT)$-valued standard Brownian motion.   
 
\begin{definition}
\label{def:sbe}
A stochastic process $\{\mathcal{Y}_t\,;\, t\in[0,T]\}$ taking values in $\mathcal{C}([0,T],\mathcal{D}'(\mathbb{T}))$ is a \emph{stationary energy solution} of \eqref{eq:sbe_ini} if
\begin{enumerate}[i)]
\item for each $t\in [0,T]$, $\mathcal{Y}_t$ is a $\mc D'(\bb T)$-valued white noise with variance $C/A$;
\item there exists a constant $\kappa>0$ such that for any $f\in \mc D(\bb T)$  and $0<\delta<\epsilon<1$
\begin{equation}\label{eq:energy_estimate}
\mathbb{E}\Big[\big(\mathcal{Q}_{s,t}^\epsilon(f)-\mathcal{Q}_{s,t}^\delta(f)\big)^2\Big]\leq \kappa \epsilon  (t-s)\|\nabla f\|_{0}^2, \end{equation}
where $$\mathcal{Q}_{s,t}^\epsilon(H):=\int_s^t\int_\mathbb{T}\big(\mathcal{Y}_r (\iota_\epsilon(u)) \big)^2\;\nabla f(u)\,du\,dr$$
and for $u\in\bb T$ the function $\iota_\varepsilon(u): [0,1]\to \bb R$ is the  approximation of the identity defined as 
\[
\iota_\varepsilon(u)(v):= \varepsilon^{-1}  \; \mathbf{1}_{]u,u+\varepsilon]}(v)  
\] 
\item for $f\in\mc D(\mathbb{T})$,  \[\mathcal{Y}_t(f)-\mathcal{Y}_0(f)-A\int_0^t  \mathcal{Y}_s  (\Delta f) ds + B\mathcal  Q_t(f)\]
is a Brownian motion of variance $2C t\|\nabla f\|_{0}^2$,  where $\mathcal{Q}_t(f):=\lim_{\epsilon\to 0}\mathcal{Q}_{0,t}^\epsilon(f)$, the limit being in $\mathbb{L}^2$;
\item the reversed process $\{\mathcal{Y}_{T-t}\,;\, t\in[0,T]\}$ satisfies item $iii)$ with $B$ replaced by $-B$.
\end{enumerate} 
\end{definition}

\begin{proposition}[Theorem 2.4, \cite{GubPer}]
\label{prop:sbe}
There exists a unique random element $\mc Y$ which is a stationary energy solution of \eqref{eq:sbe_ini} in the sense of Definition \ref{def:sbe}.
\end{proposition}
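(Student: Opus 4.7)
The proposition is Theorem 2.4 of \cite{GubPer}, so my plan is to follow the Gubinelli--Perkowski strategy. The proof splits into two parts: first construct a stationary energy solution as the limit of a regularization, then show any two stationary energy solutions coincide in law by exploiting both the forward martingale characterization in item iii) and the time-reversal condition iv) of Definition \ref{def:sbe}.

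For existence, I would approximate \eqref{eq:sbe_ini} by replacing the singular nonlinearity $\nabla(\mc Y_t^2)$ with a mollified version $\nabla((\mc Y_t*\varphi_\varepsilon)^2)$ for a smooth spatial mollifier $\varphi_\varepsilon$. Classical SPDE theory yields a well-posed equation $\mc Y^\varepsilon$ on $\TT$ for which the spatial white noise with variance $C/A$ remains a stationary law, because the mollified Burgers drift is divergence-free in the sense of $\bb L^2$ against this white noise measure. Stationarity gives uniform bounds on $\EE[\mc Y^\varepsilon_t(f)^2]$, and a standard Kolmogorov argument yields tightness of $\{\mc Y^\varepsilon\}_\varepsilon$ in $\mc C([0,T], \mc D'(\TT))$. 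The content of item ii) is then proved at the approximating level via the \emph{Itô trick}: one writes the integrand defining $\mc Q_{s,t}^\varepsilon(f)$ as a coboundary for the generator of the Ornstein--Uhlenbeck part of the dynamics, modulo a residual term, and controls it by the Dirichlet form of a suitable primitive, using the Poincaré inequality for white noise. The bound \eqref{eq:energy_estimate} follows uniformly in $\varepsilon$, so $\mc Q_t(f)$ is a well-defined $\bb L^2$-limit and the properties i)--iv) pass to the limit along any subsequential limit point.

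For uniqueness, I would take two stationary energy solutions $\mc Y^1$ and $\mc Y^2$ with the same prescribed marginal and seek to identify their joint finite-dimensional distributions. Item iii) represents each $\mc Y^i$ as the sum of its initial condition, the drift $A\int_0^t \mc Y^i_s(\Delta f)\, ds$, the nonlinear contribution $-B\mc Q_t^i(f)$, and a forward Brownian motion of variance $2C\|\nabla f\|_0^2\, t$. Item iv) gives the analogous representation for the time-reversed process with $B$ replaced by $-B$. Adding the forward and backward identities cancels the singular quadratic term and returns a Gaussian Ornstein--Uhlenbeck identity, which together with the prescribed white-noise marginal pins down the Gaussian finite-dimensional distributions uniquely; subtracting them expresses $\mc Q_t(f)$ as the difference of a forward and a backward Brownian motion, determining its joint law with $\mc Y$. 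This fixes the characteristic functional of the whole process and forces $\mc Y^1$ and $\mc Y^2$ to agree in law.

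The main obstacle is the rigorous treatment of the nonlinear term $\mc Q_t(f)$, which is defined only as an $\bb L^2$-limit and is not directly accessible via Itô's formula. The energy estimate ii) is the crucial technical input: it simultaneously ensures that the approximating Cauchy sequence converges and provides enough continuity in the test functions to promote pointwise martingale identities, established at the regularized level, to the limiting distribution-valued process, as well as to justify the forward/backward combination used in the uniqueness step.
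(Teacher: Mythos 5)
The paper offers no proof of Proposition \ref{prop:sbe}: it is quoted verbatim as Theorem 2.4 of \cite{GubPer}, so the only fair comparison is with the argument in that reference. Your existence sketch (Galerkin/mollified approximation, invariance of the white noise with variance $C/A$, the It\^o trick to obtain the energy estimate \eqref{eq:energy_estimate} uniformly in the regularization, tightness from stationarity, and passage to the limit of items i)--iv)) is broadly consistent with how existence of energy solutions is established in the literature, although in this stationary periodic setting existence is usually imported from particle-system or Galerkin limits rather than re-proved.

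The uniqueness step, however, has a genuine gap. You claim that adding the forward decomposition of item iii) to the backward decomposition of item iv) cancels $B\mc Q_t(f)$ and ``returns a Gaussian Ornstein--Uhlenbeck identity, which \dots pins down the Gaussian finite-dimensional distributions uniquely.'' The stationary energy solution of \eqref{eq:sbe_ini} with $B\neq 0$ is \emph{not} a Gaussian process (only its fixed-time marginals are white noise), so no argument can conclude by identifying Gaussian finite-dimensional distributions. More importantly, the forward/backward cancellation yields only the identity $2A\int_s^t \mc Y_r(\Delta f)\,dr = -\big(M_t-M_s\big) + \big(\hat M_{T-s}-\hat M_{T-t}\big)$, where $M$ and $\hat M$ are martingales for \emph{different} filtrations; this relation is exactly the Kipnis--Varadhan/It\^o-trick input used to bound additive functionals (and to prove \eqref{eq:energy_estimate}), but it does not characterize the law of $\mc Y$ --- if it did, uniqueness would have followed already from the 2014 definition of energy solutions, whereas it remained open until \cite{GubPer}. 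The actual proof of Theorem 2.4 in \cite{GubPer} proceeds quite differently: one integrates $\mc Y$ to an interface $h$, exponentiates via a Cole--Hopf map $w=e^{h}$ (with a renormalization constant), shows that $w$ is an analytically weak solution of the linear multiplicative stochastic heat equation, and invokes classical It\^o-theory uniqueness for that equation before inverting the transform. That mapping step is the core idea of the uniqueness proof, and it is absent from your proposal.
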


\section{Statement of results}\label{sec:stat_res}
\subsection{Topological setting}
For each integer $z\in\ZZ$, let \begin{equation} \label{eq:defh}h_{z}: x \in \TT \mapsto \begin{cases} \sqrt 2 \cos(2\pi z x)& \text{ if } z >0, \\ \sqrt 2 \sin(2\pi zx) & \text{ if } z < 0, \\
1 & \text{ if } z=0. \end{cases} \end{equation}  The set $\{h_{z}\; ; \;  z\in \bb Z\}$ is an orthonormal
basis of $\bb L^{2}(\mathbb{T})$. Consider in $\bb L^{2}(\mathbb{T})$ the operator $\mc K=(\mathrm{Id}-\Delta)$. A simple computation shows that
$\mc Kh_{z}=\gamma_{z}h_{z}$ where $\gamma_{z}=1+4\pi^2|z|^2$.

For any integer $k\geq{0}$, denote by $\bb {H}_{k} \subset \bb L^2 (\TT)$ the Hilbert space induced by $\mc D (\mathbb{T})$ and the scalar product $\langle \cdot,\cdot \rangle_{k}$ defined by $\langle f,g\rangle_{k}=\langle f,\mc K^{k}g\rangle_0$,  which can be written  as 
\begin{equation}
\label{eq:inner_product_H_k}
\langle f,g\rangle_{k}=\sum_{z\in\bb Z}\langle f,h_z\rangle_0\,  \langle g, h_z\rangle_0\;  \gamma_z^k, \qquad f,g \in \mc D(\TT).
\end{equation}
Denote by $\bb {H}_{-k} \subset {\mc D}^\prime (\TT)$ the dual of $\bb{H}_{k}$ relatively to the scalar product $\langle \cdot, \cdot \rangle_0$. It is a Hilbert space for the inner product $\langle \cdot, \cdot \rangle_{-k}$ defined by
\begin{equation}
\label{eq:inner_product_H_minus_k}
\langle \mc Y^1,\mc Y^2\rangle_{-k}=\sum_{z\in\bb Z}  \mc Y^1 (h_z) \,  \mc Y^2 (h_z ) \; \gamma_z^{-k}, \qquad \mc Y^1, \mc Y^2\in \bb H_{-k}.
\end{equation}
We denote by $\|\cdot\|_{-k}$ the corresponding norm.
We generalize the previous spaces when $\bb L^2 (\TT)$ is replaced by $(\bb L^2 (\TT))^2$.
 The set $\{ (h_z,0)^\dagger, (0,h_z)^\dagger \; ;\; z \in \bb Z\}$ is then an orthonormal basis of $(\bb L^{2}(\mathbb{T}))^2$.
We  define the space $\bb H_k\times \bb H_k$ as the Hilbert space induced by $(\mc D (\mathbb{T}))^2$  and the scalar product $\langle \cdot,\cdot\rangle_{k}$ defined by 
\begin{equation}
\label{eq:inner_product_H_k_times_k}
\langle \varphi,\psi\rangle_{k}=\langle \varphi^1,\psi^1\rangle_k + \langle \varphi^2, \psi^2 \rangle_k \, =\, \sum_{i=1}^2 \; \sum_{z\in\bb Z}\langle \varphi^i,h_z \rangle_0 \,  \langle \psi^i, h_z \rangle_0\,  \gamma_z^k.
\end{equation}
Analogously, we define the space $(\bb H_k \times \bb H_k)^\prime$ as  the dual of $\bb{H}_{k}\times \bb H_k$, relatively to the previous inner product on $(\bb L^2(\bb T))^2$. 
The inner product between two elements $\mc X^1, \mc X^2\in (\bb H_{k}\times\bb H_{k})^\prime$ is defined by
\begin{equation}
\label{eq:inner_product_H_minus_k_times_minus_k}
\langle \mc X^1,\mc X^2\rangle_{-k}=\sum_{z\in\bb Z} \left\{ \mc X^1 (h_z, 0)^\dagger \; \mc X^2(h_z,0)^\dagger \, +\, \mc X^1 (0,h_z)^\dagger\; \mc X^2(0,h_z)^\dagger\right\}\, \gamma_z^{-k}
\end{equation}
and we denote by $\|\cdot\|_{-k}$ the corresponding norm. 

\begin{remark}
\label{rem:topo_ident}
Observe that if $\mc Z=(\mc Z^1, \mc Z^2)^\dagger\in\bb H_{-k}\times \bb H_{-k}$ then the application $\mc Z\bigcdot$ given by \[\mc Z \bigcdot : (f^1,f^2)^\dagger \in \bb H_k \times \bb H_k \mapsto \mc Z^1 (f^1) + \mc Z^2 (f^2)\] is an element of $(\bb H_{k}\times\bb H_{k})^\prime$. 

Conversely, any element $\mc X \in (\bb H_{k}\times\bb H_{k})^\prime$ can be written in this form : take $\mc Z^1 (f)=\mc X (f,0)$ and $\mc Z^2 (f) =\mc X (0,f)$. In fact, the map $ \mc Z\mapsto \mc Z \bigcdot $ that we can define in this way, permits to identify topologically $ \bb H_{-k} \times {\bb H}_{-k}$ with $(\bb H_{k}\times\bb H_{k})^\prime$.
\end{remark}

\subsection{Fluctuation fields}

Fix  an integer $k$. Let us consider $0 <a \le 2$ and take $\theta(n)=n^a$. We recall that $\kappa>0$ and $\alpha_n=\alpha n^{-\kappa}$. Let us denote
\begin{equation}
\label{eq:velocity1}
c_n :=2 b^2 \rho\; \tfrac{\theta(n)\alpha_n}{n}= 2 b^2 \rho \alpha n^{a-\kappa -1}, \quad n\ge 1.
\end{equation} We also fix some horizon time $T>0$.  For any $n\ge 1$ we define the fluctuation field $\{ \mathcal{Y}^{n}_t\; ; \; t\in [0,T]\}$ for the variable $\xi$ as the random process living in the Skorokhod space $\bb D ([0,T],\bb H_{-k})$ such that 
\begin{equation}
\label{eq:xi_field}
\mathcal{Y}^{n}_t  (f) =\frac{1}{\sqrt{n}} \sum_{x\in \TT_n} \big(T^+_{c_n t}\, f \big)\big(\tfrac{x}{n}\big) \big(\xi_x(t\theta(n))-\rho \big), \quad f\in\mc D (\bb T).
\end{equation}
This means that we are looking at the fluctuation field of $\xi$ in a frame moving at velocity $c_n$. Similarly we define the fluctuation field $\{ \mathcal{V}^{n}_t\; ; \; t \in [0,T]\}$ for the variable $\eta$ as the random process living in the Skorokhod space $\bb D ([0,T], \bb H_{-k})$  such that
\begin{equation}
  \label{eq:vol_field}
  \mathcal{V}^{n}_t  (f) =\frac{1}{\sqrt{n}} \sum_{x\in \TT_n}  f\big(\tfrac{x}{n}\big)  \, \big(\eta_x(t\theta(n))-v \big), \quad f\in \mc D (\bb T).
\end{equation} 
In fact we are interested in the mutual evolution of these two fields, so we define
\begin{equation*}
\mc Z^n := 
\begin{pmatrix}
\mc Y^n\\
\mc V^n
\end{pmatrix}.
\end{equation*}
  Our aim is to study the convergence of the sequence $  (\mathcal{Z}^{n})_n$ according to the intensity of the asymmetry of the system which is regulated by the parameter $\kappa>0$.
%
%
Our main result is the following theorem. In what follows  $c=2b^2\rho\alpha$.

\begin{theorem}
We have that:
\begin{itemize}
\item If $\kappa>1$, then in the diffusive time scale $\theta (n)= n^2$,  the sequence of processes $(\mc Z^n)_n$ converges in law to $\mc Z \in \bb D ([0,T],(\mc D' (\TT))^2)$ which is the stationary solution of the Ornstein-Uhlenbeck equation \eqref{eq:ou_ini} with parameters:
\begin{equation}
\mathfrak{A}=
\gamma\, {\rm{I_2}}
\quad \textrm{and}\quad \mathfrak{C}=\gamma\begin{pmatrix}
\tau^2 & \delta  \\  \delta  & \sigma^2 \end{pmatrix}. 
\end{equation}

\item If $\kappa=1$, then in the diffusive time scale $\theta (n)= n^2$,  the sequence of processes $(\mc Z^n)_n$ converges in law to $\mc Z \in \bb D ([0,T],(\mc D '(\TT))^2)$ which is the stationary solution of the two-dimensional drifted Ornstein-Uhlenbeck equation \eqref{eq:2DOU} with parameters $\lambda=\mu=\gamma$,
$\theta=0$,  $\mf a=2\gamma\tau^2$, $\mf b=2\gamma\delta$ and $\mf d= 2\gamma\sigma^2$, and initial condition a two-dimensional Gaussian white noise (in space) with covariance matrix
\begin{equation*}
\begin{pmatrix}
\tau^2 & \delta  \\  \delta  & \sigma^2 \end{pmatrix}.
\end{equation*}
\item If $0 \le  \kappa <1$, then in the time scale $\theta (n)= n^{\kappa +1}$,  the sequence of processes $(\mc Z^n)_n$ converges in law to $\mc Z \in \bb D ([0,T], (\mc D' (\TT))^2)$ which is the stationary solution of the trivial transport equation given in \eqref{eq:trivial_0}  with parameter $\theta=-2\alpha b$, and initial condition a two-dimensional Gaussian white noise (in space) with covariance matrix
\begin{equation*}
\begin{pmatrix}
\tau^2 & \delta  \\  \delta  & \sigma^2 \end{pmatrix}.
\end{equation*}

\end{itemize}
Moreover in all these cases, if the time scale $\theta (n)=n^a$ is such that $a<\inf (2,\kappa+1)$ the evolution is trivial in the sense that the sequence of processes $(\mc Z^n)_n$ converges in law to $\mc Z_0$.
\end{theorem}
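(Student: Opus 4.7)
The plan is to follow the standard martingale approach: write Dynkin's formula for the fluctuation fields against smooth time-dependent test functions, carry out a careful scaling analysis of the resulting drift terms, establish tightness of the sequence $(\mc Z^n)_n$, and identify the limit via the uniqueness statements of Propositions \ref{prop:uni_OU} and \ref{prop:ed} together with Remark \ref{rem:trivial}.

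First, for a test function $\vec f : [0,T]\times \TT \to \RR^2$ which is $C^1$ in time and smooth in space, I would write
\begin{equation*}
\mc M^n_t \bigcdot \vec f_\cdot := \mc Z^n_t \bigcdot \vec f_t - \mc Z^n_0 \bigcdot \vec f_0 - \int_0^t (\partial_s + \theta(n) \mc L)(\mc Z^n_s \bigcdot \vec f_s)\, ds,
\end{equation*}
which is a martingale with respect to the natural filtration associated to $\mc Z^n$, with explicit predictable quadratic variation coming from the carr\'e du champ of $\mc L$. Using the expressions of the microscopic currents, summation by parts, and the decomposition $\xi_x = \rho + (\xi_x-\rho)$, the integrand splits into three blocks: a diffusive contribution from $\gamma\mc S$, of size $\gamma\theta(n)/n^2$; a linear-in-$(\xi_x-\rho)$ transport contribution from $\alpha_n\mc A$ of size $\alpha_n\theta(n)/n = \alpha\, n^{a-\kappa-1}$; and a quadratic residual arising from the $-\alpha_n b^2\xi_x\xi_{x+1}$ piece of the $\xi$-current. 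The moving frame $T^+_{c_n t}$ built into the definition of $\mc Y^n$ is designed precisely to absorb the linear transport for the $\xi$-component; the same linear-in-$(\xi_x-\rho)$ drift in the $\eta$-current then appears as a cross-drift of the form $\tfrac{2\alpha_n b\theta(n)}{n}\;\mc Y^n_s\bigl(T^-_{c_n s}\nabla f^2_s\bigr)$ in the evolution of $\mc V^n$.

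The scaling dictates which of these three blocks survives in each regime. For $\kappa>1$ and $\theta(n)=n^2$, both the cross-drift $\alpha n^{1-\kappa}$ and the frame velocity $c_n$ tend to zero, so only diffusion remains and the limit is an uncoupled Ornstein--Uhlenbeck pair. For $\kappa=1$ and $\theta(n)=n^2$, both diffusion and cross-drift are of order one and $c_n\to c=2b^2\rho\alpha$, producing the announced drifted OU system. For $0\le\kappa<1$ and $\theta(n)=n^{\kappa+1}$, the diffusive block is subdominant while the cross-drift survives and $\mc Y^n$ freezes at its initial value, producing the trivial transport equation of Remark~\ref{rem:trivial}. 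Finally, for $a<\inf(2,\kappa+1)$ every drift term vanishes and the limit is constant equal to $\mc Z_0$. In every case the initial data $\mc Z^n_0$ converges by the classical multivariate central limit theorem applied to the product measure $\mu_{\bar\beta,\bar\lambda}$ to a two-dimensional Gaussian white noise with the prescribed covariance.

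The main obstacle, in my view, is the vanishing of the quadratic residual $\tfrac{\theta(n)\alpha_n}{\sqrt n}\sum_x (\nabla f^1)(\tfrac{x}{n}+c_n s)\,(\xi_x-\rho)(\xi_{x+1}-\rho)$. A naive equilibrium variance computation is not sharp enough because $\theta(n)\alpha_n$ reaches order $n$ at the critical scalings; one needs a Kipnis--Varadhan / first-order Boltzmann--Gibbs argument, using the spectral gap of $\mc S$ on finite boxes and the fact that the observable has the structure of a discrete gradient of a local function so that its $H^{-1}$-norm with respect to $\mc S$ is controlled. Tightness of $(\mc Z^n)_n$ in the Skorokhod space $\mathbb{D}([0,T],\bb H_{-k}\times\bb H_{-k})$ for $k$ large enough then follows from Aldous' criterion combined with stationarity and the quadratic-variation bounds from the decomposition above. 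Identification of the limit in each of the three non-trivial regimes is obtained by checking the corresponding martingale characterization (Definition~\ref{def:uni_OU}, Definition~\ref{def:ed}, or the explicit transport formula of Remark~\ref{rem:trivial}), and uniqueness then upgrades subsequential convergence into convergence of the whole sequence.
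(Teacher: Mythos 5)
Your proposal follows essentially the same route as the paper: the same Dynkin decomposition of the drift into a diffusive block, a linear cross-drift of size $\alpha_n\theta(n)/n$ revealed by the moving frame, and a quadratic residual, with the rough stationary Cauchy--Schwarz bound \eqref{eq:CSs} handling the non-critical regimes and a sharper $H_{-1}$/Kipnis--Varadhan estimate (in the paper, Theorem 4 of \cite{BG}, giving the gain $\sqrt{\theta(n)}$) needed exactly at the critical scalings $a=\kappa+1$ and $(a,\kappa)=(2,1)$, followed by tightness in $\bb H_{-k}\times\bb H_{-k}$ via Aldous' criterion and identification through the martingale characterizations. The only cosmetic differences are that the gain in the $H_{-1}$ bound comes from the degree-two polynomial structure of $\bar\xi_x\bar\xi_{x+1}$ rather than from a discrete-gradient structure as you suggest, and that the paper isolates a separate argument (Lemma \ref{lem:conv-int-y}) to justify passing to the limit in the time-shifted transport integral $\int_0^t\mc Y^n_s(\nabla T^-_{cs}f^2)\,ds$, a continuity step you leave implicit.
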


Hence this theorem fixes the minimal time scale needed in order to see a joint evolution of the fields of interest. It does not mean that this time scale is the only one for which a non trivial temporal evolution of the fields occurs. The next theorem shows that for the field $\mc Y^n$ we can go even further.

\begin{theorem}
\label{theo:xi_flu-braga}
The sequence of processes $(\mc Y^n)_n$ converges in law to $\mc Y$ which is an element of $\bb D ([0,T], \mc D' (\TT) )$ such that
\begin{itemize}
\item For any $\kappa>0$, if the time scale is $\theta (n) =n^a$ with $a<\inf(2, \frac{4}{3} (\kappa+1))$, then $d\mc Y_t=0$.
\medskip

\item If $\kappa>\frac12$, then in the diffusive time scale $\theta (n)=n^2$, $\mc Y$ is the stationary solution of the  Ornstein-Uhlenbeck equation{\footnote{or equivalently of \eqref{eq:sbe_ini} with $A=\gamma$, $B=0$ and $C=\gamma\tau$. }}  given by \eqref{eq:ou_ini}, in dimension $d=1$,  with ${\mf A}=\gamma$ and ${\mf C}=\gamma \tau$.

\medskip

\item If $\kappa=\frac 12$, then in the diffusive time scale $\theta(n)=n^2$, $\mc Y$ is the stationary energy solution of the one-dimensional stochastic Burgers equation \eqref{eq:sbe_ini} with parameters $A=\gamma$, $B=b^2\alpha$ and $C=\gamma\tau$.

\end{itemize}
\end{theorem}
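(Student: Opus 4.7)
The plan is to apply Dynkin's formula to $\mc Y^n_t(f)$ in the moving frame defined by $T^+_{c_n t}$, analyze the resulting decomposition, and identify each of the three limits via the appropriate form of the Boltzmann--Gibbs principle. Tightness of $(\mc Y^n)_n$ in $\bb D([0,T],\bb H_{-k})$ is provided by Section \ref{sec:tight}, and uniqueness of the limits is provided by Propositions \ref{prop:uni_OU} and \ref{prop:sbe}.

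First I would compute $\mc L\xi_x$. Since $V_b'(\eta_x)=-b(\xi_x-1)$, a short calculation yields $\mc A\xi_x=b^2\xi_x(\xi_{x+1}-\xi_{x-1})$ and $\mc S\xi_x=\xi_{x+1}-2\xi_x+\xi_{x-1}$. Expanding $\xi_x\xi_{x\pm 1}$ around $\rho$ decomposes $\mc A\xi_x$ into a linear piece $\rho b^2(\xi_{x+1}-\xi_{x-1})$ and a centered quadratic piece $b^2(\xi_x-\rho)[(\xi_{x+1}-\rho)-(\xi_{x-1}-\rho)]$. A discrete summation by parts shows that, after testing against $(T^+_{c_n s}f)(x/n)/\sqrt n$ with the prefactor $\theta(n)\alpha_n$, the linear piece gives exactly $-c_n\mc Y^n_s(\nabla f)$ up to $o(1)$, which cancels the drift $+c_n\mc Y^n_s(\nabla f)$ produced by $\tfrac{d}{ds}T^+_{c_n s}f$; this is precisely the justification for the choice \eqref{eq:velocity1}. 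Using that $\Delta$ commutes with translations, Dynkin's formula yields
\begin{equation*}
\mc Y^n_t(f)=\mc Y^n_0(f)+\int_0^t\tfrac{\theta(n)\gamma}{n^2}\,\mc Y^n_s(\Delta f)\,ds+\int_0^t\mc N^n_s(f)\,ds+\mc M^n_t(f),
\end{equation*}
where, after one further summation by parts,
\begin{equation*}
\mc N^n_s(f)=-\frac{\theta(n)\alpha_n b^2}{n^{3/2}}\sum_{x\in\TT_n}(\nabla T^+_{c_n s}f)(\tfrac{x}{n})(\xi_x(s\theta(n))-\rho)(\xi_{x+1}(s\theta(n))-\rho)+o(1),
\end{equation*}
and $\mc M^n_t(f)$ is a martingale whose predictable quadratic variation converges to $2\gamma\tau^2\, t\,\|\nabla f\|_0^2$.

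The three regimes are then distinguished by the treatment of $\mc N^n$. In the subcritical regime $a<\inf(2,\tfrac{4}{3}(\kappa+1))$, the diffusive term is $\mc O(\theta(n)/n^2)=o(1)$, and a second-order Boltzmann--Gibbs estimate for $\int_0^t\mc N^n_s(f)\,ds$ with mesoscopic scale $\epsilon$ optimized in terms of $\theta(n)\alpha_n$ produces $\mathbb L^2$-vanishing precisely under the stated threshold on $a$, whence $d\mc Y_t=0$. In Case 2 ($\kappa>\tfrac12$, $a=2$), the quadratic $(\xi_x-\rho)(\xi_{x+1}-\rho)$ has vanishing linear projection on the density field because $\xi_x,\xi_{x+1}$ are independent under equilibrium, so the first-order Boltzmann--Gibbs principle combined with a Kipnis--Varadhan type inequality yields $\int_0^t\mc N^n_s(f)\,ds\to 0$ in $\mathbb L^2$ under $\kappa>\tfrac12$; the diffusive term converges to $\gamma\int_0^t\mc Y_s(\Delta f)\,ds$, so Proposition \ref{prop:uni_OU} identifies the limit as the OU process with $\mf A=\gamma$, $\mf C=\gamma\tau$. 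In Case 3 ($\kappa=\tfrac12$, $a=2$), the prefactor sits exactly at the critical threshold, first-order BG fails, and one instead applies the second-order Boltzmann--Gibbs principle developed in \cite{gj2014,GubPer,GJSim} to identify
\begin{equation*}
\int_0^t\mc N^n_s(f)\,ds\;\xrightarrow[n\to\infty]{}\;b^2\alpha\,\mc Q_t(f),
\end{equation*}
with $\mc Q_t$ the quadratic functional of Definition \ref{def:sbe}; the $\mathbb L^2$ estimate \eqref{eq:energy_estimate} emerges directly from the mesoscopic replacement at scale $\epsilon$.

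Combining all of the above with tightness, every subsequential limit $\mc Y$ satisfies items i)--iii) of Definition \ref{def:sbe} with $A=\gamma$, $B=b^2\alpha$, $C=\gamma\tau$ in Case 3; item iv) follows by running the same argument for the time-reversed process $\{\mc Y_{T-t}\}$, using that the time-reversal of $\mc L$ under $\mu_{\bar\beta,\bar\lambda}$ is governed by $-\alpha_n\mc A+\gamma\mc S$, which flips the sign of $B$. Uniqueness via Propositions \ref{prop:uni_OU} and \ref{prop:sbe} (and the trivial uniqueness of constant processes) then concludes the convergence in each case. The main obstacle is the second-order Boltzmann--Gibbs estimate in Case 3: one must replace the microscopic product $(\xi_x-\rho)(\xi_{x+1}-\rho)$ by a mesoscopic functional of the density field $\mc Y^n$ at scale $\epsilon$, and control two competing error terms via spectral gap estimates for $\mc S$ on boxes of size $\epsilon n$; the scaling $\kappa=\tfrac12$, $a=2$ is the unique parameter point where this replacement gives a non-vanishing Burgers nonlinearity in the limit, and the optimization over $\epsilon$ in the subcritical estimate produces the exact threshold $\tfrac{4}{3}(\kappa+1)$ governing the trivial regime.
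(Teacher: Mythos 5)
Your plan reproduces the paper's architecture almost exactly: Dynkin's formula in the moving frame, with $c_n$ chosen so that the linear part of $\theta(n)\alpha_n\mc A\xi_x$ cancels the drift created by $T^+_{c_n s}$, leaving the diffusive term, a martingale with limiting quadratic variation $2\gamma\tau^2 t\|\nabla f\|_0^2$, and the centered quadratic term proportional to $\theta(n)\alpha_n n^{-3/2}\sum_x(\nabla T^+_{c_n s}f)(\tfrac{x}{n})\,\bar\xi_x\bar\xi_{x+1}$; the three regimes are then read off from the size of this last term, tightness comes from Section \ref{sec:tight}, and uniqueness from Propositions \ref{prop:uni_OU} and \ref{prop:sbe}. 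Your treatment of the critical case $\kappa=\tfrac12$, $a=2$ via the second-order Boltzmann--Gibbs principle is exactly the paper's (Theorem \ref{theo:BG}), and you are in fact more explicit than the paper in checking items ii) and iv) of Definition \ref{def:sbe} (the time-reversal argument via $-\alpha_n\mc A+\gamma\mc S$ is the right one).

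The one step I would not accept as written is the justification of the second bullet. You claim that, because $\bar\xi_x\bar\xi_{x+1}$ has zero projection on the conservation law, ``the first-order Boltzmann--Gibbs principle combined with a Kipnis--Varadhan type inequality'' kills the quadratic term for $\kappa>\tfrac12$. A Kipnis--Varadhan bound needs the local function to have finite $H_{-1}$ norm, and in dimension one the degree-two function $\bar\xi_0\bar\xi_1$ does not --- this is precisely why the nonlinearity survives at $\kappa=\tfrac12$ --- so the naive first-order route gives at best the threshold $\kappa>1$, not $\kappa>\tfrac12$. What the paper actually uses, for both the first and the second bullets, is the single quantitative estimate \eqref{eq:BG_bound} (Theorem 4 of \cite{BG}), a multiscale, second-order-BG-type bound giving the gain $n/\sqrt{\theta(n)}$ over Cauchy--Schwarz, i.e.\ a variance of order $\alpha_n^2\,\theta(n)^{3/2}\,n^{-2}$; this is where both thresholds $a<\tfrac43(\kappa+1)$ and $\kappa>\tfrac12$ (at $a=2$) come from. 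Your $\epsilon$-optimized second-order BG argument for the first bullet is essentially a re-derivation of that estimate, so the fix is simply to invoke the same bound in the second bullet instead of a first-order principle.
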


We conjecture that the result in the first item is true for any $a<2$ and $\kappa>a-\frac32$ but below we use   Theorem 4 of \cite{BG}  which is not optimal in this case. The behavior when  $b=a-\frac32$ is open.

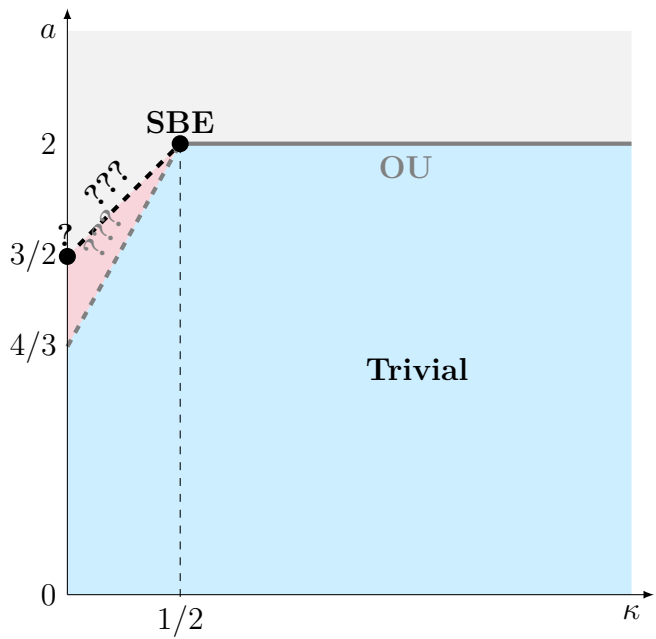
\begin{figure}[h!]
\begin{center}
\begin{tikzpicture}[scale=0.3]
\draw (0,25) node[left]{$a$};
\draw (25,0) node[below]{$\kappa$};
\draw (5,0) node[below]{$1/2$};
\draw (0,0) node[left]{$0$};
\draw (0,15) node[left]{$3/2$};
\draw (0,11) node[left]{$4/3$};
\draw (0,20) node[left]{$2$};
\fill[light-gray] (0,11) -- (5,20) -- (25,20) -- (25,25) -- (0,25) -- cycle;
\fill[fill=columbiablue, fill opacity=0.5] (0,0) -- (25,0) -- (25,20) -- (5,20) -- (0,11)--cycle;
\fill[fill=cherryblossompink, fill opacity=0.5] (0,11) -- (5,20) -- (0,15)--cycle;
\draw[-,=latex,gray,ultra thick] (5,20) -- (25, 20) node[midway,below,sloped] {\bf OU};
\draw[-,=latex,black,dashed, ultra thick] (0,15) -- (5,20) node[midway,above,sloped] {\bf ???};
\draw[-,=latex,gray,dashed, ultra thick] (0,11) -- (5,20) node[midway,above,sloped] {\bf ???};
\node[circle,fill=black,inner sep=0.8mm] at (5,20) {};
\node[circle,fill=black,inner sep=0.8mm] at (0,15) {};
\node[] at (-0.1,15) [above] {\bf ?};
\node[] at (5,20) [above] {\bf SBE};
\node[] at (15.5,10) {\bf Trivial};
\draw[-,=latex, dashed] (5,-0.1) -- (5,20);
\draw[->,>=latex] (0,0) -- (26,0);

\draw[->,>=latex] (0,0) -- (0,26);

\end{tikzpicture}
\end{center}
\caption{$\xi_x$ fluctuations}
\label{fig:xi}
\end{figure}

%
%
%
%
%

\begin{figure}[h!]
\begin{center}
\begin{tikzpicture}[scale=0.3]
\draw (0,25) node[left]{$a$};
\draw (25,0) node[below]{$\kappa$};
\draw (10,0) node[below]{$1$};
\draw (5,0) node[below]{$1/2$};
\draw (0,0) node[left]{$0$};
\draw (0,10) node[left]{$1$};
\draw (0,20) node[left]{$2$};
\fill[light-gray] (0,10) -- (10,20) -- (25,20) -- (25,25) -- (0,25) -- cycle;
\fill[fill=columbiablue, fill opacity=0.5] (0,0) -- (25,0) -- (25,20) -- (10,20) -- (0,10)--cycle;
\draw[-,=latex,gray,ultra thick] (10,20) -- (25, 20) node[midway,below,sloped] {\bf OU};
\draw[-,=latex,cherryblossompink, ultra thick] (0,10) -- (10,20) node[midway,above,sloped] {$2d$-\bf drifted OU};
\node[circle,fill=cherryblossompink,inner sep=0.8mm] at (10,20) {*};
\node[] at (10,20) [above] {};
\node[] at (15.5,10) {\bf Trivial};
\draw[-,=latex, dashed] (10,-0.1) -- (10,10);
\draw[-,=latex, dashed] (10,-0.1) -- (10,20);
\draw[-,=latex, dashed] (5,-0.1) -- (5,15);
\draw[->,>=latex] (0,0) -- (26,0);
\draw[->,>=latex] (0,0) -- (0,26);

A
\end{tikzpicture}
\end{center}
\caption{Joint fluctuations}
\label{fig:volume}
\end{figure}
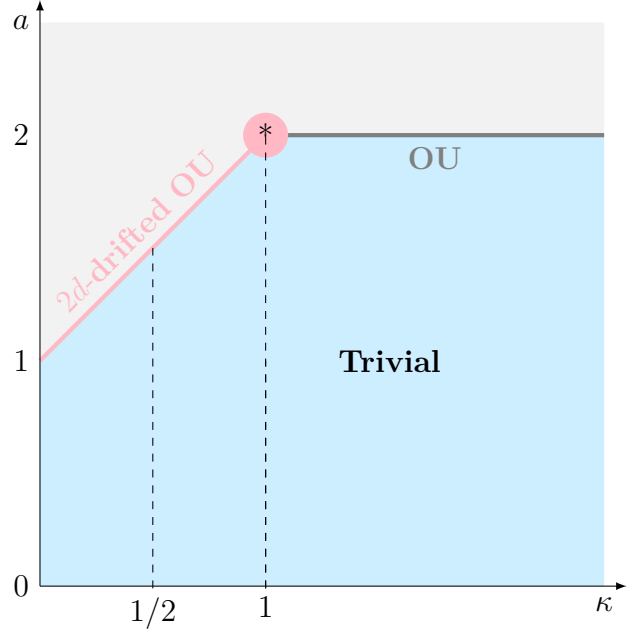

%
%
%
%

\section{Sketch of the proof of the main theorems} \label{sec:sketch}

For any $d \ge 1$ and $n\ge 1$ and any function $u:\TT \to \RR^d$ the discrete gradient $\nabla_n u$ (resp. Laplacian $\Delta_n u $)  is the function defined on $\tfrac{1}{n} \TT_n$ by
\begin{equation*}
\begin{split}
&(\nabla_n u)\,  \big(\tfrac{x}{n}\big)= n\left[ u\big(\tfrac{x+1}{n}\big) - u \big(\tfrac{x}{n}\big)\right], \\
&(\Delta_n u)\,   \big(\tfrac{x}{n}\big)= n^2 \left[ u \big(\tfrac{x+1}{n}\big)+ u\big(\tfrac{x-1}{n}\big) -2 u\big(\tfrac{x}{n}\big)\right], \quad x \in \TT_n.
\end{split}
\end{equation*}
Along the proofs we will use frequently the following bound based on the Cauchy-Schwarz inequality and stationarity of the process. We recall that $\langle \cdot \rangle$ denotes the average with respect to the equilibrium measure $\mu_{\bar \beta, \bar \lambda}$. If $F: [0,T]\times \Omega_n \to \RR$ is a function such that $\int_0^T \langle F^2 (s, \cdot) \rangle \, ds <\infty$ then we have
\begin{equation}
\label{eq:CSs}
\forall t\in [0,T], \quad \EE \bigg[ \Big( \int_0^t F(s, \eta(s))\, ds  \Big)^2\bigg] \le t \int_0^t \left\langle F^2 (s, \cdot) \right\rangle \, ds. 
\end{equation}
Observe that the r.h.s. of \eqref{eq:CSs} is usually easy to compute or estimate since it involves only a static expectation while the l.h.s. involves a dynamical expectation. 

\medskip

It turns out convenient to introduce the \emph{mutual} field $\mc X^n:=\mc Z^n \bigcdot$ defined by 
\begin{align}
  \mathcal{Z}^{n}_t  \bigcdot \vec f &=\cfrac{1}{\sqrt{n}} \sum_{x\in \TT_n}\Big\{
  \big(T^+_{c_n t}f^1 \big)\big(\tfrac{x}{n}\big)  \left(  \xi_x(t\theta(n)-\rho \right)+f^2\big(\tfrac{x}{n}\big)\left( \eta_x(t\theta(n))-v \right)\Big\} \notag \\  &=  \mathcal{Y}^{n}_t  (f^1)+ \mathcal{V}^{n}_t  (f^2) \vphantom{\int} \label{eq:chi_field}
\end{align} 
where $\vec f=(f^1,f^2)^\dagger \in \mc D(\bb T,\bb R^2)$. Then, the fluctuation field $\{\mc X^n_t\; ; \; t\in[0,T]\}$ is an element of the Skorokhod space $\bb D ([0,T], (\bb H_{k}\times \bb H_{k})^\prime)$.

\subsection{Characterization of limit points for the $\mc Z$ field}
\label{sec:char:X-field}

In Section \ref{sec:tightness_chi} we will prove that in a certain range of time scales the sequence of processes $(\mc X^n)_n$ is tight in $\bb D([0,T], (\bb H_k\times \bb H_k)')$, for some $k$. This implies, by Remark \ref{rem:topo_ident}, that $(\mc Z^n)_n$ is also tight in $\bb D([0,T],\bb H_{-k}\times \bb H_{-k})$. Therefore, up to a subsequence, we may assume that the sequences above converge in the respective spaces. The results of this section are restricted to the following range of time scales:
\begin{equation*}
a \le \inf (\kappa +1, 2).
\end{equation*}
In particular, in the regime $\kappa \in (0,1)$ we are not able to study the limit of the sequence $(\mc X^n)_n$ if the parameter $a$ of the time scale $n^a$ is strictly bigger than the \textit{transition line} $a=\kappa+1$.

\medskip

For $\vec f=(f^1, f^2)^\dagger \in\mc D (\bb T, \bb R^2)$ and $t\in [0,T]$ we define
\begin{equation}
\label{eq:Dynkin_eta}
\mc N_t^n\bigcdot \vec f =  \mathcal{Z}^{n}_t  \bigcdot \vec f -  \mathcal{Z}^{n}_0  \bigcdot \vec f -\int_0^t(\partial_s+\theta(n)\mc L  ) \, (\mathcal{Z}^{n}_s \bigcdot \vec f) \, ds.
\end{equation}
By  Dynkin's formula $\{\mc N_t^n\bigcdot \vec f  \; ;\; t\in[0,T]\}$ is a martingale. We have
\begin{align}
(\partial_s+\theta(n)\mc L  ) \,& ( \mathcal{Z}^{n}_s  \bigcdot \vec f)   =\frac{\gamma \theta (n)}{n^{2}} \, \mc Z^n_s \bigcdot \Delta_n \vec f \label{eq:mart_dec_chi1}\\
&+
b \;\frac{\theta (n)\alpha_n}{n^{3/2}}\sum_{x\in \TT_n} \big(\nabla_n f^2 \big)\big(\tfrac{x}{n}\big) \, (\xi_x(s\theta(n))+\xi_{x+1}(s\theta(n))\label{eq:mart_dec_chi2}\\
&-b^2 \; \frac{\theta(n) \alpha_n}{n^{3/2}}\sum_{x\in\bb T_n} \big(\nabla_n T^+_{c_n s} \, f^1\big) \big(\tfrac xn\big) \, \bar{\xi}_x(s\theta(n))\bar{\xi}_{x+1}(s\theta(n)). \label{eq:mart_dec_chi3}\end{align}
Above and in what follows, for a random variable $X$, the random variable $\bar X$ denotes the centered variable $X-\EE[X]$. Recall that $\theta(n)=n^a$ and $\alpha_n=\alpha n^{-\kappa}$. Note that by using \eqref{eq:CSs} it is easy to check that the variance of the time integral of \eqref{eq:mart_dec_chi1} has variance of order $\mathcal{O}(\theta(n)^2 \; n^{-4})$, which vanishes if $a<2$. 
Now,  \eqref{eq:mart_dec_chi2} can be rewritten as
\begin{equation}
\label{eq:cacaboudin}
2b \; \frac{\theta (n)\alpha_n}{n^{3/2}}\sum_{x\in \TT_n} \big(\nabla_n f^2 \big) \big(\tfrac{x}{n}\big) \, \bar{\xi}_x(s\theta(n))+ b \; \frac{\theta (n)\alpha_n}{n^{5/2}}\sum_{x\in \TT_n} \big( \Delta_n f^2 \big) \big(\tfrac{x}{n}\big)\, \bar{\xi}_x(s\theta(n)).
\end{equation}
As above, by \eqref{eq:CSs}, the time integral of the term at the r.h.s.~of last expression has variance of order $\mathcal{O}(\alpha_n^2\; \theta(n)^2\; n^{-4})$ while the remaining  term in \eqref{eq:cacaboudin} can be written as 
\begin{equation*}
2b\; \frac{\theta (n)\alpha_n}{n^{3/2}}\sum_{x\in \bb T_n}\big( \nabla_n f^2\big)\big(\tfrac{x}{n}\big) \bar{\xi}_x(s\theta(n))=2b \; \frac{\theta (n)\alpha_n}{n} \; \mc Y^n_{s}\big(\nabla_n T_{c_n s}^-\,  f^2\big).
\end{equation*}
Note that by \eqref{eq:CSs} the variance of the time integral of the term of last expression is bounded from above by $\mc O(\alpha_n^2 \; \theta(n)^2\; n^{-2}).$ This means that when $a<\kappa +1$ that term does not contribute to the limit. The time integral of  \eqref{eq:mart_dec_chi3} has a variance bounded from above (use again \eqref{eq:CSs}) by 
\begin{equation}
\label{eq:termSBE}
\EE \bigg[ \Big( b^2\frac{\theta(n) \alpha_n}{n^{3/2}} \int_0^t \sum_{x\in\bb T_n} \big(\nabla_n T^+_{c_n s} \, f^1\big) (\tfrac xn) \, \bar{\xi}_x(s\theta(n))\bar{\xi}_{x+1}(s\theta(n))  \, ds\Big)^2\bigg] \lesssim \cfrac{\theta(n)^2 \alpha_n^2}{n^2}
\end{equation}
which vanishes if $a<\kappa+1$. In fact, we will show in Section \ref{sec:char:Y-field} that a less rough estimate than \eqref{eq:CSs} shows, in fact, that the expectation in \eqref{eq:termSBE}  is of order $\mathcal{O}(\alpha_n^2 \; (\theta (n))^{3/2}\; n^{-2})$, so that it goes to $0$ as soon as $a<\tfrac{4}{3}(\kappa +1)$.

\medskip

Let us now study the quadratic variation of the martingale term in \eqref{eq:Dynkin_eta}. By Dynkin's formula, the quadratic variation of the martingale $\mc N_t^n\bigcdot \vec f $ is given by
\begin{align}
\langle\mc N^n\bigcdot \vec f \, \rangle_t& =  \int_0^t \left\{ \theta(n)\mc L (( \mathcal{Z}^{n}_s  \bigcdot \vec f)^2)-2  \theta (n) (\mathcal{Z}^{n}_s  \bigcdot \vec f ) \, \mc L  (\mathcal{Z}^{n}_s  \bigcdot \vec f) \right\}\,  ds \notag \\
&= \gamma \theta (n) \int_0^t \left\{ \mc S (( \mathcal{Z}^{n}_s  \bigcdot \vec f)^2)-2 (\mathcal{Z}^{n}_s  \bigcdot \vec f ) \,  \mc S (  \mathcal{Z}^{n}_s  \bigcdot \vec f) \right\}\,  ds,
\label{eq:qv_chi} \end{align}
and a simple computation shows that last display is equal to
\begin{align}
&\langle\mc N^n  \bigcdot \vec f \, \rangle_t=\frac{\gamma \theta(n)}{n^3} \int_0^t \sum_{x\in\bb T_n}\big(\nabla_n \big(T^+_{c_n s}f^1 \big)\big)^2 \big(\tfrac{x}{n}\big) \,\big[\xi_{x+1}(\theta(n)s)-\xi_x(\theta(n)s)\big]^2\, ds \notag \\
&+\frac{\gamma \theta(n)}{n^3} \int_0^t \sum_{x\in\bb T_n}\big(\nabla_n f^2\big)^2 \big(\tfrac{x}{n}\big) \,\big[\eta_{x+1}(\theta(n)s)-\eta_x(\theta(n)s)\big]^2\, ds \notag \\
&+2\frac{\gamma \theta(n)}{n^3} \int_0^t \sum_{x\in\bb T_n}\big(\nabla_n \big(T^+_{c_n s}f^1 \big)\big) \big(\tfrac{x}{n}\big) \big(\nabla_n f^2\big)\big(\tfrac{x}{n}\big) \,\big[\xi_{x+1}(\theta(n)s)-\xi_x(\theta(n)s)\big] \notag \\
& \hspace{5cm} \times\big[\eta_{x+1}(\theta(n)s)-\eta_x(\theta(n)s)\big] \, ds. \label{eq:qv_mart_chi_field_2}
\end{align}
If $a<2$ then the $\bb L^1$-norm of the quadratic variation of $\mc N^n  \bigcdot \vec f$ vanishes as $n\to+\infty$. If $a=2$ then we have that
\begin{equation*}
\begin{split}
\bb E \left[  \langle\mc N^n  \bigcdot \vec f \, \rangle_t \right] &=2 \gamma \tau^2 \; \cfrac{1}{n}  \sum_{x\in\bb T_n} \int_0^t \big(\nabla_n \big(T^+_{c_n s}f^1 \big)\big)^2\big(\tfrac{x}{n}\big) \, ds \\& \qquad +  2 \gamma \sigma^2 t \; \cfrac{1}{n} \sum_{x\in\bb T_n}\big(\nabla_n f^2\big)^2 \big(\tfrac{x}{n}\big)\\
& \qquad + 4\gamma \delta \int_0^t \cfrac{1}{n} \sum_{x\in\bb T_n}\big(\nabla_n \big(T^+_{c_n s}f^1 \big)\big) \big(\tfrac{x}{n}\big) \big(\nabla_n f^2\big) \big(\tfrac{x}{n}\big) \, ds. 
\end{split}
\end{equation*}
Recall that $c_n = 2b^2 \rho \alpha n^{a-\kappa -1}=2b^2 \rho \alpha n^{1-\kappa}$. It follows that 
\begin{itemize}
\item If $\kappa <1$ then $c_n \to \infty$ and therefore
\begin{equation*}
\lim_{n \to \infty} \bb E \left[ \langle\mc N^n  \bigcdot \vec f \, \rangle_t\right] = 2\gamma \tau^2  t\int_\TT ( \nabla f^1 )^2 (y) \, dy + 2\gamma \sigma^2 t  \int_\TT ( \nabla f^2)^2 (y) \, dy. 
\end{equation*}
\item If $\kappa =1$ then $c_n= c:=2 b^2 \rho \alpha$ and therefore
\begin{equation*}
\begin{split}
\lim_{n \to \infty}  \bb E \left[ \langle\mc N^n  \bigcdot \vec f \, \rangle_t\right]  &= 2\gamma \tau^2 t \int_\TT ( \nabla f^1 )^2 (y) \, dy + 2\gamma \sigma^2 t \int_\TT ( \nabla f^2 )^2 (y) \, dy\\
& \quad + 4\gamma \delta \int_\TT c^{-1} \big(f^1(y+ct)- f^1 (y)\big) \, ( \nabla f^2) (y) \, dy. \end{split}
\end{equation*}
Note that last expression is equal to 
\begin{equation*}
\begin{split}
\lim_{n \to \infty}  \bb E \left[ \langle\mc N^n  \bigcdot \vec f \, \rangle_t\right] &= 2\gamma \tau^2 t \int_\TT ( \nabla f^1 )^2 (y) \, dy + 2\gamma \sigma^2 t \int_\TT ( \nabla f^2 )^2 (y) \, dy\\
& \quad + 4\gamma \delta \int_0^t \int_\TT \big(\nabla T^+_{cs}f^1 \big)(y) \, ( \nabla f^2 ) (y) \, dy ds. 
\end{split}
\end{equation*}

\item If $\kappa >1$ then $c_n \to 0$ and therefore
\begin{equation*}
\begin{split}
\lim_{n \to \infty}  \bb E \left[ \langle\mc N^n  \bigcdot \vec f \, \rangle_t\right] &= 2\gamma \tau^2 t \int_\TT ( \nabla f^1 )^2 (y) \, dy + 2\gamma \sigma^2 t \int_\TT ( \nabla f^2 )^2 (y) \, dy\\
&\quad + 4\gamma \delta \int_\TT (\nabla f^1 ) (y)  \, ( \nabla f^2 ) (y) \, dy. 
\end{split}
\end{equation*}

\end{itemize}

We have then 
\begin{itemize}
\item If $\kappa>1$ and $a=2$ then
\[
\mc N^n_t \bigcdot \vec f = \mathcal{Z}^{n}_t  \bigcdot \vec f - \mathcal{Z}^{n}_0  \bigcdot \vec f+\gamma \int_0^t \mc Z^n_s \bigcdot \Delta_n \vec f \; ds
\]
plus terms which vanish as $n\to\infty$ in the $\bb L^2$-norm.  Moreover, the quadratic variation of the martingale satisfies:
\begin{equation*}
\lim_{n\to+\infty}\bb E [\langle\mc N^n  \bigcdot \vec f \, \rangle_t]= 2t \tau^2\gamma\|\nabla f^1 \|_0^2+2t \sigma^2\gamma \|\nabla f^2 \|_0^2 + 4\gamma \delta t \langle \nabla f^1 , \nabla f^2  \rangle_0.
\end{equation*} 
Then, the limiting field $\{\mc Z_t\; ; \; t\in [0,T]\}$ satisfies
\[
\mc N_t \bigcdot \vec f = \mathcal{Z}_t  \bigcdot \vec f - \mathcal{Z}_0  \bigcdot \vec f+\gamma \int_0^t \mc Z_s \bigcdot \Delta_n \vec f \; ds
\]
 so that it is a solution of \eqref{eq:ou_ini} as given in Definition \ref{def:uni_OU}, with 
 \begin{equation}
{\mf A} =
\left( \begin{array}{cc}
 \gamma & 0 \\ 0&\gamma \end{array}\right), \qquad
 \label{eq:matrix_D_ini}
\mf D=\begin{pmatrix}
\tau^2 & \delta \\ \delta & \sigma^2 \end{pmatrix}, \qquad \mf C=\gamma \mf D. 
\end{equation}

\item If $\kappa>a-1$ and $a<2$ then 
\[
\mc N^n_t  \bigcdot \vec f= \mathcal{Z}^{n}_t   \bigcdot \vec f - \mathcal{Z}^{n}_0  \bigcdot \vec f
\]
plus terms which vanish as $n\to\infty$ in in the $\bb L^2$-norm. 
Moreover, the quadratic variation of the martingale satisfies:
\begin{equation*}
\lim_{n\to+\infty}\bb E[\langle\mc N^n \bigcdot \vec f \rangle_t]= 0.
\end{equation*}
Then $\{\mc Z_t\; ; \; t\in [0,T]\}$ has a trivial evolution given by: 
\begin{equation} 
\label{eq:trivial_vol}
\mc Z_t  \bigcdot \vec f =\mc Z_0  \bigcdot \vec f, \quad\textrm{so that }  d\mc Z_t=0.
\end{equation}
\item If $\kappa=a-1$ and $a<2$ then 
\begin{equation}
\mc N^n_t  \bigcdot \vec f = \mathcal{Z}^{n}_t   \bigcdot \vec f - \mathcal{Z}^{n}_0  \bigcdot \vec f -2b \tfrac{\theta (n)\alpha_n}{n} \int_0^t\mc Z^n_{s}\bigcdot \big(\nabla_n T_{c_n s}^- \; f^2\, , \, 0 \big)^\dagger \, ds
\end{equation}
is equal to 
\begin{equation}
\label{eq:new_exp}
\mc N^n_t  \bigcdot \vec f = \mathcal{Z}^{n}_t   \bigcdot \vec f - \mathcal{Z}^{n}_0  \bigcdot \vec f -2b\alpha\int_0^t \mc Z^n_{s}\bigcdot \big (\nabla_n T_{c s}^- \; f^2\, , \, 0 \big)^\dagger \, ds
\end{equation}
where we recall that $c= 2b^2 \rho \alpha$. Moreover, the quadratic variation of the martingale satisfies:
\begin{equation*}
\lim_{n\to+\infty}\bb E [\langle\mc N^n  \bigcdot \vec f \, \rangle_t]= 0.
\end{equation*}
Therefore, by  Lemma \ref{lem:conv-int-y} (applied with $\gamma=0$), we have that the limiting field $\{\mc Z_t\; ; \; t\in [0,T]\}$ satisfies
\begin{equation*}
\mathcal{Z}_t   \bigcdot \vec f  = \mathcal{Z}_0  \bigcdot \vec f  -2b\alpha\int_0^t\mc Z_{s}\bigcdot  \big(\nabla T_{cs}^- f^2\, ,\, 0\big)^\dagger \, ds.
\end{equation*}
Therefore, $\{\mc Z_t\; ; \; t\in [0,T]\}$ is the solution of the trivial transport equation as defined in Remark \ref{rem:trivial} with $\theta=-2\alpha b$, \textit{i.e.}
\begin{equation}\label{eq:op_1}
{\mf L}_t =
\left( \begin{array}{cc}
 0 & 0 \\ -2\alpha b\nabla T_{ct}^+&0 \end{array}\right),
\end{equation} and initial condition a centered Gaussian field with covariance matrix $\mf D$ as given in \eqref{eq:matrix_D_ini}.

\item If  $\kappa=1$ and $a=2$ then we have
\begin{equation}
\label{eq:new_exp_diff}
\begin{split}
\mc N^n_t  \bigcdot \vec f= \mathcal{Z}^{n}_t   \bigcdot \vec f - \mathcal{Z}^{n}_0  \bigcdot \vec f &+\gamma \int_0^t \mc Z^n_s \bigcdot \Delta_n \vec f \, ds \\ & -2b\alpha\int_0^{t}\mc Y^n_{s}(\nabla_n T_{c s}^-f^2)\,  ds,
\end{split} \end{equation}
which is equal to 
\begin{equation}
\label{eq:new_exp_diff_1}
\begin{split}
\mc N^n_t  \bigcdot \vec f = \mathcal{Z}^{n}_t   \bigcdot \vec f - \mathcal{Z}^{n}_0  \bigcdot \vec f &+\gamma \int_0^t \mc Z^n_s \bigcdot \Delta_n \vec f \, ds \\ & -2b\alpha\int_0^{t}\mc Z^n_{s}\bigcdot (\nabla_n T_{c s}^-f^2,0)^\dagger \, ds.
\end{split}
\end{equation}
Moreover, the quadratic variation of the martingale satisfies:
\begin{multline*}
\lim_{n\to+\infty}\bb E[\langle\mc N^n  \bigcdot \vec f \, \rangle_t] \\ = 2t \tau^2\gamma\|\nabla f^1\|_0^2+2t \sigma^2\gamma\|\nabla f^2 \|_0^2 + 4\gamma \delta c^{-1} \, \big\langle T^+_{ct} f^1 -f^1 \, ,\,  \nabla f^2 \big\rangle_0.
\end{multline*} 
Then, by Lemma \ref{lem:conv-int-y} (applied with $\alpha=0$) and Corollary \ref{martconv}, we conclude that the sequence $\{\mc Z^n_t\; ;\; t\in [0,T]\}_n$ converges to the solution of equation \eqref{eq:2DOU} as given in Definition \ref{def:ed}  with $\mf L_t$ as in \eqref{eq:operator_L_t} with $\lambda=\mu=\gamma$ and $\theta=0$:
\begin{equation}\label{eq:op_2}
{\mf L}_t=\gamma
\left( \begin{array}{cc}
  \Delta & 0 \\ 0 & \Delta \end{array}\right),
\end{equation}
and ${\mf C}_t $ as in \eqref{eq:C_t} with $\mf a=2\gamma\tau^2$, $\mf b=2\gamma\delta$ and $\mf d= 2\gamma\sigma^2$, \textit{i.e.}
\begin{equation}\label{eq:C_t_new}
{\mf C}_t =\gamma
\left( 
\begin{array}{cc}
 2\tau^2 & 2\delta T^-_{ct}\\
2\delta  T^+_{ct} & 2\sigma^2 
\end{array}
\right),
\end{equation}
 and $\mf D$ as in \eqref{eq:matrix_D_ini}.

\end{itemize}

\begin{lemma}
\label{lem:conv-int-y}
Let $\gamma, \alpha \ge 0$ and fix $\vec f =(f^1,f^2)^\dagger \in \bb H_k \times \bb H_k$.  Assume that the sequence of processes $\{\mc Z_t^n \; ;\; t\in [0,T]\}_n$ converges in law, as a process of $\bb D ([0,T]\, ,\, ({\bb  H}_{k} \times {\bb H}_{k})^\prime)$, to $\{\mc Z_t \; ;\; t\in [0,T]\}$. Then, for any $t\in [0,T]$, the sequence of random variables
\begin{multline}
\label{eq:tresor}
\big\{ \texttt{Z}_n \big\}_n\\
 :=\Big\{  \mathcal{Z}^{n}_t   \bigcdot \vec f - \mathcal{Z}^{n}_0  \bigcdot \vec f +\gamma \int_0^t \mc Z^n_s \bigcdot \Delta_n \vec f \, ds  -2b\alpha\int_0^{t}\mc Z^n_{s}\bigcdot (\nabla_n T_{c s}^-f^2,0)^\dagger \, ds\Big\}_n 
\end{multline}
converges, as $n\to +\infty$, to the random variable
\begin{equation*}
\texttt{Z}:= \mathcal{Z}_t   \bigcdot \vec f - \mathcal{Z}_0  \bigcdot \vec f +\gamma \int_0^t \mc Z_s \bigcdot \Delta \vec f \, ds -2b\alpha\int_0^{t}\mc Z_{s}\bigcdot (\nabla T_{c s}^-f^2,0)^\dagger \, ds.
\end{equation*}
\end{lemma}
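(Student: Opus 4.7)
The plan is to split $\texttt{Z}_n - \texttt{Z}$ into four pieces---the boundary values at $s=0$ and $s=t$, the Laplacian integral, and the drift integral---and show each vanishes in probability. The main obstacle lies in the drift integral, since the test function $\nabla_n T_{cs}^- f^2$ depends simultaneously on $s$ and on $n$. The two principal tools will be (i) Skorokhod's representation theorem, to upgrade convergence in law in the ambient space to almost sure convergence, and (ii) the static $\bb L^2$-estimate \eqref{eq:CSs}, which, together with the stationarity of the dynamics, absorbs the discretization errors $\Delta_n - \Delta$ and $\nabla_n - \nabla$.

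Invoking Skorokhod, we work on a single probability space on which $\mc Z^n \to \mc Z$ almost surely. Since the limit $\mc Z$ has continuous paths (by Proposition \ref{prop:uni_OU}, Proposition \ref{prop:ed}, or Remark \ref{rem:trivial}, according to the regime), this convergence translates, for every fixed test function $\vec g \in \bb H_k \times \bb H_k$, into the uniform convergence
\begin{equation*}
\sup_{s \in [0,T]} \big| \mc Z^n_s \bigcdot \vec g - \mc Z_s \bigcdot \vec g \big| \longrightarrow 0 \qquad \text{a.s.\ as } n \to \infty.
\end{equation*}
The boundary differences $\mc Z^n_t \bigcdot \vec f - \mc Z_t \bigcdot \vec f$ and $\mc Z^n_0 \bigcdot \vec f - \mc Z_0 \bigcdot \vec f$ then vanish almost surely. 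For the Laplacian integral, decompose $\Delta_n \vec f = \Delta \vec f + (\Delta_n - \Delta)\vec f$: the main part $\int_0^t \mc Z^n_s \bigcdot \Delta \vec f \, ds$ converges almost surely to $\int_0^t \mc Z_s \bigcdot \Delta \vec f \, ds$ by the above display and bounded convergence, while the discretization error is controlled in $\bb L^2$ via \eqref{eq:CSs} and stationarity,
\begin{equation*}
\EE\bigg[\Big(\int_0^t \mc Z^n_s \bigcdot (\Delta_n - \Delta) \vec f \, ds\Big)^2\bigg] \lesssim t^2 \, \|(\Delta_n - \Delta) \vec f\|_0^2 \longrightarrow 0,
\end{equation*}
by the standard consistency estimate for the discrete Laplacian on smooth $\vec f$.

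The drift integral requires an additional approximation in time. Setting $\vec g^n_s := (\nabla_n T_{cs}^- f^2, 0)^\dagger$ and $\vec g_s := (\nabla T_{cs}^- f^2, 0)^\dagger$, the difference $\vec g^n_s - \vec g_s$ produces an $\bb L^2$-error handled exactly as above, since $\sup_{s \in [0,T]} \|\vec g^n_s - \vec g_s\|_0 \to 0$. The genuinely delicate object is therefore $I^n := \int_0^t \mc Z^n_s \bigcdot \vec g_s \, ds$, where $\vec g_s$ is $n$-independent but continuously time-dependent. Since $s \mapsto \vec g_s$ is continuous from $[0,T]$ into $(\bb L^2(\TT))^2$, we approximate it by the piecewise constant map $\vec g^{(N)}_s := \vec g_{kt/N}$ on $s \in [kt/N, (k+1)t/N)$, for $k = 0, \dots, N-1$. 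For each fixed $N$, the integral $\int_0^t \mc Z^n_s \bigcdot \vec g^{(N)}_s \, ds$ reduces to a finite sum of integrals with fixed test functions, and hence converges almost surely, as $n \to \infty$, to the analogous expression with $\mc Z$ in place of $\mc Z^n$, by the first display and bounded convergence. The remainder $\int_0^t \mc Z^n_s \bigcdot (\vec g_s - \vec g^{(N)}_s) \, ds$ is bounded in $\bb L^2$ uniformly in $n$ by $t^2 \sup_s \|\vec g_s - \vec g^{(N)}_s\|_0^2$, via \eqref{eq:CSs} and stationarity, and vanishes as $N \to \infty$ by the uniform continuity of $s \mapsto \vec g_s$. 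A standard $\varepsilon/3$ argument (first $n \to \infty$, then $N \to \infty$) concludes.
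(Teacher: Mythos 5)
Your proof is correct and follows essentially the same route as the paper's: both arguments replace the discrete operators by their continuous counterparts via \eqref{eq:CSs} and stationarity, then approximate the moving test function $s\mapsto \nabla T^-_{cs}f^2$ by a piecewise-constant-in-time function, control the resulting error in $\bb L^2$ uniformly in $n$, and pass to the limit in the discretized functional. The only (cosmetic) difference is that you transfer the convergence in law via Skorokhod representation and almost sure uniform convergence to the continuous limit, whereas the paper invokes the continuity of the discretized functional on the path space directly.
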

\begin{proof}
By performing a Taylor expansion on $\vec f$ we can replace in the expression (\ref{eq:tresor}), the discrete gradient and Laplacian by the continuous gradient and Laplacian, up to terms which vanish in $\bb L^2$ (use (\ref{eq:CSs})):
\begin{equation*}
\lim_{n \to \infty} \bb E \left[ \left( \texttt{Z}_n -\texttt{Z}_n^\prime \right)^2\right]=0
\end{equation*}
where 
\begin{equation}
\label{eq:tresor2}
\texttt{Z}_n^\prime = \mathcal{Z}^{n}_t   \bigcdot \vec f - \mathcal{Z}^{n}_0  \bigcdot \vec f +\gamma \int_0^t \mc Z^n_s \bigcdot \Delta \vec f \, ds  -2b\alpha\int_0^{t}\mc Z^n_{s}\bigcdot (\nabla  T_{c s}^-f^2,0)^\dagger \, ds.
\end{equation}
Let us first analyse the last term on the l.h.s.~of (\ref{eq:tresor2}). We split the time integral on $[0,t]$ in a sum of time integrals on intervals of size $th$, $h>0$ being small, as:
\begin{equation}
\int_0^{t}\mc Z^n_{s}\bigcdot (\nabla  T_{c s}^-f^2,0)^\dagger \, dr = \int_0^t\mc Y^n_{s}(\nabla T_{c s}^- \, f^2)\, ds= \sum_{k=0}^{1/h} \int_{{kth}}^{(k+1)t h}\mc Y^n_{s}(\nabla T_{c s}^- \,f^2)\, ds.
\end{equation}
Without loss of generality we can assume that $1/h$ is an integer. Now in each integral of the r.h.s.~we can sum and subtract $\nabla T_{c t k h }^- \, f^2$ inside, so that, by linearity of $\mc Y^n_s$,  last display is equal to
\begin{equation}
 \sum_{k=0}^{1/h} \int_{{k th}}^{(k+1) th}\mc Y^n_{s} (\nabla T_{c s}^-\,  f^2 -\nabla T_{c t k h}^- \, f^2)\, ds\; + \; \sum_{k=0}^{1/h} \int_{{k th}}^{(k+1) t h}\mc Y^n_{s}(\nabla T_{c t k h}^- \, f^2)\, ds.
\end{equation}
Now we estimate the $\mathbb L^2$-norm of the term at the l.h.s.~of last display.  
From Minkowski's inequality and \eqref{eq:CSs},  we have that 
\begin{equation*}\begin{split}
 &\sqrt{\bb E \bigg[\Big(\sum_{k=0}^{1/h} \int_{{k th}}^{(k+1)t h}\mc Y^n_{s}\big(\nabla T_{cs}^- \, f^2 -\nabla T_{c k th}^- \, f^2  \big) \, ds\Big)^2\bigg]}\\
 &\leq \sum_{k=0}^{1/h}\sqrt{\bb E \bigg[\Big(\int_{{k th}}^{(k+1)t h}\mc Y^n_{s}\big(\nabla T_{c s}^- \, f^2 -\nabla T_{c  k t h}^- \, f^2 \big)\, ds\Big)^2\bigg]}\\
 & \leq \sum_{k=0}^{1/h}\sqrt{\bb E \left[ th \; \int_{{k th}}^{(k+1)t h}\left(\mc Y^n_{s}\big(\nabla T_{c s}^- \, f^2 -\nabla T_{c k th}^- \, f^2 \big)\right)^2 \, ds\right]}\\
 &= \sum_{k=0}^{1/h}\sqrt{th}\; \sqrt{\int_{k th}^{(k+1)t h}\EE  \left[ \left(\mc Y^n_{0}\big(\nabla T_{c s}^- \, f^2 -\nabla T_{c k th}^- \, f^2 \big)\right)^2\right]  \, ds}\\
 &\lesssim \sum_{k=0}^{1/h}\sqrt{th}\; \left(\frac{1}{n}\sum_{x\in\bb T_n} \int_{k th}^{( k+1)th}\big(\nabla T_{c s}^-\,  f^2 -\nabla T_{c k th}^- f^2 \big)^2 \big(\tfrac{x}{n}\big) \, ds \right)^{\frac12},
\end{split}\end{equation*}
where the last inequality uses an explicit computation with the initial distribution. 
By Taylor expansion on the function $f^2$, we can bound the last term from above by
 \begin{equation}
 C(f^2,T)\sum_{k=0}^{1/h} (th)^2\leq C(f^2,T) h,
\end{equation}
which vanishes as $h\to 0$. From the last computations, we are able to rewrite \eqref{eq:new_exp}
 as
\begin{equation*}
\begin{split}
\int_0^t\mc Y^n_{s}(\nabla T_{c s}^- \, f^2)\, ds\; & =\; \sum_{k=0}^{1/h}\int_{{k th}}^{(k+1)t h}\mc Y^n_{s}(\nabla T_{c k th}^- \, f^2)\, ds \; + \varepsilon_{n,h} \\
&\; =\; \sum_{k=0}^{1/h}\int_{{k th}}^{(k+1)t h}\mc Z^n_{s} \bigcdot (\nabla T_{c k th}^- \, f^2\, , \, 0)^\dagger \, ds \; + \varepsilon_{n,h}
\end{split}
\end{equation*}
where the term $\varepsilon_{n,h}$ vanishes as $n\to\infty$ and then $h\to 0$ in $\bb L^2$. Similarly we have that
\begin{equation}
\int_0^t\mc Y_{s}(\nabla T_{c s}^- \, f^2)\, ds\;  =\; \sum_{k=0}^{1/h}\int_{{k th}}^{(k+1)t h}\mc Z_{s}\bigcdot (\nabla T_{c k th}^- \, f^2\, , \, 0)^\dagger \, ds \; + \varepsilon_{h}
\end{equation}
where the term $\varepsilon_{h}$ vanishes as $h\to 0$ in $\bb L^2$. Therefore it is sufficient to prove that for each fixed $h>0$, the sequence of random variables 
\begin{equation*}
\left\{  \mathcal{Z}^{n}_t \bigcdot \vec f- \mathcal{Z}^{n}_0 \bigcdot \vec f+ \gamma \int_0^t \mc Z^n_s \bigcdot \Delta \vec f ds -2b\alpha \sum_{k=0}^{1/h}\int_{{k th}}^{(k+1)t h}\mc Z^n_{s}\bigcdot (\nabla T_{c k th}^- \, f^2\, , \, 0)^\dagger \, ds  \right\}_{n} 
\end{equation*}
converges in law to the random variable
\begin{equation*}
\mathcal{Z}_t \bigcdot \vec f- \mathcal{Z}_0 \bigcdot \vec f+ \gamma \int_0^t \mc Z_s \bigcdot \Delta \vec f ds -2b\alpha \sum_{k=0}^{1/h}\int_{{k th}}^{(k+1)t h}\mc Z_{s}\bigcdot (\nabla T_{c k th}^- \, f^2\, , \, 0)^\dagger \, ds.
\end{equation*}
For a fixed $h>0$ and a fixed $t \in [0,T]$, the application 
\begin{equation*}
\begin{split}
\mc Z \in {\bb D} ([0,T], ({\bb H}_{k} \times{\bb H}_{k})^\prime ) \mapsto &\; \mathcal{Z}_t  \bigcdot \vec f - \mathcal{Z}_0 \bigcdot \vec f + \gamma \int_0^t \mc Z_s \bigcdot \Delta \vec f ds\\
& -2b\alpha \sum_{k=0}^{1/h}\int_{{k th}}^{(k+1)t h}\mc Z_{s} \bigcdot (\nabla T_{c k th}^- \, f^2 , 0)^\dagger \, ds \in \RR
\end{split}
\end{equation*} 
is continuous. Therefore the result becomes a trivial consequence of the assumption.
%
%
\end{proof}

\subsection{Characterization of limit points for the $\mc Y$ field}
\label{sec:char:Y-field}
The results of the previous section are restricted to the range of time scales $a<\inf(\kappa+1,2)$. In this section, we show we can go beyond this range but only for the sequence $(\mc Y^n)_n$. In the range of time scales considered below we show in Section \ref{sec:tightness_xi} that the sequence $( \mc Y^n)_n$ is tight. Therefore we may assume (up to a subsequence) that it is converging to a process $\mc Y$. The key difference with the previous section is that now the term (\ref{eq:termSBE}) will be able to contribute.

\medskip

From Dynkin's formula, for $f\in\mathcal D(\bb T)$ we have that:
\begin{equation}\label{eq:Dynkin_xi}
\mc M_t^n(f)=  \mathcal{Y}^{n}_t  (f) -  \mathcal{Y}^{n}_0  (f)-\int_0^t{(\partial_s+\theta(n)\mc L  ) \mathcal{Y}^{n}_s  (f) }ds  
\end{equation}
is a martingale, where 
\begin{align}
(\partial_s+\theta(n)\mc L  )  \mathcal{Y}^{n}_s  (f) 
&=\frac{\gamma\theta(n)}{n^2}\mc Y^n_s(\Delta_n f)\notag\\& \quad - b^2\frac{\theta(n)}{n^{3/2}}\alpha_n\sum_{x \in \bb T_n} \big(\nabla_n T^+_{c_n s}f\big) \, (\tfrac{x}{n})\, \xi_x(s\theta(n))\xi_{x+1}(s\theta(n))\notag\\
&\quad +2b^2\rho \frac{\theta(n)}{n^{3/2}} \alpha_n\sum_{x\in\bb T_n}  \nabla\big(T^+_{c_n s}f \big)(\tfrac{x}{n})\, \xi_x(s\theta(n)).\label{eq:mart_dec_xi}
\end{align}
Now we can sum and subtract terms, perform a summation by parts and a Taylor expansion on $T^+_{c_n s}f$ to write the time integral of the r.h.s.~of \eqref{eq:mart_dec_xi} as 
\begin{align*}
&\frac{\gamma\theta(n)}{n^2}\int_0^t \mc Y^n_s(\Delta_n f)\, ds\\
&-b^2\frac{\theta(n)}{n^{3/2}}\alpha_n \int_0^t \sum_{x\in\bb T_n} \big(\nabla_n T^+_{c_n s} \, f \big) \big(\tfrac{x}{n}\big)\Big[\xi_x(s\theta(n))\xi_{x+1}(s\theta(n))-\rho\xi_x(s\theta(n)) \Big.\\
&\hspace{8cm} \Big.-\rho\xi_{x+1}(s\theta(n))\Big] \, ds,
\end{align*}
plus a term whose variance is $\mathcal{O}(\theta(n)^2\;\alpha_n^2 \; n^{-4})$.
Adding the constant $\rho^2$ above, which we can do since the sum of the discrete gradients vanishes on the periodic lattice: $\sum_{x\in\bb T_n} (\nabla_n T^+_{c_n s}f ) (\tfrac{x}{n})=0$,  we rewrite the last expression as 
\begin{align*}
&\frac{\gamma\theta(n)}{n^2} \int_0^t \mc Y^n_s(\Delta_n f)\, ds\\
& -b^2\frac{\theta(n)}{n^{3/2}}\alpha_n \int_0^t \sum_{x\in\bb T_n} \big( \nabla_n T^+_{c_n s}f \big)\big(\tfrac xn\big)\,  \bar{\xi}_x(s\theta(n))\bar{\xi}_{x+1}(s\theta(n))\, ds.
\end{align*}
Then the martingale decomposition for the field $\mc Y^n_t$ defined in \eqref{eq:xi_field} is given by
\begin{align}
\label{eq:mart_dec_xi_field_2_longer1}
\mc M_t^n(f)& =  \mathcal{Y}^{n}_t  (f) - 
 \mathcal{Y}^{n}_0  (f)-\int_0^t  \frac{\gamma\theta(n)}{n^2}\mc Y^n_s(\Delta_n f)\, ds\\
&\quad + \int_0^tb^2\frac{\theta(n)}{n^{3/2}}\alpha_n\sum_{x\in\bb T_n} \big(\nabla_n T^+_{c_n s}f \big) \big(\tfrac xn\big)  \, \bar{\xi}_x(s\theta(n))\bar{\xi}_{x+1}(s\theta(n))\, ds. \label{eq:mart_dec_xi_field_2_longer}
\end{align}
Observe that by using the bound (\ref{eq:CSs}), the last term  \eqref{eq:mart_dec_xi_field_2_longer} has variance of order at most $\alpha_n^2\; (\theta(n))^{2} \; n^{-2}$. This bound is not sharp and can be improved by a $H_{-1}$ estimate. From Theorem 4 of \cite{BG} the term \eqref{eq:mart_dec_xi_field_2_longer} has variance of order at most $\alpha_n^2 \; (\theta(n))^{3/2} \; n^{-2}$. Indeed, by looking into the proof of Theorem 4 in \cite{BG}, for a  function $\psi:{\bb R_+}\times \bb T\rightarrow{\bb R}$, it holds that if $t\le T$, 
\begin{equation}\label{eq:BG_bound}
\bb E\Big[ \Big(\int_0^t\sum_{x\in\bb T_n} \psi  (s,\tfrac xn)  \, \bar{\xi}_x(s\theta(n))\bar{\xi}_{x+1}(s\theta(n))\, ds\Big)^2\Big]\lesssim \frac{n}{\sqrt {\theta(n)}}\int_0^t \|\psi(s, \cdot) \|_{2,n}^2\, ds
\end{equation} 
where 
\begin{equation}\label{vsummable}
 \|\psi(s, \cdot) \|_{2,n}^2:=\frac{1}{n}\sum_{x\in\mathbb{T}_n}\psi^2 \big(s, \tfrac{x}{n}\big).
\end{equation}
From this we easily get the last bound. 
Therefore if $a<\frac{4}{3}(\kappa+1)$ the  $\bb L^2$-norm of the  last term \eqref{eq:mart_dec_xi_field_2_longer} vanishes as $n\to+\infty$. Moreover, if $a<2$ the $\bb L^2$-norm of the integral term at the r.h.s.~of \eqref{eq:mart_dec_xi_field_2_longer1} vanishes as $n\to+\infty$ (by the rough bound provided by \eqref{eq:CSs}). 

\medskip

Let us now study the martingale $\mathcal{M}^n$ appearing in \eqref{eq:mart_dec_xi_field_2_longer1}--\eqref{eq:mart_dec_xi_field_2_longer}. By Dynkin's formula, the quadratic variation of the martingale is given by
\begin{equation}\label{eq:qv_xi}
\langle\mc M^n(f)\rangle_t=  \int_0^t \left\{ \theta(n)\mc L(\mc Y_s^n(f))^2-2\mc Y_s^n(f)\theta(n)\mc L \mc Y_s^n(f)\right\}\,  ds, 
\end{equation}
and a simple computation shows that last display is equal to
\begin{equation}\label{eq:qv_mart_xi_field}
\langle\mc M^n(f)\rangle_t= \gamma  \int_0^t \frac{\theta(n)}{n}\sum_{x\in\bb T}\big(f(\tfrac {x+1}{n})-f(\tfrac{x}{n})\big)^2\big(\xi_{x+1}(\theta(n)s)-\xi_x(\theta(n)s)\big)^2ds.
\end{equation}
Recall that  $\theta(n)=n^a$ and $\alpha_n=\alpha n^{-\kappa}$. 
\begin{itemize}
\item If $a=2$ and $\kappa>\frac{3}{4}a-1$ then 
\begin{equation}\label{eq:mart_dec_xi_field_2_a=2}
\mc M_t^n(f)=  \mathcal{Y}^{n}_t  (f) -  \mathcal{Y}^{n}_0  (f)-\gamma \int_0^t  \mc Y^n_s(\Delta_n f)ds 
\end{equation}
plus a term that vanishes in $\bb L^2$ as $n\to+\infty$. 
Moreover, the quadratic variation of the martingale satisfies:
\begin{equation}\label{eq:qv_mart_xi_fieldlim}
\lim_{n\to+\infty}\bb E[\langle\mc M^n(f)\rangle_t]= 2t  \gamma \tau \|\nabla f\|_0^2.
\end{equation}
Then $(\mc Y^n)_n$ converges to the solution of the Ornstein Uhlenbeck equation: \begin{equation}\label{eq:ou}
d\mc Y_t=\gamma\Delta \mc Y_t dt+\sqrt {2\gamma\tau}\nabla d\mc B_t.
\end{equation}

\medskip
\item If $a<2$ and $\kappa>\frac{3}{4}a-1$ then 
\begin{equation}
\mc M_t^n(f)=  \mathcal{Y}^{n}_t  (f) - \mathcal{Y}^{n}_0  (f)
\end{equation}
plus a term that vanishes in $\bb L^2$ as $n\to+\infty$. Moreover, the quadratic variation of the martingale satisfies:
\begin{equation*}
\lim_{n\to+\infty}\bb E[\langle\mc M^n(f)\rangle_t]= 0.
\end{equation*}
 Then $\mc Y$ has a trivial evolution given by: 
\begin{equation}
\label{eq:trivial}
\mc Y_t(f)=\mc Y_0(f), \quad\textrm{so that } \quad d\mc Y_t=0.
\end{equation}
\medskip

\item If $a=2$ and $\kappa=\frac{3}{4}a-1=\frac12$ then 
\begin{equation}\begin{split}\label{eq:mart_dec_xi_SBE}
\mc M_t^n(f)=  \mathcal{Y}^{n}_t  (f) - & \mathcal{Y}^{n}_0  (f)-\gamma \int_0^t \mc Y^n_s(\Delta_n f)ds\\
+&b^2\alpha \int_0^t\sum_{x\in\bb T_n} \big(\nabla_n T^+_{c_n s}\, f \big)\big(\tfrac xn\big)  \bar{\xi}_x(sn^2)\bar{\xi}_{x+1}(sn^2)ds. \end{split}
\end{equation}
plus a term which vanishes in $\bb L^2$ as $n\to+\infty$. 
Now we recall from  \cite{GJSim} a second-order Boltzmann-Gibbs principle which is needed in order to close the last term at the r.h.s.~of last expression in terms of the fluctuation field $\mc Y^n$. 

\begin{theorem}[Second-order Boltzmann-Gibbs principle]\label{theo:BG}
Fix a function $\psi:\bb R_+\times\mathbb{T}_n\to{\mathbb{R}}$ such that $$\int_0^t \|\psi(s, \cdot) \|_{2,n}^2\, ds<\infty.$$
For any $t \in [0,T]$, any positive integer $n$ and any $\varepsilon \in (0,1)$, it holds that:
\begin{multline}
\label{eq:BGexpo}
\mathbb{E}\Big[\Big(\int_{0}^t \sum_{x\in\mathbb{T}_n} \psi(s,\tfrac xn)\Big\{\bar{\xi}_{x}(sn^{2})\bar{\xi}_{x+1}(sn^{2})-\big(\vec{\xi}_{x}^{\varepsilon n}(sn^{2})\big)^2 +\frac{\tau^2}{\varepsilon n}\Big\}\; ds \Big)^2\Big]\\
\lesssim \int_0^t \|\psi(s, \cdot) \|_{2,n}^2\, ds\Big\{\varepsilon+\frac{t}{\varepsilon^2n}\Big\},
\end{multline}
where $\vec{\xi}^{\varepsilon n}_x$ is the empirical average on the box of size $\lfloor \varepsilon n \rfloor$  at the right of site $x$:
\begin{equation}
\vec{\xi}^{\varepsilon n}_x=\frac{1}{\lfloor\varepsilon n \rfloor}\sum_{y=x+1}^{x+\lfloor\varepsilon n\rfloor}\bar{\xi}_y. \label{eq:mean}
\end{equation}
\end{theorem}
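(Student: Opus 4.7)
The plan is to follow the Gon\c{c}alves--Jara--Simon strategy developed in \cite{GJSim}: combine the Kipnis--Varadhan (KV) inequality applied to the symmetric part $\gamma\mathcal{S}$ of the generator with a dyadic multi-scale replacement of $\bar\xi_x\bar\xi_{x+1}$ by squares of block averages. The starting ingredient is the KV bound
\[
\mathbb{E}\bigg[\bigg(\int_0^t\sum_{x\in\bb T_n}\psi(s,\tfrac xn)\,V^x_s(\eta(sn^2))\,ds\bigg)^2\bigg]\lesssim \int_0^t\bigg\|\sum_x\psi(s,\tfrac xn)V^x_s\bigg\|_{-1,n}^2 ds,
\]
valid for any mean-zero family of local observables $V^x_s$, where the $H_{-1}$ norm is defined through the Dirichlet form of $\gamma n^2\mathcal{S}$ via the usual variational formula $\|G\|_{-1,n}^2=\sup_h\{2\langle G,h\rangle_\mu-\gamma n^2\langle h,-\mathcal{S}h\rangle_\mu\}$. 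Only the symmetric part $\mathcal{S}$ enters, thanks to the skew-symmetry of $\mathcal{A}$ in $\bb L^2(\mu_{\bar\beta,\bar\lambda})$.

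To produce the replacement I would fix a dyadic sequence of scales $1=\ell_0<\ell_1<\cdots<\ell_K\asymp \varepsilon n$ with $\ell_{k+1}=2\ell_k$ and use the telescoping identity
\[
\bar\xi_x\bar\xi_{x+1}-(\vec\xi^{\varepsilon n}_x)^2+\tfrac{\tau^2}{\varepsilon n}=R^{(0)}_x+\sum_{k=1}^{K-1}R^{(k)}_x,
\]
with $R^{(0)}_x=\bar\xi_x\bar\xi_{x+1}-(\vec\xi^{\ell_1}_x)^2+\tau^2/\ell_1$ a strictly local ``one-block'' term, and $R^{(k)}_x=(\vec\xi^{\ell_k}_x)^2-(\vec\xi^{\ell_{k+1}}_x)^2-\tau^2(\ell_k^{-1}-\ell_{k+1}^{-1})$ a scale-doubling error. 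The first term $R^{(0)}$ is treated by a direct static equilibrium computation together with Cauchy--Schwarz in time \eqref{eq:CSs}, which bypasses KV and produces precisely the contribution $\varepsilon\int_0^t\|\psi(s,\cdot)\|_{2,n}^2\,ds$ appearing as the first term in the bound. Each scale-jump term $R^{(k)}$ is then estimated through the $H_{-1}$ variational formula, the test function being a discrete ``primitive'' of $R^{(k)}_x$ under the symmetric exchange dynamics restricted to a box of size $\ell_{k+1}$; since the spectral gap of $\mathcal{S}$ on such a box is of order $\ell_{k+1}^{-2}$, each scale contributes, after plugging into KV, an amount that sums along the geometric sequence up to $\ell_K\asymp \varepsilon n$ and reproduces the second term $\frac{t}{\varepsilon^2 n}\int_0^t\|\psi(s,\cdot)\|_{2,n}^2\,ds$.

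The main obstacle is the scale-by-scale variational estimate: at each dyadic step one must exhibit an auxiliary function $h$ whose Dirichlet form $\langle h,-\mathcal{S}h\rangle_\mu$ is small enough to match the geometric series, while keeping the duality pairing $\langle R^{(k)},h\rangle_\mu$ large enough to reproduce the target, and uniformly in $s$ so that the time integration commutes with the supremum. Once the two error sources are balanced, the $\varepsilon$ from the one-block estimate and the $t/(\varepsilon^2 n)$ from the telescoping variational step, the claim follows; the optimal choice $\varepsilon\sim (t/n)^{1/3}$ recovers in particular the $H_{-1}$ improvement over \eqref{eq:CSs} mentioned after \eqref{eq:BG_bound}.
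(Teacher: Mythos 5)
You should first note that the paper itself does not prove this statement: it defers entirely to Theorem 1 of \cite{GJSim}. Your plan — Kipnis--Varadhan for the symmetric part $\gamma\mathcal S$ combined with a dyadic renormalization of scales up to $\varepsilon n$ — is therefore the correct framework, and is what the cited proof does. However, you have attached the two load-bearing estimates to the wrong pieces of the decomposition, and one step as written fails outright. The local term $R^{(0)}_x=\bar\xi_x\bar\xi_{x+1}-(\vec\xi^{\ell_1}_x)^2+\tau^2/\ell_1$ with $\ell_1$ of order one is a mean-zero local observable with variance of order one and finite-range correlations, so the static bound \eqref{eq:CSs} gives
\[
t\int_0^t\Big\langle\Big(\sum_{x\in\bb T_n}\psi(s,\tfrac xn)\,R^{(0)}_x\Big)^2\Big\rangle\,ds\;\asymp\; t\,n\int_0^t\|\psi(s,\cdot)\|_{2,n}^2\,ds,
\]
which exceeds the target contribution $\varepsilon\int_0^t\|\psi(s,\cdot)\|_{2,n}^2\,ds$ by a factor $tn/\varepsilon$. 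The entire point of a Boltzmann--Gibbs principle is that strictly local mean-zero terms must be estimated \emph{dynamically} (through the $H_{-1}$/spectral-gap route), never statically. Conversely, the second term $\tfrac{t}{\varepsilon^2 n}\int_0^t\|\psi(s,\cdot)\|_{2,n}^2\,ds$ is quadratic in $t$, whereas every bound produced by the Kipnis--Varadhan inequality is linear in $t$; the telescoping variational steps therefore cannot ``reproduce'' it. In \cite{GJSim} the roles are exactly reversed: the dyadic $H_{-1}$ steps sum along the geometric sequence (dominated by the largest scale $\varepsilon n$) to the term $\varepsilon\int_0^t\|\psi\|_{2,n}^2\,ds$, while the quadratic-in-$t$ term comes from a final application of \eqref{eq:CSs} to the residue left at the largest scale.

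There is a second, model-specific point your sketch misses. The noise $\mathcal S$ is a pure exchange dynamics, so its restriction to a box of length $\ell$ annihilates \emph{every} symmetric function of the $\ell$ coordinates, not merely functions of the block average $\vec\xi^{\ell}_x$. The spectral-gap/variational estimate thus only controls the component of $\bar\xi_x\bar\xi_{x+1}$ orthogonal to all exchangeable functions of the box; the exchangeable component is $\tfrac{1}{\ell(\ell-1)}\sum_{y\neq z}\bar\xi_y\bar\xi_z$, which differs from $(\vec\xi^{\ell}_x)^2-\tau^2/\ell$ by (essentially) $\tfrac{1}{\ell^2}\sum_{y}(\bar\xi_y^2-\tau^2)$, a symmetric observable with per-site variance $\asymp\ell^{-3}$ and correlation range $\ell$. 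Applying \eqref{eq:CSs} to \emph{this} residue yields $t^2\ell^{-2}\sum_x\psi_x^2=\tfrac{t}{\varepsilon^2 n}\,t\,\|\psi\|_{2,n}^2$, i.e.\ precisely the second term of \eqref{eq:BGexpo}, and is the reason the centering constant $\tau^2/(\varepsilon n)$ appears. Without identifying this symmetric residue your scale-by-scale variational step has nothing to test against (the residue lies in the kernel of $\mathcal S$ on the box, so its $H_{-1}$ norm is not controlled by the Dirichlet form at all), and the claimed balance of errors cannot be achieved. The skeleton of your argument is the right one, but as written the one-block step fails and the origin of each of the two error terms is misidentified.
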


\begin{proof}
The proof of last result is completely analogous to the proof of Theorem  1 in \cite{GJSim} and for that reason it is omitted.
\end{proof}

From the previous theorem,  we can replace (in $\bb L^2$) the last term at the r.h.s.~of \eqref{eq:mart_dec_xi_SBE}, for $n$ sufficiently big  and then $\varepsilon$ sufficiently small, by 
\begin{equation}
b^2\alpha\int_0^t\sum_{x\in\bb T_n} \big( \nabla_n T_{c_n s }^+f \big) \big(\tfrac xn \big)\big(\bar{\xi}_x^{\eps n}(sn^2)\big)^2 ds.
\end{equation}
Note that last expression writes as 
\begin{equation}\label{eq:kpz_1}
b^2\alpha\int_0^t\sum_{x\in\bb T_n}\big(\nabla_n T_{c_n s }^+\, f\big)\big(\tfrac xn\big)\Big(\frac {1} {\eps n}\sum_{y=x}^{x+\eps n}\bar{\xi_x}(sn^2)\Big)^2 ds.
\end{equation}
For $\varepsilon>0$ and $x\in\bb T$, we recall that $\iota_\varepsilon(x):\bb T\to\bb R$ is the function defined for $y\in \bb T$ by  $\iota_\varepsilon(x)(y)=\varepsilon^{-1} \, \textbf{1}_{x<y\leq x+\varepsilon}$. Note that 
\begin{equation}\label{eq:kpz_2}
\frac{1}{\sqrt n}\mc Y^n_{s}(\iota_\eps(x))=\frac{1}{\eps n}\sum_{y=x-2b^2\rho  \alpha n^{3/2} s}^{x+\eps n-2b^2\rho \alpha n^{3/2} s}\bar{\xi_y}(sn^2).
\end{equation}
If in  \eqref{eq:kpz_1} we change the variable $x$ into  $z-2b^2\rho\alpha   n^{3/2} s$, we rewrite \eqref{eq:kpz_1} as
\begin{equation}
b^2\alpha\int_0^t\sum_{z\in\bb T_n} \nabla_n f(\tfrac {z}{n})\Big(\frac {1} {\eps n}\sum_{y=z-2b^2\rho  \alpha n^{3/2} s}^{z-2b^2\rho\alpha   n^{3/2} s+\eps n}\bar{\xi_y}(sn^2)\Big)^2 ds
\end{equation}
and from \eqref{eq:kpz_2} last expression writes as 
\begin{equation}
b^2\alpha\int_0^t\frac 1n\sum_{z\in\bb T_n} \nabla_n f(\tfrac {z}{n})\big(\mc Y^n_{s}(\iota_\eps(z))\big)^2ds.
\end{equation}
Then we get
\begin{align*}
\mc M_t^n(f)=  \mathcal{Y}^{n}_t  (f) &-  \mathcal{Y}^{n}_0  (f)-\gamma\int_0^t  \mathcal{Y}^{n}_s  (\Delta_n f) ds \\ &+b^2\alpha\int_0^t\frac 1n\sum_{z\in \bb T_n} \nabla_n f(\tfrac {z}{n})\big(\mc Y^n_{s}(\iota_\eps(z))\big)^2ds.\end{align*}
Moreover, the quadratic variation of the martingale satisfies:
\begin{equation*}
\lim_{n\to+\infty}\bb E[\langle\mc M^n(f)\rangle_t]= 2t  \gamma \tau \|\nabla f\|_2^2
\end{equation*}
so that $\mc Y$ is solution of the stochastic Burgers equation:
\begin{equation}
d\mc Y_t=\gamma\Delta \mc Y_t dt+b^2\alpha \nabla (\mc Y_t)^2 dt + \sqrt{2\gamma\tau}\nabla d\mc B_t.
\end{equation}

\end{itemize}

\section{The limit of the sequence of martingales  $(\mathcal N^n )_{n\in \mathbb N}$} \label{sec:limit}

In this section we prove convergence of the sequence of martingales $$\left\{\mathcal N_t^n \bigcdot \vec f\; ; \; t\in [0,T]\right\}_{n\in \mathbb N}$$ which is a consequence of Theorem VIII.3.12  in \cite{J.S.} which can be stated as follows.

\begin{proposition}[\cite{J.S.}]
Let $t \in [0,T] \mapsto C_t  \in [0, \infty)$ be a deterministic continuous  function of the time $t$. Let $\{M_t^n\; ;\;  t \in [0,T]\}_{n \in \mathbb N}$ be a sequence of square-integrable real-valued martingales with c\`adl\`ag trajectories defined on a probability space $(\Omega,\mc F, \bb P)$. Let $\{\langle M^n\rangle_t\; ;\;  t \in [0,T]\}$ denote the quadratic variation of $\{M_t^n\; ; \; t \in [0,T]\}$. Assume that
\begin{itemize}
\item[i)] For each $n\in \bb N$, the quadratic variation process $\{\langle M^n\rangle_t\; ;\;  t \in [0,T]\}$ has continuous trajectories $\PP$ a.s.;
\item[ii)] the maximal jump satisfies
\begin{equation}\label{eq:maximal_jump}
\lim_{n \to \infty} \bb E\Big[ \sup_{0 \le s \le T} \big|M_s^n-M_{s-}^n\big| \Big] =0;
\end{equation}
Above $\bb E$ denotes the expectation w.r.t. $\bb P$.
\item[iii)] For each $t \in [0,T]$, the sequence of random variables $\{\langle M^n\rangle_t \}_{n \in \bb N}$ converges in probability to the deterministic path  $\{C_t \; ;\;  t \in [0,T]\}$.
\end{itemize}
Then the sequence $\{M_t^n\; ;\;  t \in [0,T]\}_{n \in \bb N}$ converges in law in $\bb D ([0,T], \RR)$ to a martingale $\{M_t\; ;\; t \in [0,T]\}$ with quadratic variation $t \mapsto C_t$. Moreover $\{M_t\; ;\; t \in [0,T]\}$ is a mean zero Gaussian process. 
\end{proposition}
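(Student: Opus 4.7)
The plan is to prove this martingale functional central limit theorem by the standard three-step scheme: (a) tightness of the sequence $\{M^n_\cdot\}_n$ in $\bb D([0,T],\RR)$; (b) continuity of every weak limit point; (c) identification of each limit as a continuous Gaussian martingale with quadratic variation $t \mapsto C_t$, whose law is therefore uniquely determined.

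First I would establish tightness. Marginal tightness for each fixed $t$ is immediate from Doob's $\bb L^2$-inequality combined with the convergence $\bb E[\langle M^n \rangle_t] \to C_t$, which gives uniform bounds of the form $\sup_n \bb E[(M^n_t)^2]<\infty$. The oscillation part of Aldous' criterion is then handled by Lenglart's domination inequality
\[
\bb P\Big( \sup_{\sigma \le s \le (\sigma+\delta)\wedge T} |M^n_s - M^n_\sigma| > \epsilon \Big) \le \frac{\eta}{\epsilon^2} + \bb P\big(\langle M^n\rangle_{(\sigma+\delta)\wedge T} - \langle M^n\rangle_\sigma > \eta\big),
\]
applied to an arbitrary stopping time $\sigma \le T$. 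Using that $C$ is deterministic and continuous, hypothesis (iii) upgrades, via monotonicity of the increasing processes $\langle M^n\rangle$ and a Dini-type argument, to ``equicontinuity in probability'' on $[0,T]$; the Lenglart bound then yields Aldous' criterion and tightness.

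Second, hypothesis (ii) provides $\sup_{s\le T} |M^n_s - M^n_{s-}| \to 0$ in $\bb L^1$ and hence in probability; since the maximal-jump functional is continuous on the Skorokhod space, any weak limit point $M$ satisfies $\sup_{s\le T} |M_s - M_{s-}| = 0$ almost surely, so it has continuous trajectories. Third, I would identify each limit point. Fix $0\le s\le t\le T$ and a bounded continuous functional $\Phi$ of the trajectory up to time $s$. The martingale identities
\[
\bb E\big[\Phi \cdot (M^n_t - M^n_s)\big] = 0, \qquad \bb E\big[\Phi \cdot \big((M^n_t)^2 - (M^n_s)^2 - \langle M^n \rangle_t + \langle M^n\rangle_s\big)\big] = 0
\]
pass to the limit, with uniform integrability supplied by the $\bb L^2$-bound of Step 1 together with condition (iii). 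This shows that $M$ is a continuous square-integrable martingale with predictable quadratic variation $\langle M\rangle_t = C_t$. By L\'evy's characterization (equivalently Dambis--Dubins--Schwarz), a continuous martingale with deterministic quadratic variation $C$ is a time-changed Brownian motion with deterministic time change, hence a centered Gaussian process with covariance $\bb E[M_s M_t] = C_{s\wedge t}$. The law of $M$ is therefore unique, and together with tightness this yields convergence in law of the full sequence.

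The main technical obstacle is the tightness step: condition (iii) is only stated pointwise in $t$, whereas Aldous' criterion requires control of increments of $M^n$ over windows with \emph{random} endpoints. Lenglart's inequality reduces this to control of increments of the increasing processes $\langle M^n\rangle$, but some care is needed to upgrade pointwise convergence $\langle M^n\rangle_t \to C_t$ into a uniform-in-window control; this is precisely where the continuity of $C$ (and a Dini argument exploiting monotonicity) plays a decisive role.
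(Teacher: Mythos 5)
Your proposal is correct in architecture but follows a genuinely different route from the paper: the paper does not prove this proposition at all, it simply invokes Theorem VIII.3.12 of Jacod--Shiryaev and maps the hypotheses onto the conditions of that theorem (hypothesis (ii) gives (3.14) and $[\hat\delta_5-D]$ via assertion VIII.3.5, hypothesis (iii) gives $[\gamma_5-D]$). You instead reprove the martingale FCLT from scratch by Aldous--Lenglart tightness, continuity of limit points via the vanishing maximal jump, and identification through L\'evy/Dambis--Dubins--Schwarz; this is essentially the strategy of Whitt's survey \cite{Whi}, which is already in the bibliography. Your route is self-contained and avoids the semimartingale-characteristics machinery of \cite{J.S.}, at the cost of having to carry out the tightness and identification arguments by hand; the paper's route is a two-line reduction resting on a heavy black box.

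There is, however, one concrete gap as written, and it infects two steps. Hypothesis (iii) gives convergence of $\langle M^n\rangle_t$ to $C_t$ only \emph{in probability}; this implies neither $\bb E[\langle M^n\rangle_t]\to C_t$ nor $\sup_n \bb E[(M^n_t)^2]<\infty$, yet you use exactly such moment bounds first for marginal tightness (Doob) and again to supply the uniform integrability needed to pass the identity $\bb E\big[\Phi\cdot\big((M^n_t)^2-(M^n_s)^2-\langle M^n\rangle_t+\langle M^n\rangle_s\big)\big]=0$ to the limit. Convergence in probability of a nonnegative sequence is compatible with exploding expectations, so these inferences fail without an extra argument. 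The standard repair is localization: stop $M^n$ at $\tau_n:=\inf\{t:\langle M^n\rangle_t> C_T+1\}$. By (i) the stopped bracket is bounded by $C_T+1$, and by (iii) applied at $t=T$ one has $\bb P(\tau_n<T)\to 0$, so the stopped and unstopped sequences share the same weak limit points while the stopped martingales enjoy uniform second-moment bounds (and a uniformly bounded, hence uniformly integrable, bracket). With this modification the compact containment can also be read off Lenglart's inequality directly, and the rest of your argument --- including the delicate uniform integrability of $(M^n_t)^2$ in the identification step, which still deserves a word (e.g.\ via Burkholder--Davis--Gundy for the stopped martingales together with (ii)) --- goes through.
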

\begin{remark}
We note that if in the previous theorem $C_t=\sigma^2t$, then the limit is a Brownian motion with quadratic variation  equal to $\sigma^2t$.
\end{remark}

Before we proceed we explain how to deduce  Proposition \ref{martconv} from Theorem VIII.3.12 of \cite{J.S.}. To get Proposition \ref{martconv}, we use the statement of Theorem VIII.3.12 which requires assumptions (3.14)  and  b) (iv) (both in \cite{J.S.}) to get the convergence in law of the martingales. By the assertion VIII.3.5 in \cite{J.S.}, the  conditions
[$\hat\delta_5-D$] and (3.14) are a consequence of \eqref{eq:maximal_jump} above. Moreover, condition $[\gamma_5-D]$ defined in (3.3) page 470 of \cite{J.S.}   is a consequence of iii) above.

As a consequence of last result we conclude that:
\begin{corollary}
\label{martconv}
Let $\vec f = (f^1,f^2)^\dagger \in\mc D(\bb T, \bb R^2)$. The sequence of martingales $\{\mc N^n_t \bigcdot \vec f :t\in [0,T]\}_{n\in \bb N}$ converges in  law under the topology of $\bb D([0,T], \bb R)$, as $n\to\infty$, to: 
\begin{itemize} \item  $0$ when $a<2$ and $\kappa\leq{a-1}$ ; 
\medskip

\item to a martingale $\{\mathcal N_t \bigcdot \vec f:t\in [0,T]\}$ which is a mean-zero Gaussian process  and whose quadratic variation is given by
\medskip

\begin{itemize} \item[$\circ$]  if $\kappa>1$ and $a=2$:
\begin{equation*}
\langle\mc N \bigcdot \vec f \, \rangle_t= 2t \gamma \tau^2\|\nabla f^1\|_0^2+2t \gamma \sigma^2\|\nabla f^2\|_0^2 + 4\gamma \delta t \langle \nabla f^1, \nabla f^2 \rangle_0 \, ;
\end{equation*} 

\item[$\circ$] if $\kappa=1$ and $a=2$:
\begin{equation*}
\langle\mc N \bigcdot \vec f \, \rangle_t= 2t \gamma \tau^2\|\nabla f^1\|_0^2+2t \gamma \sigma^2\|\nabla f^2\|_0^2 + 4\gamma \delta c^{-1} \, \langle T^+_{ct} f^1 -f^1 \, ,\,  \nabla f^2\rangle_0.
\end{equation*} 
\end{itemize}\end{itemize}
\end{corollary}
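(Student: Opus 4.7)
My plan is to verify the three hypotheses of the Jacod--Shiryaev theorem recalled above, applied to the real-valued square-integrable martingale $M^n_t := \mc N^n_t \bigcdot \vec f$, and then to identify the deterministic limit $C_t$ of the quadratic variation with the expressions announced in the statement. The expectation $\bb E[\langle M^n\rangle_t]$ has already been computed in Section~\ref{sec:char:X-field} and shown, in each of the relevant regimes, to converge as $n \to \infty$ to precisely the announced $C_t$ (which vanishes when $a<2$ and $\kappa\le a-1$). It therefore suffices to verify the three Jacod--Shiryaev conditions.

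Condition (i) is immediate from the explicit formula \eqref{eq:qv_mart_chi_field_2}: the process $t\mapsto \langle M^n\rangle_t$ is a Lebesgue integral in $t$, hence absolutely continuous and in particular continuous almost surely. For condition (ii), I would observe that $M^n$ jumps only when the stochastic exchange generated by $\mc S$ fires, since the Hamiltonian part $\mc A$ generates a continuous flow and the time-dependence of $T^+_{c_n s} f^1$ contributes only continuously through the $\partial_s$ term in \eqref{eq:Dynkin_eta}. A jump at bond $(x,x+1)$ at time $s$ produces
\[
\Delta M^n_s = \tfrac{1}{\sqrt n}\big[(T^+_{c_n s}f^1)(\tfrac{x}{n})-(T^+_{c_n s}f^1)(\tfrac{x+1}{n})\big]\big(\xi_x-\xi_{x+1}\big) + \tfrac{1}{\sqrt n}\big[f^2(\tfrac{x}{n})-f^2(\tfrac{x+1}{n})\big]\big(\eta_x-\eta_{x+1}\big),
\]
and a Taylor expansion of $\vec f$ gives $|\Delta M^n_s|\le C(\vec f)\, n^{-3/2}(|\xi_x|+|\xi_{x+1}|+|\eta_x|+|\eta_{x+1}|)$. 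Using the elementary inequality $\sup_s|\Delta M^n_s|^p\le \sum_s|\Delta M^n_s|^p$ for some $p>2$, compensating the sum and invoking stationarity together with the integrability of all moments of $\xi_0$ and $\eta_0$ under $\mu_{\bar\beta,\bar\lambda}$, one obtains
\[
\bb E\Big[\sup_{0\le s\le T}|\Delta M^n_s|^p\Big] \le \bb E\Big[\sum_{s\le T}|\Delta M^n_s|^p\Big] \lesssim T\gamma\,\theta(n)\, n\cdot n^{-3p/2} \lesssim n^{a+1-3p/2}.
\]
For $a\le 2$ the choice $p=4$ makes the exponent negative, and Jensen's inequality yields $\bb E[\sup_s|\Delta M^n_s|]\to 0$.

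For condition (iii) it is enough, by Chebyshev, to prove $\bb L^2$ convergence of $\langle M^n\rangle_t$ to $C_t$. Since the convergence of means is already known, the task reduces to controlling the variance of $\langle M^n\rangle_t$. Writing $\langle M^n\rangle_t = \int_0^t F_n(\eta(s\theta(n)))\,ds$ with $F_n$ the spatial average of local functions read off from \eqref{eq:qv_mart_chi_field_2}, stationarity and the Cauchy--Schwarz bound \eqref{eq:CSs} give $\bb E[(\langle M^n\rangle_t-\bb E\langle M^n\rangle_t)^2]\le t^2\,\mathrm{Var}(F_n)$. Since $\mu_{\bar\beta,\bar\lambda}$ is a product measure, only $O(n)$ of the $O(n^2)$ covariances between the local summands of $F_n$ (those sharing a bond) are non-zero, and each weight is of order $n^{-2}$, yielding $\mathrm{Var}(F_n)=\mc O(n^{-1})\to 0$.

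Once (i)--(iii) are verified, the JS theorem delivers convergence in law of $M^n$ in $\bb D([0,T],\RR)$ to a continuous mean-zero Gaussian martingale with deterministic quadratic variation $C_t$; when $C_t\equiv 0$ this forces the limit to be the constant $0$ process. The main technical obstacle is condition (ii): since $\eta_x$ is unbounded, a uniform sup-estimate in $(s,x)$ is out of reach and must be replaced by the higher-moment compensator argument above. The sharp decay in $n$ of each individual jump, of order $n^{-3/2}$, is crucial to absorb the jump intensity $\theta(n)\, n$ once $p>2(a+1)/3$.
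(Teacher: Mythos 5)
Your proposal is correct and follows the paper's overall strategy: verify conditions (i)--(iii) of the Jacod--Shiryaev proposition for $M^n_t=\mc N^n_t\bigcdot\vec f$ and read off $C_t$ from the limits of $\bb E[\langle \mc N^n\bigcdot\vec f\rangle_t]$ already computed in Section \ref{sec:char:X-field}. Conditions (i) and (iii) are handled essentially as in the paper: your variance bound $\mathrm{Var}(F_n)=\mc O(\theta(n)^2 n^{-5})$ via the product structure of $\mu_{\bar\beta,\bar\lambda}$ is the same computation as the paper's decomposition into the three terms $\mathrm{I}$, $\mathrm{II}$, $\mathrm{III}$ estimated through \eqref{eq:CSs}. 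The genuine divergence is in the maximal-jump condition (ii). The paper delegates it to the computations in Lemma \ref{lem:max_jumps}, where $\sup_{t\le T}$ of the squared jump is controlled by passing to the full sums $\sum_x\xi_x$ and $\sum_x V_b(\eta_x)$ and exploiting that these are \emph{conserved}, so the supremum over time collapses to a static expectation at $t=0$ of order $\mc O(n^{-1})$. You instead bound $\sup_s|\Delta M^n_s|^p$ by the compensated sum of $p$-th powers of the jumps, getting $\mc O(\theta(n)\,n\cdot n^{-3p/2})$ and choosing $p=4>2(a+1)/3$; this is a standard and fully rigorous alternative which does not use the conservation laws at all, and is arguably more robust (it would survive perturbations of the dynamics that destroy the conservation of $\sum_x\xi_x$), at the modest price of requiring all $p$-th moments of $\xi_0$ and $\eta_0$ under $\mu_{\bar\beta,\bar\lambda}$ (which do hold here, since $\eta_0$ has exponential tails and $\xi_0$ is Gamma-distributed). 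Both routes give condition (ii), so your argument is complete.
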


\begin{proof}
Now we fix $\vec f=(f^1,f^2)^\dagger \in\mc D(\bb T,\RR^2)$. In order to apply Proposition \ref{martconv} to the sequence $(\mathcal N^n \bigcdot \vec f \,)_{n\in \mathbb N}$, we note that item i) is trivial because of \eqref{eq:qv_mart_chi_field_2}. Note that ii) is a consequence of the computations performed  in the proof of Lemma \ref{lem:max_jumps}. 
Finally we prove iii),
that is,  the convergence, in the $\bb L^2$-norm, of the quadratic variation of $\mc N^n \bigcdot \vec f$. For that purpose, note that, by using the inequality $(x+y)^2\leq 2x^2+2y^2$ we can bound from above, 
\begin{equation*}
\bb  E\Big[\Big(\langle\mc N^n \bigcdot \vec f \rangle_t- \bb E\big[\langle\mc N^n \bigcdot \vec f\rangle_t\big]\Big)^2\Big]
\end{equation*}
by a constant times the sum of the next three terms:
\begin{align*}
\mathrm{I}&:=\bb  E\Big[\Big( \int_0^t \frac{\theta(n)}{n^3}\sum_{x\in\bb Z}\big(\nabla_nT^+_{c_ns}f^1(\tfrac{x}{n})\big)^2\big\{\big(\xi_{x+1}(\theta(n)s)-\xi_x(\theta(n)s)\big)^2-2\tau^2\big\} ds\Big)^2\Big] \\
\mathrm{II}&:=\bb  E\Big[\Big( \int_0^t \frac{\theta(n)}{n^3}\sum_{x\in\bb Z}\big(\nabla_n f^2 (\tfrac{x}{n})\big)^2\big\{\big(\eta_{x+1}(\theta(n)s)-\eta_x(\theta(n)s)\big)^2-2\sigma^2\big\} ds\Big)^2\Big]\\
\mathrm{III}&:= \bb  E\Big[\Big( \int_0^t \frac{\theta(n)}{n^3}\sum_{x\in\bb Z}\big(\nabla_nT^+_{c_ns}f^1\big) (\tfrac{x}{n})(\nabla_ng)(\tfrac{x}{n})\big\{\big(\eta_{x+1}(\theta(n)s)-\eta_x(\theta(n)s)\big)\\
& \qquad \quad \times \big(\xi_{x+1}(\theta(n)s)-\xi_x(\theta(n)s)\big)-2(\delta-\rho v)ds\big\}\Big)^2\Big]. \end{align*}
A simple computation, based on (\ref{eq:CSs}), shows that each one of the  last three expectations are of order $\mc O((\theta(n))^2\; n^{-5})$. Since $\theta(n)\leq n^2$, the last three  terms vanish as $n\to+\infty$. From these computations we conclude that when $a<2$ we have that $C_t=0$, so that the sequence $(\mathcal N_t^n \bigcdot \vec f \, )_{n\in \mathbb N}$
converges to $0$ as $n\to+\infty$, but when $a=2$ and $k>1$, $$C_t=2t \gamma \tau^2\|\nabla f^1\|_0^2 +2t \gamma \sigma^2\|\nabla f^2\|_0^2+4\gamma\delta \langle \nabla f^1,\nabla f^2\rangle_0$$
and when $a=2$ and $k=1$ we have that  
$$C_t=2t \gamma \tau^2\|\nabla f^1\|_0^2 +2t \gamma \sigma^2\|\nabla f^2\|_0^2+4\gamma\delta c^{-1} \langle T^+_{ct}f^1-f^1,\nabla f^2\rangle_0.$$
 From Proposition  \ref{martconv} we conclude that the sequence $\{\mathcal N_t^n\,; \,t \in [0,T]\}_{n \in \bb N}$  converges to a mean-zero Gaussian process $\{\mathcal N_t\,;\, t \in [0,T]\}$ which is a martingale with  quadratic variation given by $C_t$. This finishes the proof of Corollary \ref{martconv}.
\end{proof}

\section{Tightness}  \label{sec:tight}

\subsection{Tightness for the $\mc Z$ field}
\label {sec:tightness_chi}

In this section we prove tightness of the sequence $\{\mathcal X_t^n= \mc Z_t^n \bigcdot \; ;\;  t \in [0,T]\}_{n \in \bb N} \in \bb D ([0,T], (\bb H_{k}\times \bb H_k )^\prime)$ following Chapter 11 of \cite{K.L.}. By Remark \ref{rem:topo_ident} this implies the tightness of $\{\mc Z_t^n\; ;\;  t \in [0,T]\}_{n \in \bb N} \in \bb D ([0,T], \bb H_{-k}\times \bb H_{-k})\}_{n \in \bb N}$. We assume that $a \le \inf(2, \kappa+1)$. We need to show that:
\begin{enumerate}[\hspace{1cm} (1)\quad ]
\item[(A)] $ \lim_{A\rightarrow{+\infty}}\limsup_{n\rightarrow{+\infty}}\mathbb{P} {\Big(\sup_{0\leq{t}\leq{T}}\|\mc Z_{t}^n \bigcdot \|_{-k}^{2}>A\Big)}=0,$
\medskip

\item [(B)] $\forall{\epsilon>{0}},\quad \lim_{\delta\rightarrow{0}}\limsup_{n\rightarrow{+\infty}}\mathbb{P} \Big(\omega_{\delta}(\mc Z^n \bigcdot)\geq{\epsilon}\Big)=0,$
\end{enumerate}
where
\begin{equation*}
\omega_{\delta}(\mc Z^n \bigcdot):=\sup_{\substack{|s-t|<\delta\\0\leq{s,t}\leq{T}}}\|\mc Z^n_{t} \bigcdot -\mc Z^n_{s} \bigcdot\|_{-k}.
\end{equation*}
The norms above and the corresponding inner products $\langle \cdot, \cdot\rangle_{-k}$ have been introduced in the beginning of Section \ref{sec:stat_res}. We start by showing condition (A). For each integer $z\in \bb Z$, recall that $h_{z}$ denotes the function defined in \eqref{eq:defh}.

\begin{lemma} 
\label{th:lemmatight}
Assume that $a \le \inf(2, \kappa+1)$. There exists a finite constant $C(T)>0$ such that for every $z\in\bb Z$,
\begin{equation*}
\sup_{n\rightarrow{+\infty}}\mathbb{E}\Big[\sup_{0\leq{t}\leq{T}} \,\big|\mc Z^n_{t} \bigcdot (h_z,0)^\dagger\big|^2+\big|\mc Z_t^n \bigcdot (0,h_z)^\dagger\big|^{2}\Big] \; \leq C(T) \; (1+ z^4\mathbf{1}_{a=2}+z^2).
\end{equation*} 
\end{lemma}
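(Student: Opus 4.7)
My plan is to deploy the Dynkin-martingale decomposition of Section~\ref{sec:sketch} and bound the three resulting pieces one by one, combining Doob's $\bb L^2$-maximal inequality on the martingale part with the a priori estimate \eqref{eq:CSs} on the drift part. Fix $z\in\ZZ$ and specialise \eqref{eq:Dynkin_eta} to $\vec f=(h_z,0)^\dagger$; the analysis for $(0,h_z)^\dagger$ is symmetric. Writing
\[
\mc Z^n_t\bigcdot\vec f \;=\; \mc Z^n_0\bigcdot\vec f \;+\; \int_0^t (\partial_s+\theta(n)\mc L)(\mc Z^n_s\bigcdot\vec f)\,ds \;+\; \mc N^n_t\bigcdot\vec f,
\]
I split the target quantity via $(a+b+c)^2\le 3(a^2+b^2+c^2)$ into initial, drift and martingale contributions.

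The initial piece is immediate: by stationarity under the product measure \eqref{eq:inv_mea_xi},
$\EE[\vert\mc Z^n_0\bigcdot(h_z,0)^\dagger\vert^2]=\tau^2\,n^{-1}\sum_{x\in\TT_n}h_z^2(x/n)$, which is uniformly bounded by a constant in $n$ and $z$. For the martingale, Doob's inequality gives
\[
\EE\Big[\sup_{0\le t\le T}\big\vert\mc N^n_t\bigcdot\vec f\big\vert^2\Big]\le 4\,\EE\big[\langle\mc N^n\bigcdot\vec f\rangle_T\big],
\]
and the explicit expression \eqref{eq:qv_mart_chi_field_2} together with stationarity of $\xi$ produces
$\EE[\langle\mc N^n\bigcdot\vec f\rangle_T]\le C\,\tau^2\,T\,(\theta(n)/n^2)\,\|\nabla h_z\|_0^2\lesssim T z^2$, because $\theta(n)\le n^2$ throughout this section.

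For the drift I use $\sup_{0\le t\le T}\big\vert\int_0^t F(s)\,ds\big\vert^2\le T\int_0^T F(s)^2\,ds$ together with \eqref{eq:CSs}. When $\vec f=(h_z,0)^\dagger$, formula \eqref{eq:mart_dec_chi1}--\eqref{eq:mart_dec_chi3} reduces to two pieces: the diffusive contribution $(\gamma\theta(n)/n^2)\mc Y^n_s(\Delta_n h_z)$ and the quadratic-in-$\xi$ contribution \eqref{eq:mart_dec_chi3}. Stationarity yields $\EE[\mc Y^n_s(\Delta_n h_z)^2]\lesssim \tau^2\|\Delta h_z\|_0^2\lesssim z^4$, so the diffusive contribution is at most $C(T)(\theta(n)/n^2)^2 z^4 \le C(T)\,z^4\mathbf 1_{a=2}$. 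For the quadratic contribution the key observation is that, under the product measure \eqref{eq:inv_mea_xi}, the only nonzero $4$-point correlations of $\{\bar\xi_x\bar\xi_{x+1}\}$ come from coincident bonds, hence
\[
\EE\Big[\Big(\sum_{x\in\TT_n}\!\nabla_n T_{c_n s}^+ h_z(\tfrac xn)\,\bar\xi_x(s\theta(n))\bar\xi_{x+1}(s\theta(n))\Big)^2\Big]\lesssim \tau^4\!\sum_{x\in\TT_n}\!(\nabla_n T_{c_n s}^+ h_z)^2(\tfrac xn)\lesssim n\,z^2.
\]
Multiplying by the squared prefactor $(\theta(n)\alpha_n/n^{3/2})^2$ and integrating in time gives a bound $\lesssim T^2\,n^{2a-2\kappa-2}\,z^2\le C(T)\,z^2$, using $a\le\kappa+1$. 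For $\vec f=(0,h_z)^\dagger$ the linear-in-$\xi$ contribution from \eqref{eq:mart_dec_chi2} appears in place of \eqref{eq:mart_dec_chi3}; after the summation by parts leading to \eqref{eq:cacaboudin} it becomes $2b(\theta(n)\alpha_n/n)\mc Y^n_s(\nabla_n T_{c_n s}^-\,h_z)$, whose second moment is $\lesssim n^{2a-2\kappa-2}z^2\le z^2$ by the same bookkeeping, and the boundary term in \eqref{eq:cacaboudin} is even smaller.

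Adding the three contributions yields the announced bound $C(T)(1+z^4\mathbf 1_{a=2}+z^2)$. The only delicate step is the quadratic piece \eqref{eq:mart_dec_chi3}: the rough inequality \eqref{eq:CSs} is marginal, producing $(\theta(n)\alpha_n/n)^2 z^2$, which is $\mc O(z^2)$ precisely at the boundary $a=\kappa+1$ and would diverge beyond it; this is exactly what forces the standing restriction $a\le\inf(2,\kappa+1)$ and is where the bulk of the care has to be exercised.
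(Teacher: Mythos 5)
Your proof is correct and follows essentially the same route as the paper: the Dynkin decomposition \eqref{eq:Dynkin_eta}, Doob's inequality combined with the quadratic variation formula \eqref{eq:qv_mart_chi_field_2} for the martingale part, and the Cauchy--Schwarz bound \eqref{eq:CSs} for the drift, leading to the same $n^{2(a-2)}z^4+n^{2(a-\kappa-1)}z^2$ bookkeeping. The only addition is your explicit four-point-correlation computation for $\bar{\xi}_x\bar{\xi}_{x+1}$, which is precisely what the paper's implicit use of \eqref{eq:CSs} on the term \eqref{eq:mart_dec_chi3} relies on.
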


\begin{proof}Recall \eqref{eq:Dynkin_eta}:
\begin{equation} 
\mc N_t^n \bigcdot \vec f=  \mathcal{Z}^{n}_t   \bigcdot \vec f -  \mathcal{Z}^{n}_0   \bigcdot \vec f -\int_0^t(\partial_s+\theta(n)\mc L  ) \mathcal{Z}^{n}_s   \bigcdot \vec f \, ds.
\end{equation}
A simple computation shows that for $ \vec f \in\{(h_z,0)^\dagger, (0,h_z)^\dagger\}$
\begin{equation*}
\lim_{n\rightarrow{+\infty}}\mathbb{E}\big[(\mathcal{Z}^{n}_0 \bigcdot \vec f )^{2}\big]\lesssim \|h_{z}\|_0^2=1.
\end{equation*}
To treat the martingale term, we rely on Doob's inequality to get that  for $\vec f\in\{(h_z,0)^\dagger, (0,h_z)^\dagger\}$
\begin{equation*}
 \mathbb{E}\Big[\sup_{0\leq{t}\leq{T}}\big|\mc N_t^n \bigcdot \vec f\big|^{2}\Big]\leq{4
\mathbb{E}\Big[\big|\mc N_T^n \bigcdot \vec f \big|^{2}\Big]}.
\end{equation*}
From \eqref{eq:qv_mart_chi_field_2} and since $\vec f \in\{(h_z,0)^\dagger, (0,h_z)^\dagger\}$, it follows that
\begin{equation}
\limsup_{n\rightarrow{+\infty}} \mathbb{E}\Big[\sup_{0\leq{t}\leq{T}}\big|\mc N_t^n \bigcdot \vec f\big|^{2}\Big]\; \lesssim \; T\frac{\theta(n)}{n^2}\|\nabla h_z\|_0^2\;  \lesssim\;  T z^2
\end{equation}
since $\theta(n)=n^a$ and $a \le 2$ and $\|\nabla h_z\|_0^2 \lesssim z^2$. Finally, it remains to bound:
\begin{equation*}
\mathbb{E}\Big[\sup_{0\leq{t}\leq{T}}\Big(\int_0^t(\partial_s+\theta(n)\mc L  ) \mathcal{Z}^{n}_s \bigcdot \vec f \, ds\Big)^{2}\Big]
\end{equation*}
for $\vec f \in\{(h_z,0)^\dagger, (0,h_z)^\dagger\}$.
From the computations following \eqref{eq:Dynkin_eta}, we see that last expectation is bounded from above by a constant times
\begin{equation*}
T^2\frac{\theta(n)^2}{n^4}\|\Delta h_z\|_0^2+T^2\frac{\theta(n)^2\alpha_n^2}{n^2}\|\nabla h_z\|_0^2 \lesssim n^{a-2} z^4 + n^{2(a-\kappa-1)}z^2 
\end{equation*}
The limit as $n\to+\infty$ of last term is equal to $z^4$ if  $a=2$ and equal to $z^2$  if $a=k+1<2$, otherwise it is $0$.
This proves  the lemma. 
\end{proof}

\begin{remark}
We observe here that in the regime $a>\kappa +1 $ (e.g. $a=2$, $\kappa=1/2$) the previous bound is not sufficiently sharp to prove tightness of the field $\mc Z^n$. We will need to show tightness of the $\mc Y^n$ field in the next section in this range of parameters and we will have to deal with this problem. \end{remark}

\begin{corollary}
\label{eq:corPat}
Assume that $a\le \inf(2, \kappa+1)$ and  $k>5/2$. It holds that
\begin{enumerate}[\rm (1)\quad ]
\item $\limsup_{n\rightarrow{+\infty}}\mathbb{E}\Big[\sup_{0\leq{t}\leq{T}}\|\mc Z^n_{t} \bigcdot \|_{-k}^{2}\Big]<{\infty}.$
\item  $\limsup_{j\rightarrow{+\infty}}\limsup_{n\rightarrow{+\infty}}\\ \displaystyle\mathbb{E}\Big[\sup_{0\leq{t}\leq{T}}\sum_{|z|\geq{j}}\Big\{\big|\mc Z^n_{t} \bigcdot (h_z,0)^\dagger \big|^2+ \big|\mc Z^n_{t} \bigcdot (0,h_z)^\dagger \big|^2\Big\}\gamma_{z}^{-k}\Big]=0.$
\end{enumerate}
\end{corollary}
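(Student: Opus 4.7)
The plan is to reduce both statements to Lemma \ref{th:lemmatight} via the spectral decomposition in the orthonormal basis $\{(h_z,0)^\dagger,(0,h_z)^\dagger\;;\; z\in\mathbb{Z}\}$ of $\bb H_k\times\bb H_k$.

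First I would note that, by definition of the norm $\|\cdot\|_{-k}$ in \eqref{eq:inner_product_H_minus_k_times_minus_k},
\[
\|\mc Z^n_t\bigcdot\|_{-k}^{2} \;=\; \sum_{z\in\mathbb{Z}}\Big\{\big|\mc Z^n_t\bigcdot(h_z,0)^\dagger\big|^{2}+\big|\mc Z^n_t\bigcdot(0,h_z)^\dagger\big|^{2}\Big\}\,\gamma_z^{-k}.
\]
Passing the $\sup_{0\le t\le T}$ inside the non-negative sum and then using Tonelli to interchange $\mathbb{E}$ with $\sum_z$ gives
\[
\mathbb{E}\Big[\sup_{0\le t\le T}\|\mc Z^n_t\bigcdot\|_{-k}^{2}\Big]\;\le\;\sum_{z\in\mathbb{Z}}\gamma_z^{-k}\,\mathbb{E}\Big[\sup_{0\le t\le T}\Big\{\big|\mc Z^n_t\bigcdot(h_z,0)^\dagger\big|^{2}+\big|\mc Z^n_t\bigcdot(0,h_z)^\dagger\big|^{2}\Big\}\Big].
\]
Applying Lemma \ref{th:lemmatight} to each term on the right-hand side, and recalling that $\gamma_z = 1+4\pi^2|z|^2\approx 1+z^2$, I would bound the above by
\[
C(T)\sum_{z\in\mathbb{Z}}(1+z^4\mathbf{1}_{a=2}+z^{2})\,(1+z^2)^{-k}.
\]
The worst term is $z^4\cdot z^{-2k}$, which is summable precisely when $2k-4>1$, i.e. $k>5/2$. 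This hypothesis thus yields the bound $\limsup_n \mathbb{E}[\sup_t \|\mc Z^n_t\bigcdot\|_{-k}^2]<\infty$, which is assertion (1).

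Assertion (2) follows by exactly the same argument applied to the tail: replacing $\sum_{z\in\mathbb{Z}}$ by $\sum_{|z|\ge j}$ in the chain of inequalities gives
\[
\limsup_{n\to\infty}\mathbb{E}\Big[\sup_{0\le t\le T}\sum_{|z|\ge j}\big\{\cdots\big\}\gamma_z^{-k}\Big]\;\le\;C(T)\sum_{|z|\ge j}(1+z^4\mathbf{1}_{a=2}+z^{2})(1+z^2)^{-k}.
\]
Since the full series converges (by the $k>5/2$ condition), its tail tends to $0$ as $j\to\infty$, which yields (2).

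There is no real obstacle here: the key work has already been done in Lemma \ref{th:lemmatight}, and what remains is only the standard bookkeeping of using Parseval-type identities in the negative Sobolev spaces together with Fubini. The only subtle point to highlight is the necessity of the threshold $k>5/2$, which is dictated by the $z^4$ term appearing when $a=2$ (reflecting the cost of controlling the diffusive part of the generator when acting on $h_z$); for $a<2$ one would only need $k>3/2$.
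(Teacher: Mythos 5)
Your proposal is correct and follows essentially the same route as the paper: expand $\|\mc Z^n_t\bigcdot\|_{-k}^2$ in the basis $\{(h_z,0)^\dagger,(0,h_z)^\dagger\}$, pull the supremum inside the sum, apply Lemma \ref{th:lemmatight} termwise, and observe that the resulting series $\sum_z z^4\gamma_z^{-k}$ converges precisely for $k>5/2$, with item (2) following from the tail of this convergent series. The only (cosmetic) difference is that you correctly record the sup/sum interchange as an inequality where the paper writes an equality.
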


\begin{proof}
The first item of the corollary follows from  \eqref{eq:chi_field}, the  definition of  the inner product $\langle \cdot , \cdot \rangle_{-k}$ of $\bb{H}_{-k}\times \bb H_{-k}$ given in \eqref{eq:inner_product_H_minus_k_times_minus_k}  and  the previous lemma.
More precisely:  
\begin{equation*}
\begin{split}
\bb{E}\Big[\sup_{0\leq{t}\leq{T}}\|\mc Z^n_{t} \bigcdot\|_{-k}^{2}\Big]&=\sum_{z\in{\mathbb{Z}}} \gamma_{z}^{-k}\bb{E}\left[\sup_{0\leq{t}\leq{T}}\big|\mc Z^n_{t} \bigcdot (h_z,0)^\dagger\big|^2+ \big|\mc Z^n_{t} \bigcdot (0,h_z)^\dagger \big|^2\right]\\
&\lesssim \sum_{z\in{\mathbb{Z}}} \frac{1}{(1+(2\pi z)^2)^{(k-2)}}
\end{split}
\end{equation*}
and last sum is finite as long as $k>5/2$. The second item follows exactly by the same argument.
\end{proof}

By Chebychev's inequality, condition (A) follows from (1) in Corollary \ref{eq:corPat} . It remains now to prove (B) but since (2) of Corollary \ref{eq:corPat}  holds, (B) follows from the next lemma.

\begin{lemma}
For every $j \ge 1$ and every $\epsilon>{0}$,
\begin{equation*}
\begin{split}
\lim_{\delta\rightarrow{0}}\limsup_{n\rightarrow{+\infty}} \; \mathbb{P}\Big[\sup_{\substack{|s-t|<\delta\\0\leq{s,t}\leq{T}}}\quad
\sum_{|z|\leq{j}}&\Big\{\big(\mc Z_{t}^n \bigcdot (h_z, 0)^\dagger -\mc Z_{s}^n \bigcdot (h_{z}, 0)^\dagger\big)^{2}\\&+\big(\mc Z_{t}^n \bigcdot ( 0,h_z)^\dagger -\mc Z_{s}^n \bigcdot (0, h_z)^\dagger\big)^{2}\Big\}\gamma_{z}^{-k}>\epsilon\Big]=0.
\end{split}
\end{equation*}
\end{lemma}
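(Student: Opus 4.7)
The plan is to use the Dynkin decomposition to reduce the claim to mode-by-mode estimates and, for each mode, to split the increment into a martingale piece and a drift piece. For $|z|\le j$ and $\vec f\in\{(h_z,0)^\dagger,(0,h_z)^\dagger\}$, rewrite \eqref{eq:Dynkin_eta} as
\[
\mc Z^n_t\bigcdot \vec f - \mc Z^n_s\bigcdot \vec f = \bigl(\mc N^n_t\bigcdot \vec f - \mc N^n_s\bigcdot \vec f\bigr) + \int_s^t \Phi^{n,\vec f}(u,\eta(u\theta(n)))\, du,
\]
where $\Phi^{n,\vec f}$ denotes the explicit integrand $(\partial_u+\theta(n)\mc L)(\mc Z^n_\cdot\bigcdot \vec f)$ of \eqref{eq:mart_dec_chi1}--\eqref{eq:mart_dec_chi3}. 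Since $\{z\in\ZZ:|z|\le j\}$ is finite, the inequality $(a+b)^2\le 2a^2+2b^2$, Chebychev's inequality and a union bound reduce the problem to showing, for each such fixed $z$ and each of the two choices of $\vec f$, that the $\LL^2$-norm of the uniform modulus of continuity of both the drift integral and the martingale vanishes as $\delta\to 0$ uniformly in $n$.

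For the drift piece, Cauchy-Schwarz yields pathwise $\sup_{|t-s|<\delta}\bigl|\int_s^t\Phi^{n,\vec f}(u)du\bigr|^2\le \delta\int_0^T \Phi^{n,\vec f}(u)^2\, du$. Taking expectation and using stationarity of $\mu_{\bar\beta,\bar\lambda}$, this is bounded above by $\delta T\,\bigl\langle \Phi^{n,\vec f}(0,\cdot)^2\bigr\rangle$, which is finite and uniformly bounded in $n$ in the regime $a\le \inf(2,\kappa+1)$: this is exactly what was already verified in Section~\ref{sec:char:X-field} when controlling \eqref{eq:mart_dec_chi1}, \eqref{eq:cacaboudin} and \eqref{eq:termSBE} via the static estimate \eqref{eq:CSs}, the $z$-dependence entering only through $\|\nabla h_z\|_0$ and $\|\Delta h_z\|_0$.

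For the martingale piece, partition $[0,T]$ into $\lceil T/\delta\rceil$ intervals of length $\delta$. Any pair with $|t-s|<\delta$ lies in the union of at most two consecutive intervals, so
\[
\sup_{|t-s|<\delta}\bigl|\mc N^n_t\bigcdot \vec f - \mc N^n_s\bigcdot \vec f\bigr| \;\le\; 2\,\max_k \sup_{t\in[k\delta,(k+1)\delta]}\bigl|\mc N^n_t\bigcdot \vec f - \mc N^n_{k\delta}\bigcdot \vec f\bigr|.
\]
Burkholder-Davis-Gundy together with the explicit quadratic variation \eqref{eq:qv_mart_chi_field_2} and stationarity yields the uniform-in-$n$ bound $\mathbb{E}[(\mc N^n_t\bigcdot \vec f - \mc N^n_s\bigcdot \vec f)^4]\le C(z)(t-s)^2$. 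Doob's $\LL^4$ inequality on each slab, a union bound over $k$, and Jensen then give
\[
\mathbb{E}\Bigl[\sup_{|t-s|<\delta}\bigl|\mc N^n_t\bigcdot \vec f - \mc N^n_s\bigcdot \vec f\bigr|^2\Bigr] \le C'(z)\sqrt{\delta},
\]
which tends to $0$ as $\delta\to 0$. Summing the resulting modulus bounds over $|z|\le j$ with weights $\gamma_z^{-k}$ and invoking Chebychev finishes the proof.

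The main obstacle is the uniform fourth-moment bound $\mathbb{E}[(\mc N^n_t\bigcdot \vec f - \mc N^n_s\bigcdot \vec f)^4]\le C(z)(t-s)^2$: an $\LL^2$ bound alone only controls the sum $\sum_k$ by a constant independent of $\delta$. The fourth-moment bound follows from BDG combined with the fact that the integrand of \eqref{eq:qv_mart_chi_field_2} is a local function whose $\LL^2$ moments under $\mu_{\bar\beta,\bar\lambda}$ are finite and uniformly bounded, thanks to the exponential moments of $\xi_x$ and $\eta_x$ built into the invariant measure \eqref{eq:inv_measure}. Once this is granted the rest of the argument is a standard Kolmogorov-Chentsov estimate combined with the dyadic-partition trick, and the finiteness of the sum $\sum_{|z|\le j}\gamma_z^{-k}$ is immediate.
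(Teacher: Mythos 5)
Your decomposition of $\mc Z^n_t\bigcdot\vec f-\mc Z^n_s\bigcdot\vec f$ into a martingale increment plus a drift integral, and the mode-by-mode reduction over the finitely many $|z|\le j$, is exactly how the paper proceeds, and your treatment of the drift piece (pathwise Cauchy--Schwarz giving the factor $\delta$, then the static bound \eqref{eq:CSs} under stationarity) matches the paper's second lemma. Where you genuinely diverge is the martingale piece: the paper controls $\omega_\delta(\mc N^n\bigcdot\vec f)$ via the modified modulus of continuity, an explicit bound on the maximal jump (Lemma \ref{lem:max_jumps}, using the conservation of $\sum_x\xi_x$ and $\sum_x V_b(\eta_x)$), and Aldous' criterium with the Optional Stopping Theorem; you instead run a Kolmogorov--Chentsov argument on slabs of length $\delta$ from a fourth-moment increment bound. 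Your route is quantitatively stronger (it gives an explicit $\sqrt\delta$ rate for the $\LL^2$-modulus), while the paper's route only needs second moments of the quadratic variation.

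The one place where your argument as written has a real gap is the assertion that BDG ``together with the explicit quadratic variation \eqref{eq:qv_mart_chi_field_2}'' yields $\mathbb{E}[(\mc N^n_t\bigcdot\vec f-\mc N^n_s\bigcdot\vec f)^4]\le C(z)(t-s)^2$. The martingale $\mc N^n\bigcdot\vec f$ is a pure-jump martingale, and the $\LL^4$ BDG inequality is stated in terms of the optional bracket $[\mc N^n\bigcdot\vec f]$ (the sum of squared jumps), whereas \eqref{eq:qv_mart_chi_field_2} is the predictable bracket $\langle\mc N^n\bigcdot\vec f\rangle$. Passing from $\mathbb{E}[[\,\cdot\,]^2]$ to $\mathbb{E}[\langle\cdot\rangle^2]$ costs an additional term controlled by the fourth powers of the jumps (equivalently, by $\sup_t|\Delta\mc N^n_t\bigcdot\vec f|$), which is precisely the quantity the paper estimates in Lemma \ref{lem:max_jumps}: each jump is of size $O(n^{-3/2})$ times a difference of occupation variables, and the jump rate is $O(n\,\theta(n))$, so this extra term is $O((t-s)n^{a-5})$ and harmless, but it must be accounted for. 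Once you insert that maximal-jump (or fourth-power-of-jumps) estimate, your fourth-moment bound holds, the Doob--$\LL^4$/union-bound step is correct (up to the constant $2$, which should be $3$ when $s$ and $t$ straddle two consecutive slabs), and the proof closes. A minor additional caveat: $\eta_x$ does not have all exponential moments under \eqref{eq:inv_measure} (the right tail of its density decays only like $e^{-\bar\lambda\eta_x}$), but all polynomial moments of $\eta_x$ and $\xi_x$ are finite, which is all your argument actually uses.
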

To prove last lemma it is enough to show, for every $z\in{\mathbb{Z}}$, $\epsilon>0$ and for $\vec f \in\{(h_z,0)^\dagger, (0,h_z)^\dagger\}$, that
\begin{equation*}
\lim_{\delta\rightarrow{0}}\limsup_{n\rightarrow{+\infty}}\;  \mathbb{P}\Bigg[\sup_{\substack{|s-t|<\delta\\0\leq{s,t}\leq{T}}}\quad
\big(\mc Z^n_t  \bigcdot \vec f -\mc Z^n_{s} \bigcdot \vec f \big)^{2}>\epsilon\Bigg]=0.
\end{equation*}
This is a consequence of the next two lemmas.

\begin{lemma}
\label{lem:max_jumps}
Let $\vec f \in\mc D(\bb T, \RR^2)$. For every $\epsilon>0$
\begin{equation*}
\lim_{\delta\rightarrow{0}}\limsup_{n\rightarrow{+\infty}} \; \mathbb{P} \Bigg[\sup_{\substack{|s-t|<\delta\\0\leq{s,t}\leq{T}}}\quad
\big|\mc N_{t}^{n}  \bigcdot \vec f -\mc N_{s}^{n} \bigcdot \vec f \big|>\epsilon\Bigg]=0.
\end{equation*}
\end{lemma}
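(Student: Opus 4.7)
The strategy is to apply Aldous's tightness criterion to the sequence of square-integrable martingales $\{\mc N_\cdot^n \bigcdot \vec f\, \}_n$. Since the quadratic variation \eqref{eq:qv_mart_chi_field_2} is a time integral of a non-negative $\mc F_s$-adapted integrand, its trajectories are continuous, which is the regularity required by the criterion. Aldous's criterion then reduces the modulus of continuity bound in the statement to a stopping-time estimate: for every $\epsilon>0$,
\begin{equation*}
\lim_{\delta \to 0}\limsup_{n\to \infty}\sup_{\substack{\tau \in \mc T_T \\ 0 \le \gamma \le \delta}} \PP\big(\big|\mc N_{\tau+\gamma}^n \bigcdot \vec f - \mc N_\tau^n \bigcdot \vec f\, \big| > \epsilon\big) = 0,
\end{equation*}
where $\mc T_T$ denotes the set of $\{\mc F_t\}$-stopping times bounded by $T$. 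To pass from the stopping-time bound to the uniform modulus $\omega_\delta$ appearing in the lemma (and not only to the Skorokhod modulus), one also uses that the maximal jump of $\mc N^n \bigcdot \vec f$ vanishes in $L^1$ as $n\to\infty$, a direct consequence of the $n^{-3/2}$ normalization and the finite second moments of $\xi_x, \eta_x$ under $\mu_{\bar\beta, \bar\lambda}$ (this is precisely what is used in the proof of Corollary \ref{martconv}).

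By Chebyshev's inequality and optional sampling applied to the martingale $(\mc N^n\bigcdot \vec f\, )^2 - \langle \mc N^n\bigcdot \vec f\, \rangle$, the probability above is at most
\begin{equation*}
\epsilon^{-2}\, \EE\big[\langle \mc N^n\bigcdot \vec f\, \rangle_{\tau+\gamma} - \langle \mc N^n\bigcdot \vec f\, \rangle_\tau\big].
\end{equation*}
Using \eqref{eq:qv_mart_chi_field_2}, this increment takes the form $\int_\tau^{\tau+\gamma} \Psi_s^n\, ds$ where $\Psi_s^n \geq 0$ is a functional of $\eta(\theta(n) s)$ and of the test function evaluated at time~$s$. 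Fubini yields $\EE[\int_\tau^{\tau+\gamma} \Psi_s^n ds] = \int_0^\gamma \EE[\Psi_{\tau+u}^n]\, du$. Now the process is Markov and started from the invariant measure $\mu_{\bar\beta, \bar\lambda}$, so by the strong Markov property $\eta(\theta(n)(\tau+u))$ has marginal law $\mu_{\bar\beta, \bar\lambda}$ for every $u \ge 0$ and every stopping time $\tau$. Combined with translation invariance of the sums over $\TT_n$ and the finite second moments of $\xi_x$, $\eta_x$, $\xi_x\eta_x$, this gives
\begin{equation*}
\EE[\Psi_{\tau+u}^n] \le C(\vec f\, ) \, \frac{\theta(n)}{n^2} \le C(\vec f\, ),
\end{equation*}
in the regime $a \le 2$, hence $\EE[\langle \mc N^n\bigcdot \vec f\, \rangle_{\tau+\gamma} - \langle \mc N^n\bigcdot \vec f\, \rangle_\tau] \le C(\vec f\, )\delta$. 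The stopping-time probability above is therefore bounded by $C(\vec f\, )\delta/\epsilon^2$, which vanishes as $\delta\to 0$.

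The main technical point is the correct use of the strong Markov property at the random time $\tau+u$: it is this that ensures the bound on the quadratic-variation integrand remains linear in $\delta$, uniformly in $n$ and in the choice of stopping time. Once this is in place, the rest is routine bookkeeping of constants together with Chebyshev; no refinement of the rough Cauchy-Schwarz bound \eqref{eq:CSs} is needed here, since the prefactor $\theta(n)/n^2 \le 1$ is already harmless in the regime considered.
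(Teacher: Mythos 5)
Your overall strategy is the same as the paper's: decompose the uniform modulus via $\omega_\delta(\mc N^n\bigcdot \vec f\,)\le 2\omega'_\delta(\mc N^n\bigcdot \vec f\,)+\sup_t|\mc N^n_t\bigcdot \vec f-\mc N^n_{t-}\bigcdot \vec f\,|$, handle the modified modulus by Aldous' criterion together with Chebyshev, optional stopping and the explicit quadratic variation \eqref{eq:qv_mart_chi_field_2}, and use stationarity to get an increment bound of order $C(\vec f\,)\delta$ uniformly over stopping times. That part of your argument is correct and matches the paper.

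There is, however, a genuine gap in your treatment of the maximal-jump term. You dismiss it as ``a direct consequence of the $n^{-3/2}$ normalization and the finite second moments of $\xi_x,\eta_x$ under $\mu_{\bar\beta,\bar\lambda}$,'' citing Corollary \ref{martconv}; but in the paper that corollary defers precisely to the computations in this lemma, so the reference is circular, and the claim as stated does not go through. The difficulty is the supremum over $t\in[0,T]$: a single jump contributes $O(n^{-3/2})\,|\xi_{x_t}-\xi_{x_t+1}|$, but over $[0,T]$ the accelerated process performs order $\theta(n)\,n=n^{a+1}$ jumps, so a union bound using only fixed-time second moments yields $n^{a+1}\cdot n^{-3}=n^{a-2}$, which does \emph{not} vanish in the relevant case $a=2$. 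The paper's proof avoids this by exploiting the two conservation laws: the sup over time of $|\xi_{x_t}-\xi_{x_t+1}|$ is bounded by the conserved total volume $\sum_x\xi_x(0)$, whose second moment is $O(n^2)$, giving the $O(n^{-1})$ bound; for the $\eta$-field one uses $|u|\lesssim 1+V_b(u)$ and conservation of the total energy $\sum_x V_b(\eta_x)$. Without invoking these conservation laws (or some other uniform-in-time control), your maximal-jump step does not close, and hence neither does the passage from the Skorokhod modulus $\omega'_\delta$ to the uniform modulus $\omega_\delta$ required by the statement.
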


\begin{proof}
Denote by $\omega'_{\delta}(\mc N^n \bigcdot \vec f)$ the modified modulus of continuity defined by
\begin{equation*}
\omega'_{\delta}(\mc N^n  \bigcdot \vec f )=\inf_{\substack{\{t_{i}\}}}\quad\max_{\substack{0\leq{i}\leq{r}}}\quad\sup_{\substack{t_{i}\leq{s}<{t}\leq{t_{i+1}}}}\big|\mc N_{t}^{n}
 \bigcdot \vec f -\mc N_{s}^{n} \bigcdot \vec f \big|,
\end{equation*}
where the infimum is taken over all partitions of $[0,T]$ such that $$0=t_{0}<t_{1}<...<t_{r}=T$$ with $t_{i+1}-t_{i}>\delta$ for
$0\leq{i}\leq{r}$.
Since
\begin{equation*}
\omega_{\delta}(\mc N^{n} \bigcdot \vec f)\leq{2\omega'_{\delta}(\mc N^n \bigcdot \vec f)+\sup_{\substack{t}}\big|\mc N_{t}^{n} \bigcdot \vec f-\mc N_{t {-}}^{n} \bigcdot \vec f\big|}
\end{equation*}
it is sufficient to control the two terms on the r.h.s.~of the previous inequality separately. We start with the first one. Observe that
\begin{equation*}
\sup_{\substack{t}}\Big|\mc N_{t}^{n} \bigcdot \vec f -\mc N_{t {-}}^{n} \bigcdot \vec f \Big|=\sup_{\substack{t}}\Big|
\mc Z_t^n \bigcdot \vec f -\mc Z_{t-}^{n} \bigcdot \vec f \Big|
\end{equation*}
and that  for any $t\in[0,T]$ it holds that
\begin{equation}
\label{eq:eqcoco}
\Big|\mc Z_{t}^{n} \bigcdot \vec f -\mc Z_{t {-}}^{n} \bigcdot \vec f\Big|\leq \big|\mc Y_t^n(f^1)-\mc Y_{t-}^n(f^1)\big|+ \big|\mc V_t^n(f^2)-\mc V_{t-}^n(f^2)\big|.
\end{equation}
Let $x_t$ be the site where the jump occurred at time $t$ for the speeded by $\theta (n)$ process. Now we calculate the $\bb L^2$-norm of each term on the RHS of \eqref{eq:eqcoco} separately. Note that
\begin{equation*}
\begin{split}
&\bb  E\left[ \sup_{t \leq T} \big(\mc Y_t^n(f^1)-\mc Y_{t-}^n(f^1)\big)^2 \right]\\
&=\bb  E\left[ \sup_{t \leq T} \Big(\tfrac{1}{\sqrt n}\Big(T^+_{c_nt}f^1(\tfrac {x_t}{n})-T^+_{c_nt}f^1(\tfrac {x_{t}+1}{n})\Big) \, \Big(\xi_{x_t}(t\theta(n))-\xi_{x_t+1}(t\theta(n))\Big)\Big)^2 \right]\\
&\lesssim \frac{1}{n^3}\bb  E\left[ \sup_{t \leq T} \Big(\xi_{x_t}(t\theta(n))-\xi_{x_t+1}(t\theta(n))\Big)^2 \right]\\
&\lesssim \frac{1}{n^3}\bb  E\left[ \sup_{t \leq T} \Big(\sum_{x\in\bb T_n}(\xi_{x}(t\theta(n))-\xi_{x+1}(t\theta(n)))\Big)^2 \right]\\
&\lesssim \frac{1}{n^3}\bb  E\left[ \sup_{t \leq T} \Big(\sum_{x\in\bb T_n}\xi_{x}(t\theta(n))\Big)^2 \right]\\
&=\frac{1}{n^3}\bb  E\Big[ \Big(\sum_{x\in\bb T_n}\xi_{x}(0)\Big)^2 \Big]
\end{split}
\end{equation*}
where the last equality follows from the conservation of  $\sum_{x\in\bb T_n}\xi_x$.  A simple computation shows that the last term above is of order $\mc O(n^{-1})$. 
Finally,
\begin{equation*}
\begin{split}
&\bb  E\big[ \sup_{t \leq T}\big(\mc V_t^n(f^2)-\mc V_{t-}^n(f^2)\big)^2 \big]\\&=\bb  E\Big[ \sup_{t \leq T} \Big(\tfrac{1}{\sqrt n}\Big(f^2(\tfrac {x_t}{n})-f^2(\tfrac {x_{t}+1}{n})\Big)\Big(\eta_{x_t}(t\theta(n))-\eta_{x_t+1}(t\theta(n))\Big)\Big)^2 \Big]\\&\lesssim \frac{1}{n^3}\bb  E\Big[ \sup_{t \leq T} \Big(|\eta_{x_t}(t\theta(n))|+|\eta_{x_t+1}(s\theta(n))|\Big)^2 \Big]\\
&\lesssim \frac{1}{n^3}\bb  E\Big[ \sup_{t \leq T} \Big(\sum_{x\in\bb T_n}|\eta_{x}(t\theta(n))|+|\eta_{x+1}(t\theta(n))|\Big)^2 \Big]
\\
&\lesssim \frac{1}{n^3}\bb  E\Big[ \sup_{t \leq T} \Big(\sum_{x\in\bb T_n}|\eta_{x}(t\theta(n))|\Big)^2 \Big].
\end{split}
\end{equation*}
Since for any $b>0$ and for any $u\in\mathbb R$, it holds that $|u|\lesssim 1+V_b(u)$, last expectation can be bounded from above by 
\begin{equation*}
 \frac{1}{n^3} \bb  E\Big[ \sup_{t \leq T} \Big(\sum_{x\in\bb T_n}V_b(\eta_{x}(t\theta(n))\Big)^2 \Big]
\end{equation*}
plus a term that vanishes as $n\to +\infty$.
From the   conservation of  $\sum_{x\in\bb T_n}V_b(\eta_x)$, last display is equal to \begin{equation*}
 \frac{1}{n^3} \bb  E\Big[  \Big(\sum_{x\in\bb T_n}V_b(\eta_{x}(0)\Big)^2 \Big]
\end{equation*}
which is of order  $\mc O(n^{-1})$.
In order to finish the proof it is enough to  show that
\begin{equation*}
\lim_{\delta\rightarrow{0}}\limsup_{n\rightarrow{+\infty}}\; \mathbb{P} \Big[\omega'_{\delta}(\mc N^{n}  \bigcdot \vec f)>\epsilon\Big]=0
\end{equation*}
for every $\epsilon>0$. By Aldous' criterium, see, for example, Proposition 4.1.6 of \cite{K.L.},  it is enough to show that:
\begin{equation*}
\lim_{\delta\rightarrow{0}}\limsup_{n\rightarrow{+\infty}}\sup_{\substack{\tau\in{\mathfrak
{T}_{\tau}}\\0\leq{\theta}\leq{\delta}}}\; \mathbb{P} \Big[|\mc N_{\tau+\theta}^{n} \bigcdot \vec f -\mc N_{\tau}^{n} \bigcdot \vec f|>\epsilon\Big]=0
\end{equation*}
for every $\epsilon>0$. Here $\mathfrak {T}_{\tau}$ denotes the family of all stopping times bounded by $T$ with respect to the canonical
filtration. By Chebychev's inequality, the Optional Stopping Theorem and \eqref{eq:qv_mart_chi_field_2}
 the result follows.
\end{proof}
\begin{lemma}
Let $\vec f \in\{(h_z,0)^\dagger, (0,h_z)^\dagger\}$. For every $\epsilon>0$
\begin{equation*}
\lim_{\delta\rightarrow{0}}\limsup_{n\rightarrow{+\infty}}
\; \mathbb{P}\left[\sup_{\substack{|s-t|<\delta\\0\leq{s,t}\leq{T}}} \Big|
\int_{s}^{t}(\partial_r+\theta(n)\mc L  ) \, \mathcal{Z}^{n}_r  \vec f \, dr\Big|>\epsilon \right]=0.
\end{equation*}
\end{lemma}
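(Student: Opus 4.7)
The plan is to reduce the problem to a second-moment estimate via Aldous' tightness criterion, then exploit the explicit decomposition of the integrand already established in Section \ref{sec:sketch}. Write $I^n_t := \int_0^t (\partial_r + \theta(n)\mathcal{L}) \mathcal{Z}^n_r \bigcdot \vec f \, dr$; this is a (continuous) process of absolutely bounded variation. By Aldous' criterion (Proposition 4.1.6 of \cite{K.L.}), the stated convergence will follow once we show that for every $\epsilon > 0$,
\[
\lim_{\delta\to 0}\limsup_{n\to\infty}\sup_{\tau \in \mathfrak{T}_T,\, 0 \le \theta \le \delta}\mathbb{P}\big[|I^n_{\tau+\theta}-I^n_\tau|>\epsilon\big]=0,
\]
and by Chebyshev's inequality it suffices to prove the uniform second-moment bound $\mathbb{E}[(I^n_{\tau+\theta}-I^n_\tau)^2] \le C(z)\,\delta^2$.

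Second, I would combine the strong Markov property applied at $\tau$ with the stationarity of $\mu_{\bar\beta,\bar\lambda}$ and the Cauchy--Schwarz bound \eqref{eq:CSs}. Denoting $G_r := (\partial_r + \theta(n)\mathcal{L})\mathcal{Z}^n_r \bigcdot \vec f$, conditioning on $\mathcal F_\tau$ and shifting time yield
\[
\mathbb{E}\big[(I^n_{\tau+\theta}-I^n_\tau)^2\big] \; \le \; \theta \int_0^\theta \langle G_r^2 \rangle \, dr \;\le\; \delta^2\, \sup_{r\in[0,T]} \langle G_r^2 \rangle.
\]
It is therefore enough to show that $\sup_{r\in[0,T]}\langle G_r^2\rangle \le C(z)$ uniformly in $n$, since then $\mathbb{E}[(I^n_{\tau+\theta}-I^n_\tau)^2]\le C(z)\delta^2\to 0$ as $\delta\to 0$.

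Third, I would bound $\langle G_r^2 \rangle$ by using the decomposition \eqref{eq:mart_dec_chi1}--\eqref{eq:mart_dec_chi3} (together with the rewriting in \eqref{eq:cacaboudin}) of the integrand and estimating each of the three contributions separately for $\vec f\in\{(h_z,0)^\dagger,(0,h_z)^\dagger\}$. Since the shift operators $T^\pm_{c_n r}$ are isometries of $\mathbb L^2(\mathbb T)$ and commute with derivatives, all Riemann-sum estimates are uniform in $r$. Using the standard fact that the static variance of $n^{-1/2}\sum_x g(x/n) \bar\xi_x$ (resp.\ $\bar\eta_x$) is $\lesssim \|g\|_0^2$, we get: the diffusive term \eqref{eq:mart_dec_chi1} has variance $\lesssim (\theta(n)/n^2)^2 \|\Delta_n h_z\|_0^2 \lesssim z^4$, bounded since $a\le 2$; the linear drift term in \eqref{eq:cacaboudin} has variance $\lesssim (\theta(n)\alpha_n/n)^2 \|\nabla_n h_z\|_0^2 \lesssim z^2$, bounded since $a\le \kappa+1$; and the quadratic term \eqref{eq:mart_dec_chi3} has variance $\lesssim (\theta(n)\alpha_n)^2 n^{-2}\|\nabla_n h_z\|_0^2 \lesssim z^2$ for the same reason. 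The lower-order residual term from \eqref{eq:cacaboudin} with the discrete Laplacian is handled analogously and is even smaller.

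Putting everything together yields $\sup_r \langle G_r^2\rangle \le C(z)$, and hence the required $O(\delta^2)$ bound on the second moment. There is no serious obstacle here: the only point of care is ensuring that the standing assumption $a\le \inf(2,\kappa+1)$ of Section \ref{sec:tightness_chi} is used precisely where each of the three terms is controlled, and that the travelling-frame shift $T^+_{c_n r}$ does not spoil uniformity in $r$ (which it does not, by translation invariance of the $\mathbb L^2$ norms on $\mathbb T$).
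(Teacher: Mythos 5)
Your proof is correct in substance and rests on the same core computation as the paper's (one-line) proof: Chebyshev's inequality plus a uniform-in-$n$ bound on the static second moment of the integrand $G_r=(\partial_r+\theta(n)\mc L)\,\mc Z^n_r\bigcdot\vec f$, obtained term by term from \eqref{eq:mart_dec_chi1}--\eqref{eq:mart_dec_chi3} under the standing assumption $a\le\inf(2,\kappa+1)$; your estimates $\sup_r\langle G_r^2\rangle\lesssim z^4+z^2$ are exactly the ones appearing at the end of the proof of Lemma \ref{th:lemmatight}. Where you diverge is the reduction step: the detour through Aldous' criterion and stopping times is unnecessary here, because $I^n_t=\int_0^tG_r\,dr$ has absolutely continuous paths, so the modulus of continuity is controlled pathwise by Cauchy--Schwarz,
\[
\sup_{\substack{|s-t|<\delta\\0\le s,t\le T}}\Big|\int_s^tG_r\,dr\Big|^2\le\delta\int_0^TG_r^2\,dr,
\]
and Markov's inequality bounds the probability in question by $\delta\,\epsilon^{-2}\int_0^T\langle G_r^2\rangle\,dr\lesssim\delta\, T\,\epsilon^{-2}\sup_r\langle G_r^2\rangle$, which is the ``simple computation'' the paper alludes to. I also flag a genuine (though harmless) inaccuracy in your stopping-time step: the bound $\EE[(I^n_{\tau+\theta}-I^n_\tau)^2]\le\theta\int_0^\theta\langle G_r^2\rangle\,dr$ is not justified by ``strong Markov plus stationarity,'' because the law of a stationary Markov process evaluated at a stopping time $\tau$ need not be the invariant measure, so you cannot shift time and then invoke \eqref{eq:CSs}. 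The correct version is $\EE[(I^n_{\tau+\theta}-I^n_\tau)^2]\le\theta\,\EE\big[\int_0^{T+\delta}\mathbf{1}_{[\tau,\tau+\theta]}(r)\,G_r^2\,dr\big]\le\delta\int_0^{T+\delta}\langle G_r^2\rangle\,dr$, where stationarity is used only at deterministic times $r$; this yields $O(\delta)$ rather than your claimed $O(\delta^2)$, which is still entirely sufficient for the conclusion.
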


\begin{proof}
By using the explicit expression for $(\partial_r+\theta(n)\mc L  ) \mathcal{Z}^{n}_r \bigcdot \vec f$, Chebychev's inequality and some simple computations  we conclude the proof.
\end{proof}
\begin{remark}
We observe that in the proof above it was not necessary to consider $\vec f \in\{(h_z,0)^\dagger, (0,h_z)^\dagger \}$. The proof works out for any $\vec f \in\mc D(\bb T)$.
\end{remark}

\subsection{Tightness for the $\mc Y$ field}
\label {sec:tightness_xi}

We note that the proof of tightness for the sequence $\{\mathcal Y_t^n\; ;\;  t \in [0,T]\}_{n \in \bb N}$ is completely similar to the one presented in the previous subsection. Observe that the field $\mc Y$ can be recovered from the field $\mc X$ by simply taking the test function $g=0$.  The only  regime which requires some care is when $a\leq \inf( \tfrac{4}{3}(\kappa+1),2)$ and $\kappa\leq 1$, so that we need to look carefully at  \eqref{eq:mart_dec_chi3}. See the comments at the end of the proof of Lemma \ref{th:lemmatight}. 
Now we treat this term. For that purpose, fix $\epsilon>0$ and note that, from \eqref{eq:BG_bound} we have that
\begin{multline*}
\bb E_n\Big[\sup_{0\leq t\leq T} \Big(\int_0^t \frac{\theta(n)\alpha_n}{n^{3/2}}\sum_{x\in\bb T_n} \big(\nabla_n T^+_{c_n s} \, h_z\big) (\tfrac xn) \, \bar{\xi}_x(sn^2)\bar{\xi}_{x+1}(sn^2)ds\Big)^2\Big]\\
\leq \frac{\theta(n)^{3/2}\alpha_n^2}{n^{2}}\int_{0}^T  \|\nabla_n T^+_{c_n s} \, h_z\|_{2,n}^2ds.
\end{multline*}
For the range of the parameters that we are looking at, last expression is bounded from above by $C(T)\int_{0}^T  \|\nabla_n T^+_{c_n s} \, h_z\|_{2,n}^2ds.$
The rest of the proof of tightness follows exactly from the same computations as presented above for the $\mc X$ field.

\section*{Acknowledgements}
This work has been supported by the projects EDNHS ANR-14- CE25-0011, LSD ANR-15-CE40-0020-01 of the French National Research Agency (ANR). This project has received funding from the European Research Council (ERC) under  the European Union's Horizon 2020 research and innovative programme (grant agreement   No 715734). The work of M.S. was also supported  by the Labex CEMPI (ANR-11-LABX-0007-01). We thank G\"unter Sch\"utz for his interest in this work.

\bibliographystyle{plain}

\end{document}